\documentclass[a4paper, 12pt]{article}
\usepackage{etoolbox, amsmath, amsfonts, amsthm, graphicx, bbm, natbib, setspace, enumitem, titling, amssymb, tikz, pgfplots, natbib, mathrsfs, rotating, booktabs, caption} 
\usepackage[flushleft]{threeparttable} 
\usepgfplotslibrary{fillbetween}
\AtBeginDocument{
	\DeclareSymbolFont{AMSb}{U}{msb}{m}{n}
	\DeclareSymbolFontAlphabet{\mathbb}{AMSb}}
\usepackage[english]{babel} 
\tikzset{>=latex}
\usepackage[colorlinks, allcolors = dark-blue]{hyperref}
% add 'backref=page' if want to include page numbers in the references
\usepackage{cleveref}

\usepackage[includeheadfoot]{geometry}
\geometry{verbose, tmargin=2.5cm,bmargin=2.5cm,lmargin=2cm,rmargin=2cm}

% center the caption
\usepackage[justification=centering, margin=2cm]{caption}

\bibliographystyle{econometrica}

\theoremstyle{plain}

\newtheorem{assumption}{Assumption}%[section]
\newtheorem{theorem}{Theorem}%[section]
%[section] 
\newtheorem{lemma}{Lemma}%[section]
\newtheorem{corollary}{Corollary}%[section]

\theoremstyle{definition}
\newtheorem{definition}{Definition}%[section] 
%[section]
%[section]
%[section]
\newtheorem{example}{Example}
\newtheorem{remark}{Remark}%[section]
\newtheorem*{notation}{Notation}%[section]

\theoremstyle{remark}

% colors
\definecolor{light-blue}{rgb}{0.8,0.85,1}
\definecolor{dark-blue}{rgb}{0.0, 0.0, 0.55}
\definecolor{bu-red}{rgb}{0.8, 0, 0}

% proof

\newcommand\placeqed{\nobreak\enspace$\blacksquare$}

\singlespacing
%\onehalfspacing
%\doublespacing

\begin{document}	
\title{Identification and Inference for Welfare Gains \\ without Unconfoundedness \thanks{I am deeply indebted to my main adviser Hiroaki Kaido for his unparalleled guidance and constant encouragement throughout this project. I am grateful to Iv\'an Fern\'andez-Val and Jean-Jacques Forneron for their invaluable suggestions and continuous support. For their helpful comments and discussions, I thank Karun Adusumilli, Fatima Aqeel, Susan Athey, Jesse Bruhn, Alessandro Casini, Mingli Chen, Shuowen Chen, David Childers, Taosong Deng, Thea How Choon, Shakeeb Khan, Michael Leung, Arthur Lewbel, Jessie Li, Jia Li, Enkhjargal Lkhagvajav, Siyi Luo, Ching-to Albert Ma, Francesca Molinari, Guillaume Pouliot, Patrick Power, Anlong Qin, Zhongjun Qu, Enrique Sentana, Vasilis Syrgkanis, Purevdorj Tuvaandorj, Guang Zhang, and seminar participants at Boston University, University of Mannheim, University of Surrey, University of Bristol, Institute of Social and Economic Research at Osaka University and participants of BU-BC Joint Workshop in Econometrics, the Econometric Society/Bocconi University World Congress 2020, and the European Economic Association Annual Congress 2020. All errors are my own and comments are welcome.}\\
}
\author{Undral Byambadalai \thanks{Stanford Graduate School of Business Email: \href{mailto:undralb@stanford.edu}{undralb@stanford.edu} 
} 
}
\date{\today}

\maketitle

\begin{abstract}
This paper studies identification and inference of the welfare gain that results from switching from one policy (such as the status quo policy) to another policy. The welfare gain is not point identified in general when data are obtained from an observational study or a randomized experiment with imperfect compliance. I characterize the sharp identified region of the welfare gain and obtain bounds under various assumptions on the unobservables with and without instrumental variables. Estimation and inference of the lower and upper bounds are conducted using orthogonalized moment conditions to deal with the presence of infinite-dimensional nuisance parameters. I illustrate the analysis by considering hypothetical policies of assigning individuals to job training programs using experimental data from the National Job Training Partnership Act Study. Monte Carlo simulations are conducted to assess the finite sample performance of the estimators.
\vspace{0in}\\	
%\noindent 
\vspace{0in}\\
\noindent \textbf{Keywords:} treatment assignment, observational data, partial identification, semiparametric inference\\
\vspace{0in}\\
\noindent\textbf{JEL Classifications:} C01, C13, C14\\	
\end{abstract}
\setcounter{page}{0}
\thispagestyle{empty}

\clearpage
\onehalfspacing
\section{Introduction}
The problem of choosing among alternative treatment assignment rules based on data is pervasive in economics and many other fields, including marketing and medicine. A treatment assignment rule is a mapping from individual characteristics to a treatment assignment. For instance, it can be a job training program eligibility criterion based on the applicants' years of education and annual earnings. Throughout the paper, I call the treatment assignment rule a policy, and the subject who decides the treatment assignment rule a policymaker. The policymaker can be an algorithm assigning targeted ads, a doctor deciding medical treatment, or a school principal deciding which students take classes in person during a pandemic. As individuals with different characteristics might respond differently to a given policy, policymakers aim to choose a policy that generates the highest overall outcome or welfare.

Most previous work on treatment assignment in econometrics focused on estimating the optimal policy using data from a randomized experiment. I contribute to this literature by focusing on the identification and inference of the \emph{welfare gain} using data from an observational study or a randomized experiment with imperfect compliance. The assumption called \emph{unconfoundedness} might fail to hold for such datasets.\footnote{ The assumption of unconfoundedness is also known as \emph{selection on observables} and assumes that treatment is independent of potential outcomes conditional on observable characteristics.} By relaxing the unconfoundedness assumption, my framework accommodates many interesting and empirically relevant cases, including the use of instrumental variables to identify the effect of a treatment. The advantage of focusing on welfare gain is to provide policymakers with the ability to be more transparent when choosing among alternative policies. Policymakers may want to know how much the welfare gain or loss is in addition to the welfare ranking of competing policies when they make their decisions. They might also need to report the welfare gain. 

When the unconfoundedness assumption does not hold, identification of the conditional average treatment effect (CATE) and hence identification of the welfare gain becomes a delicate matter. Without further assumptions on selection, one cannot uniquely identify the welfare gain. I take a partial identification approach whereby one obtains bounds on the parameter of interest with a minimal amount of assumptions on the unobservables and, later on, tighten these bounds by imposing additional assumptions with and without instrumental variables. The bounds, or sharp identified region, of the welfare gain can be characterized using tools from random set theory.\footnote{ The terms \emph{identified region}, \emph{identified set}, and \emph{bounds} are used interchangeably throughout the paper. Often the word \emph{sharp} is omitted, and unless explicitly described as non-sharp, identified region/identified set/bounds refer to sharp identified region/sharp identified set/sharp bounds.} The framework I use allows me to consider various assumptions that involve instrumental variables and shape restrictions on the unobservables. 

I show that the lower and upper bounds of the welfare gain can, in general, be written as functions of the conditional mean treatment responses and a propensity score. Hence, estimation and inference of these bounds can be thought of as a semiparametric estimation problem in which the conditional mean treatment responses and the propensity score are infinite-dimensional nuisance parameters. Bounds that do not rely on instruments admit regular and asymptotically normal estimators. I construct orthogonalized, or locally robust, moment condition by adding an adjustment term that accounts for the first-step estimation to the original moment condition, following \cite{chernozhukov2020locally} (CEINR, henceforth). This method leads to estimators that are first-order insensitive to estimation errors of the nuisance parameters.  I calculate the adjustment term using an approach proposed by \cite{ichimura2017influence}. The locally robust estimation is possible even with instrumental variables under an additional monotonicity assumption of instruments. The estimation strategy has at least two advantages. First, it allows for flexible estimation of nuisance parameters, including the possibility of using high-dimensional machine learning methods. Second, the calculation of confidence intervals for the bounds is straightforward because the asymptotic variance doesn't rely on the estimation of nuisance parameters.

I illustrate the analysis using experimental data from the National Job Training Partnership Act (JTPA) Study. This dataset has been analyzed extensively in economics to understand the effect of subsidized training on outcomes such as earnings. I consider two hypothetical examples. First, I compare two different treatment assignment policies that are functions of individuals' years of education. Second, I compare \cite{kitagawa2018should}'s estimated optimal policy with an alternative policy when the conditioning variables are individuals' years of education and pre-program annual earnings. The results from a Monte Carlo simulation suggest that the method works well in a finite sample. 

\subsection{Related Literature} \label{subsection:literature}
This paper is related to the literature on treatment assignment, sometimes also referred to as treatment choice, which has been growing in econometrics since the seminal work by  \cite{manski2004statistical}. Earlier work in this literature include \cite{dehejia2005program}, \cite{hirano2009asymptotics}, \cite{stoye2009minimax, stoye2012minimax}, \cite{chamberlain2011bayesian}, \cite{bhattacharya2012inferring}, \cite{tetenov2012statistical}, \cite{kasy2014using}, and \cite{armstrong2015inference}. 

In a recent work, \cite{kitagawa2018should} propose what they call an empirical welfare maximization method. This method selects a treatment rule that maximizes the sample analog of the average social welfare over a class of candidate treatment rules. Their method has been further studied and extended in different directions. \cite{kitagawa2019equality} study an alternative welfare criterion that concerns equality. \cite{mbakop2016model} propose what they  call a penalized welfare maximization, an alternative method to estimate optimal treatment rules. While \cite{andrews2019inference} consider inference for the estimated optimal rule, \cite{rai2018statistical} considers inference for the optimal rule itself. These papers and most of the earlier papers only apply to a setting in which the assumption of unconfoundedness holds.

In a dynamic setting, treatment assignment is studied by \cite{kock2017optimal}, \cite{kock2018functional}, \cite{adusumilli2019dynamically}, \cite{sakaguchi2019estimating}, and \cite{han2019optimal}, among others.

This paper contributes to the less explored case of using observational data to infer policy choice where the unconfoundedness assumption does not hold. Earlier work in the treatment choice literature with partial identification include \cite{stoye2007minimax} and \cite{stoye2009partial}. This paper is closely related to \cite{kasy2016partial}, but their main object of interest is the welfare ranking of policies rather than the magnitude of welfare gain that results from switching from one policy to another policy. It is also closely related to \cite{athey2020policy} as they are concerned with choosing treatment assignment policies using observational data. However, their approach is about estimating the optimal treatment rule by point identifying the causal effect using various assumptions. In a related work in statistics, \cite{cui2020semiparametric} propose a method to estimate optimal treatment rules using instrumental variables. More recently, \cite{assunccao2019optimal} work with a partially identified welfare criterion that also takes spillover effects into account to analyze deforestation regulations in Brazil.

The rest of the paper is structured as follows. In Section \ref{section:setup}, I set up the problem. Section \ref{section:identification} presents the identification results of the welfare gain. Section \ref{section:estimation_and_inference} discusses the estimation and inference of the bounds.  In Section \ref{section:empirical_application}, I illustrate the analysis using experimental data from the National JTPA study. Section \ref{section:simulation} summarizes the results from a Monte Carlo simulation. Finally, Section \ref{section:conclusion} concludes.  All proofs, some useful definitions and theorems from random set theory, additional tables and figures from the empirical application, and more details on the simulation study are collected in the Appendix. 

\begin{notation} Throughout the paper, for $d\in\mathbb N$, let $\mathbb R^d$ denote the Euclidean space and $\lVert\cdot\rVert$ denote the Euclidean norm. Let $\langle \cdot, \cdot \rangle$ denote the inner product in $\mathbb R^d$ and $E[\cdot]$ denote the expectation operator. The notation $\overset{p}{\longrightarrow}$  and $\overset{d}{\longrightarrow}$ denote convergence in probability and convergence in distribution, respectively. For a sequence of numbers $x_n$ and $y_n$, $x_n= o(y_n)$ and $x_n=O(y_n)$ mean, respectively, that $x_n/y_n \to 0$ and $x_n \leq C y_n$ for some constant $C$ as $n \to \infty$. For a sequence of random variables $X_n$ and $Y_n$, the notation $X_n =o_p(Y_n)$ and $X_n =O_p(Y_n)$ mean, respectively, that $X_n/Y_n \overset{p}{\longrightarrow} 0$ and $X_n/Y_n$ is bounded in probability. $\mathcal N(\mu, \Omega)$ denotes a normal distribution with mean $\mu$ and variance $\Omega$. $\Phi(\cdot)$ denotes the cumulative distribution function of the standard normal distribution.
\end{notation}

\section{Setup} \label{section:setup}
Let $(\Omega,\mathfrak{A})$ be a measurable space. Let $Y:\Omega \to \mathbb R$ denote an outcome variable, $D:\Omega \to \{0,1\}$ denote a binary treatment, and $X: \Omega \to \mathcal X \subset \mathbb R^{d_x}$ denote pretreatment covariates. For $d\in\{0,1\}$, let $Y_d:\Omega \to \mathbb R$ denote a potential outcome that would have been observed if the treatment status were $D=d$. For each individual, the researcher only observes either $Y_1$ or $Y_0$ depending on what treatment the individual received. Hence, the relationship between observed and potential outcomes is given by
\begin{equation}
Y=Y_1 \cdot D + Y_0 \cdot(1-D).
\end{equation}
Policy I consider is a treatment assignment rule based on observed characteristics of individuals. In other words, the policymaker assigns an individual with covariate $X$ to a binary treatment according to a treatment rule $\delta: \mathcal{X} \rightarrow \{0,1\}$.\footnote{I consider deterministic treatment rules in my framework. See Appendix \ref{appendix:moregeneralcase} for discussions on randomized treatment rules.}  The welfare criterion considered is population mean welfare. If the policymaker chooses policy $\delta$, the welfare is given by
\begin{align} \label{equation:welfare}
\begin{split}
u(\delta) &\equiv E\big[Y_1\cdot\delta(X) + Y_0 \cdot(1-\delta(X))\big] \\
& = E\big[E[Y_1|X]\cdot\delta(X) + E[Y_0|X]\cdot(1-\delta(X))\big].
\end{split}
\end{align}
The object of my interest is \emph{welfare gain} that results from switching from policy $\delta^*$ to another policy $\delta$ which is
\begin{equation}
u(\delta) - u(\delta^*) = E\big[\Delta(X)\cdot(\delta(X)- \delta^*(X))\big], \indent \Delta(X) \equiv E[Y_1-Y_0|X].
\end{equation}

\begin{remark}
I assume that individuals comply with the assignment. This can serve as a natural baseline for choosing between policies. 
\end{remark}

The observable variables in my model are $(Y,D,X)$ and I assume that the researcher knows the joint distribution of  $(Y,D,X)$ when I study identification. Later, in Section \ref{section:estimation_and_inference}, I assume availability of data -- size $n$ random sample from $(Y,D,X)$ -- to conduct inference on objects that depend on this joint distribution. The unobservables in my model are potential outcomes $(Y_1,Y_0)$. The conditional average treatment effect $\Delta(X)=E[Y_1-Y_0|X]$ and hence my object of interest welfare gain cannot be point identified in the absence of strong assumptions. One instance in which it can be point identified is when potential outcomes $(Y_1,Y_0)$ are independent of treatment $D$ conditional on $X$, i.e., 
\begin{align}
(Y_1,Y_0)\perp D|X.
\end{align}
This assumption is called \emph{unconfoundedness} and is a widely-used identifying assumption in causal inference. See \cite{imbens2015causal} Chapter 12 and 21 for more discussions on this assumption. Under unconfoundedness, the conditional average treatment effect can be identified as 
\begin{equation} \label{equation:cate_point_id}
E[Y_1-Y_0|X] = E[Y|D=1, X]- E[Y|D=0, X].
\end{equation}
Note that the right-hand side of \eqref{equation:cate_point_id} is identified since the researcher knows the joint distribution of $(Y,D,X)$. If data are obtained from a randomized experiment, the assumption holds since the treatment is randomly assigned. However, if data are obtained from an observational study, the assumption is not testable and often controversial. In the next section, I relax the assumption of unconfoundedness and explore what can be learned about my parameter of interest when different assumptions are imposed on the unobservables and when there are additional instrumental variables $Z\in \mathcal Z \subset \mathbb R^{d_z}$ to help identify the conditional average treatment effect.

The welfare gain is related to \cite{manski2004statistical}'s \emph{regret} which has been used by \cite{kitagawa2018should}, \cite{athey2020policy}, and many others in the literature to evaluate the performance of the estimated treatment rules. When $\mathcal D$ is the class of treatment rules to be considered, the regret from choosing treatment rule $\delta$ is $u(\delta^*)-u(\delta)$ where 
\begin{equation} \label{equation:oracle}
\delta^* =  \arg\max_{d \in \mathcal D} E\big[E[Y_1|X]\cdot d + E[Y_0|X]\cdot(1-d)\big]. 
\end{equation}
It is an expected loss in welfare that results from not reaching the maximum feasible welfare as $\delta^*$ is the policy that maximizes population welfare. In \cite{kitagawa2018should} and others, under the assumption of unconfoundedness, the welfare criterion $u(\delta)$ in \eqref{equation:welfare} is point-identified. Therefore, the optimal "oracle" treatment rule in \eqref{equation:oracle} is well defined when the researcher knows the joint distribution of $(Y,D,X)$. However, when the welfare criterion in \eqref{equation:welfare} is set-identified, one needs to specify their notion of optimality. For instance, the optimal rule could be a rule that maximizes the guaranteed or minimum welfare.

\section{Identification} \label{section:identification}
\subsection{Sharp identified region}
Partial identification approach has been proven to be a useful alternative or complement to point identification analysis with strong assumptions. See \cite{manski2003partial}, \cite{tamer2010partial}, and \cite{molinari2019econometrics} for an overview. The theory of random sets, which I use to conduct my identification analysis, is one of the tools that have been used fruitfully to address identification and inference in partially identified models. Examples include \cite{beresteanu2008asymptotic}, \cite{beresteanu2011sharp, beresteanu2012partial}, \cite{galichon2011set}, \cite{epstein2016robust}, \cite{chesher2017generalized}, and \cite{kaido2019robust}. See \cite{molchanov2018random} for a textbook treatment of its use in econometrics.  

My goal in this section is to characterize the  \emph{sharp identified region} of the welfare gain when different assumptions are imposed on the unobservables. The sharp identified region of the welfare gain is the tightest possible set that collects the values of welfare gain that results from all possible $(Y_1,Y_0)$ that are consistent with the maintained assumptions. Toward this end, I define a \emph{random set} and its \emph{selections} whose formal definitions can be found in Appendix \ref{appendix:randomsettheory}. The random set is useful for incorporating weak assumptions in a unified framework rather than deriving bounds on a case-by-case basis. Let $(\mathcal{Y}_1 \times \mathcal{Y}_0): \Omega \to \mathcal F$ be a random set where $\mathcal F$ is the family of closed subsets of $\mathbb R^2$. Assumptions on potential outcomes can be imposed through this random set. Then, the collection of all random vectors $(Y_1, Y_0)$ that are consistent with those assumptions equals the family of all selections of $(\mathcal{Y}_1 \times \mathcal{Y}_0)$ denoted by $\mathcal S(\mathcal{Y}_1 \times \mathcal{Y}_0)$. Specific examples of a random set with more discussions on selections, namely, in the context of worst-case bounds of \cite{manski1990nonparametric} and monotone treatment response analysis of \cite{manski1997monotone}, are given in Section \ref{subsection:id_no_iv}. Using the random set notations I just introduced, the sharp identified region of the welfare gain is given by
\begin{equation} \label{equation:id}
B_I(\delta, \delta^*) \equiv
\{\beta \in \mathbb{R}: \beta = E\big[E[Y_1-Y_0|X]\cdot(\delta(X)- \delta^*(X))\big], (Y_1, Y_0) \in \mathcal S(\mathcal{Y}_1 \times \mathcal{Y}_0)\}.
\end{equation}

\subsection{Lower and upper bound}
One way to achieve characterization of the sharp identified region is through a \emph{selection expectation} and its \emph{support function}. Their definitions can be found in Appendix \ref{appendix:randomsettheory}. Let the support function of a convex set $K \subset \mathbb R^d$ be denoted by  
\begin{align}
s(v,K) = \sup_{x \in K}\langle v,x \rangle,  \indent v\in \mathbb R^d.
\end{align}
The support function appears in \cite{beresteanu2008asymptotic}, \cite{beresteanu2011sharp}, \cite{bontemps2012set}, \cite{kaido2014asymptotically}, \cite{kaido2016dual},  and \cite{kaido2017asymptotically}, among others.

I first state a lemma that will be useful to prove my main result. It shows how expectation of a functional of potential outcomes can be bounded from below and above by expected support function of the random set $(\mathcal{Y}_1 \times \mathcal{Y}_0)$. The proof of the following lemma and all other proofs in this paper are collected in the Appendix. 
\begin{lemma} \label{lemma:expectedsupport}
	Let $(\mathcal{Y}_1 \times \mathcal{Y}_0): \Omega \to \mathcal F$ be an integrable random set that is almost surely convex and let $(Y_1,Y_0)\in \mathcal S(\mathcal{Y}_1 \times \mathcal{Y}_0)$. For any $v\in \mathbb R^2$, we have
	\begin{equation} \label{expectedsupport}
		-E[s(-v,\mathcal{Y}_1 \times \mathcal{Y}_0)|X] \leq v'E[(Y_1, Y_0)'|X] \leq E[s(v,\mathcal{Y}_1 \times \mathcal{Y}_0)|X] \indent a.s.
	\end{equation}
\end{lemma}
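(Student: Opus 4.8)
The plan is to reduce the claim to the elementary pointwise inequality defining a support function and then pass to conditional expectations. Since $(Y_1,Y_0)\in\mathcal S(\mathcal{Y}_1\times\mathcal{Y}_0)$, by definition of a selection we have $(Y_1(\omega),Y_0(\omega))\in(\mathcal{Y}_1\times\mathcal{Y}_0)(\omega)$ for $\mathbb P$-almost every $\omega$. Fix $v\in\mathbb R^2$. Because $s(v,K)=\sup_{x\in K}\langle v,x\rangle$, membership of $(Y_1(\omega),Y_0(\omega))$ in $(\mathcal{Y}_1\times\mathcal{Y}_0)(\omega)$ gives the pointwise bound
\begin{equation}
\langle v,(Y_1(\omega),Y_0(\omega))\rangle \le s\big(v,(\mathcal{Y}_1\times\mathcal{Y}_0)(\omega)\big)\qquad \text{a.s.}
\end{equation}
This step uses only that $(Y_1,Y_0)$ is a selection, not convexity of the random set.

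Next I would check that both sides are integrable, so that the conditional expectations in \eqref{expectedsupport} are well defined and finite a.s. Integrability of the random set means $\lVert\mathcal{Y}_1\times\mathcal{Y}_0\rVert\equiv\sup\{\lVert x\rVert:x\in\mathcal{Y}_1\times\mathcal{Y}_0\}$ is integrable; hence every selection $(Y_1,Y_0)$ is integrable, and $|s(v,\mathcal{Y}_1\times\mathcal{Y}_0)|\le\lVert v\rVert\,\lVert\mathcal{Y}_1\times\mathcal{Y}_0\rVert$ is integrable as well. Measurability of $\omega\mapsto s(v,(\mathcal{Y}_1\times\mathcal{Y}_0)(\omega))$, needed for the conditional expectation to make sense, follows from standard facts about random closed sets collected in Appendix \ref{appendix:randomsettheory}.

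Taking $E[\,\cdot\mid X]$ on both sides of the pointwise inequality and using monotonicity and linearity of conditional expectation yields
\begin{equation}
v'E[(Y_1,Y_0)'\mid X] = E\big[\langle v,(Y_1,Y_0)\rangle\mid X\big] \le E[s(v,\mathcal{Y}_1\times\mathcal{Y}_0)\mid X]\qquad \text{a.s.},
\end{equation}
which is the right-hand inequality of \eqref{expectedsupport}. Running the identical argument with $-v$ in place of $v$ gives $-v'E[(Y_1,Y_0)'\mid X]\le E[s(-v,\mathcal{Y}_1\times\mathcal{Y}_0)\mid X]$ a.s., equivalently $v'E[(Y_1,Y_0)'\mid X]\ge -E[s(-v,\mathcal{Y}_1\times\mathcal{Y}_0)\mid X]$ a.s. On the complement of the union of the two exceptional null sets both bounds hold simultaneously, which is the claim.

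I do not anticipate a genuine obstacle: the result is essentially the definition of the support function combined with monotonicity of conditional expectation. The only care needed is the bookkeeping of null sets and the measurability and integrability of $s(v,\mathcal{Y}_1\times\mathcal{Y}_0)$, which I would dispatch by invoking the random-set results in the appendix rather than reproving them. It is worth noting that the convexity hypothesis plays no role in this lemma; it is presumably imposed because the same random set is used later together with the selection-expectation and support-function machinery, where convexity of $E[\mathcal{Y}_1\times\mathcal{Y}_0]$ and the identity between its support function and $E[s(v,\cdot)]$ are what matter.
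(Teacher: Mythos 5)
Your proof is correct, but it takes a genuinely different and more elementary route than the paper's. The paper argues at the level of the (conditional) selection expectation: it notes that $E[(Y_1,Y_0)'|X]\in\mathbb{E}[\mathcal{Y}_1\times\mathcal{Y}_0|X]$, bounds the inner product by the support function of that set, and then invokes Theorem \ref{theorem:support} (the identity $E[s(v,X)]=s(v,\mathbb{E}[X])$, which is where the a.s.\ convexity hypothesis enters) to rewrite $s(v,\mathbb{E}[\mathcal{Y}_1\times\mathcal{Y}_0|X])$ as $E[s(v,\mathcal{Y}_1\times\mathcal{Y}_0)|X]$. You instead work $\omega$-by-$\omega$: the selection property gives $\langle v,(Y_1,Y_0)\rangle\le s(v,\mathcal{Y}_1\times\mathcal{Y}_0)$ pointwise a.s., and monotonicity of conditional expectation finishes the job; you correctly observe that convexity is never used. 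Your route is shorter, avoids the Aumann-expectation machinery entirely, and makes transparent that the lemma is really just the definition of the support function plus monotonicity; the paper's route is what connects the lemma to the sharpness arguments later (the Aumann expectation characterizes the whole identified set, not merely its endpoints), which is presumably why the author phrases it that way and carries the convexity hypothesis. One small caveat on your integrability discussion: the paper's Definition of an \emph{integrable} random set only requires $\mathcal S^1(\mathcal{Y}_1\times\mathcal{Y}_0)\neq\emptyset$, whereas the domination $|s(v,\mathcal{Y}_1\times\mathcal{Y}_0)|\le\lVert v\rVert\,\sup\{\lVert x\rVert: x\in\mathcal{Y}_1\times\mathcal{Y}_0\}$ with an integrable majorant is the stronger notion of being \emph{integrably bounded}. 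Under mere integrability a generic selection need not be integrable and the support function need not have finite expectation, so strictly speaking \eqref{expectedsupport} should be read in the extended sense (or one should note that the random sets actually used, being contained in $[\underline y,\bar y]^2$, are integrably bounded). This is a gap the paper's own statement shares, so it does not count against you, but you should not present integrable boundedness as the definition of integrability.
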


I introduce a notation that appears in the following theorem and throughout the paper. Let $\theta_{10}(X) \equiv \mathbbm{1}\{\delta(X)=1, \delta^*(X)=0\}$ be an indicator function for the sub population that are newly treated under the new policy. Similary, let $\theta_{01}(X)\equiv\mathbbm{1}\{\delta(X)=0, \delta^*(X)=1\}$ be an indicator function for the sub population that are no longer being treated because of the new policy. 

\begin{theorem}[General case]\label{theorem:main}
	Suppose $(\mathcal{Y}_1 \times \mathcal{Y}_0): \Omega \to \mathcal F$ is an integrable random set that is almost surely convex. Let $\delta: \mathcal{X} \rightarrow \{0,1\}$ and $\delta^*: \mathcal{X} \rightarrow \{0,1\}$ be treatment rules.
	Also, let $v^*=(1,-1)'$. Then, $B_I(\delta, \delta^*)$ in \eqref{equation:id} is an interval $[\beta_l, \beta_u]$ where
\begin{equation} \label{equation:low_general}
	\beta_l = E[\underline{\Delta}(X)\cdot \theta_{10}(X) - \bar{\Delta}(X) \cdot \theta_{01}(X)],
\end{equation}
and
\begin{equation} \label{equation:up_general}
	\beta_u = E[\bar{\Delta}(X)\cdot\theta_{10}(X) -\underline{\Delta}(X)\cdot \theta_{01}(X)],
\end{equation}
	where $\underline{\Delta}(X)\equiv - E[s(-v^*,\mathcal{Y}_1 \times \mathcal{Y}_0)|X]$ and $\bar{\Delta}(X) \equiv E[s(v^*,\mathcal{Y}_1 \times \mathcal{Y}_0)|X]$.
\end{theorem}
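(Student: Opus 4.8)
The plan is to establish the two set inclusions $B_I(\delta,\delta^*)\subseteq[\beta_l,\beta_u]$ and $[\beta_l,\beta_u]\subseteq B_I(\delta,\delta^*)$. The starting point is a purely algebraic identity: since $\delta(X),\delta^*(X)\in\{0,1\}$, we have $\delta(X)-\delta^*(X)=\theta_{10}(X)-\theta_{01}(X)$ pointwise, and the indicators $\theta_{10}(X)$ and $\theta_{01}(X)$ have disjoint supports (no $x$ can satisfy both $\delta(x)=1,\delta^*(x)=0$ and $\delta(x)=0,\delta^*(x)=1$). Consequently, for any selection $(Y_1,Y_0)\in\mathcal S(\mathcal{Y}_1\times\mathcal{Y}_0)$ the corresponding value of the welfare gain is $\beta=E[\Delta(X)\theta_{10}(X)]-E[\Delta(X)\theta_{01}(X)]$ with $\Delta(X)=E[Y_1-Y_0\mid X]$, so only the conditional mean treatment effect of the selection matters.

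For the inclusion $B_I(\delta,\delta^*)\subseteq[\beta_l,\beta_u]$ I would apply Lemma~\ref{lemma:expectedsupport} with $v=v^*=(1,-1)'$. Since $v^{*\prime}(Y_1,Y_0)'=Y_1-Y_0$, the lemma gives $\underline{\Delta}(X)\le\Delta(X)\le\bar{\Delta}(X)$ almost surely for every selection, where $\underline{\Delta},\bar{\Delta}$ are finite a.s. by integrability of $\mathcal{Y}_1\times\mathcal{Y}_0$. Because $\theta_{10}(X),\theta_{01}(X)\ge 0$, multiplying these inequalities by $\theta_{10}(X)$ and by $\theta_{01}(X)$, taking expectations, and combining yields $\beta_l\le\beta\le\beta_u$, which is the desired inclusion.

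For the reverse inclusion I would use the conditional selection (Aumann) expectation $E[\mathcal{Y}_1\times\mathcal{Y}_0\mid X]$. By the results on random sets collected in Appendix~\ref{appendix:randomsettheory}, under convexity and integrability of the random set this is an almost surely convex, integrable random set whose support function satisfies $s(v,E[\mathcal{Y}_1\times\mathcal{Y}_0\mid X])=E[s(v,\mathcal{Y}_1\times\mathcal{Y}_0)\mid X]$ a.s., and whose measurable selections are exactly the conditional means $E[(Y_1,Y_0)'\mid X]$ of selections of $\mathcal{Y}_1\times\mathcal{Y}_0$. In particular, for almost every $x$ the set $\{v^{*\prime}m: m\in E[\mathcal{Y}_1\times\mathcal{Y}_0\mid X=x]\}$ equals the interval $[\underline{\Delta}(x),\bar{\Delta}(x)]$. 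Now fix $\tau\in[0,1]$ and define a measurable target function $g_\tau$ by $g_\tau(x)=\tau\bar{\Delta}(x)+(1-\tau)\underline{\Delta}(x)$ on $\{\theta_{10}(x)=1\}$, $g_\tau(x)=(1-\tau)\bar{\Delta}(x)+\tau\underline{\Delta}(x)$ on $\{\theta_{01}(x)=1\}$, and $g_\tau(x)=\underline{\Delta}(x)$ elsewhere; then $g_\tau(x)\in[\underline{\Delta}(x),\bar{\Delta}(x)]$ a.s. Applying a measurable selection theorem (Kuratowski--Ryll-Nardzewski) to the random set $\{m\in E[\mathcal{Y}_1\times\mathcal{Y}_0\mid X]: v^{*\prime}m=g_\tau(X)\}$ produces an $X$-measurable, integrable selection, hence a selection $(Y_1^\tau,Y_0^\tau)\in\mathcal S(\mathcal{Y}_1\times\mathcal{Y}_0)$ with $E[Y_1^\tau-Y_0^\tau\mid X]=g_\tau(X)$ a.s. Evaluating the welfare gain for this selection and using $\delta-\delta^*=\theta_{10}-\theta_{01}$ together with the disjointness of the supports of $\theta_{10},\theta_{01}$ gives $\beta(\tau)=\tau\beta_u+(1-\tau)\beta_l$; letting $\tau$ range over $[0,1]$ shows $[\beta_l,\beta_u]\subseteq B_I(\delta,\delta^*)$ (with $\tau=1$ and $\tau=0$ giving attainment of the two endpoints). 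Combining the two inclusions yields $B_I(\delta,\delta^*)=[\beta_l,\beta_u]$.

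The step I expect to be the main obstacle is the reverse inclusion. One must invoke carefully the characterization of the conditional selection expectation — that it is a.s. convex, has the stated support function, and that its measurable selections are precisely the conditional means of integrable selections of $\mathcal{Y}_1\times\mathcal{Y}_0$ — so that the extremal conditional means $\underline{\Delta}$ and $\bar{\Delta}$ can be realized \emph{simultaneously} on the two disjoint pieces $\{\theta_{10}=1\}$ and $\{\theta_{01}=1\}$ by a single selection, and one must check that the glued-together selection obtained from the measurable selection theorem is indeed measurable and integrable. The almost sure convexity of $\mathcal{Y}_1\times\mathcal{Y}_0$ is exactly what makes each fiber $[\underline{\Delta}(x),\bar{\Delta}(x)]$ a full interval rather than merely a set containing its two endpoints, and hence what makes $B_I(\delta,\delta^*)$ an interval; without it one retains only the containment $B_I(\delta,\delta^*)\subseteq[\beta_l,\beta_u]$ with attainment of the endpoints.
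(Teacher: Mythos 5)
Your proposal is correct, and its first half is exactly the paper's proof: write $\Delta(X)=v^{*\prime}E[(Y_1,Y_0)'|X]$, apply Lemma~\ref{lemma:expectedsupport} at $v=v^*$ to get $\underline{\Delta}(X)\le\Delta(X)\le\bar{\Delta}(X)$ a.s., and use the sign decomposition $\delta-\delta^*=\theta_{10}-\theta_{01}$ with disjoint supports to conclude $\beta_l\le\beta\le\beta_u$. Where you go beyond the paper is the reverse inclusion: the paper's written proof stops after establishing that $\beta_l$ and $\beta_u$ bound the welfare gain, and never verifies that every point of $[\beta_l,\beta_u]$ (or even either endpoint) is actually attained by some selection, so strictly speaking it proves only $B_I(\delta,\delta^*)\subseteq[\beta_l,\beta_u]$ and leaves sharpness implicit. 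Your construction --- gluing together, on the disjoint events $\{\theta_{10}=1\}$ and $\{\theta_{01}=1\}$, conditional means that hit prescribed convex combinations $g_\tau(x)\in[\underline{\Delta}(x),\bar{\Delta}(x)]$, realized via the conditional Aumann expectation and a measurable selection theorem --- is the right way to close that gap, and your remark that a.s.\ convexity is what makes each fiber a full interval is exactly the point. The one place to be careful is the claim that the selections of $E[\mathcal{Y}_1\times\mathcal{Y}_0\mid X]$ are \emph{exactly} the conditional means of selections of $\mathcal{Y}_1\times\mathcal{Y}_0$: the general statement identifies them with the $L^1$-closure of that family, so you should either invoke the version of this result valid for a.s.\ convex integrable random sets (where the family is already closed and decomposable) or note that for the product-of-interval structure used throughout the paper the representation \eqref{equation:selection} lets you exhibit the required selection directly. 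With that caveat your argument is complete and strictly stronger than what is written in the appendix.
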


The lower (upper) bound on the welfare gain is achieved when the newly treated people are the ones who benefit the least (most) from the treatment and the people who are no longer being treated are the ones who benefit the most (least) from the treatment. Therefore, the lower and upper bounds of the welfare gain involve both $\underline{\Delta}(X) = - E[s(-v^*,\mathcal{Y}_1 \times \mathcal{Y}_0)|X]$ and $\bar{\Delta}(X) = E[s(v^*,\mathcal{Y}_1 \times \mathcal{Y}_0)|X]$, expected support functions of the random set at directions $-v^*=(-1,1)'$ and $v^*=(1,-1)'$. Oftentimes, these can be estimated by its sample analog estimators. I give closed form expressions of the expected support functions in Section \ref{subsection:id_no_iv} and \ref{subsection:id_iv} -- they depend on objects such as $E[Y|D=1, X=x]$, $E[Y|D=0, X=x]$, and $P(D=1|X=x)$. To ease notation, let $\eta(d,x)\equiv E[Y|D=d, X=x]$ for $d\in\{0,1\}$ be the conditional mean treatment responses and $p(x) \equiv P(D=1|X=x)$ be the propensity score.

While I characterize the identified region of the welfare gain directly given assumptions on the selections $(Y_1, Y_0)$, \cite{kasy2016partial}'s analysis is based on the identified set for CATE and their main results apply to any approach that leads to partial identification of treatment effects. The characterization I give above is related to their characterization when no restrictions across covariate values are imposed on treatment effects (e.g., no restrictions such as $\Delta(x)$ is monotone in $x$) and $\underline{\Delta}(x)$  and $\bar{\Delta}(x)$ are respectively lower and upper bound on the CATE $\Delta(x)$. As examples of such bounds, \cite{kasy2016partial} considers bounds that arise under instrument exogeneity as in  \cite{manski2003partial} and under marginal stationarity of unobserved heterogeneity in panel data models as in \cite{chernozhukov2013average}. I consider bounds when there are instrumental variables that satisfy mean independence or mean monotonicity conditions as in \cite{manski2003partial} in Section \ref{subsection:id_iv}.	

In the following subsection, Section \ref{subsection:id_no_iv}, I illustrate the form of the random set and show how Theorem \ref{theorem:main} can be used to derive closed form bounds under different sets of assumptions.

\subsection{Identification without Instruments} \label{subsection:id_no_iv}

\begin{figure}[htbp]
	\begin{center}	
		\begin{tikzpicture}[scale=0.6]
		[domain=-0.5:6]	
		\node at (3,7) {$D=1$};
		\draw[->] (-0.5,0) -- (6,0) node[right] {$\mathcal Y_1$};
		\draw[->] (0,-0.5) -- (0,6) node[above] {$\mathcal Y_0$};
		\draw (3,0) node[below]{$Y$};
		\draw (0,2) node[left]{$\underline y$};
		\draw (0,5) node[left]{$\bar y$};
		\draw[dashed, gray] (0,2) -- (3,2);
		\draw[dashed, gray] (0,5) -- (3,5);
		\draw[dashed, gray] (3,0) -- (3,2);
		\draw[ultra thick, dark-blue] (3,2) -- (3,5);
		\end{tikzpicture}
		\hspace{0.2in}
		\begin{tikzpicture}[scale=0.6]
		[domain=-0.5:6]
		\node at (3,7) {$D=0$};
		\draw[->] (-0.5,0) -- (6,0) node[right] {$\mathcal Y_1$};
		\draw[->] (0,-0.5) -- (0,6) node[above] {$\mathcal Y_0$};
		\draw (0,3) node[left]{$Y$};
		\draw (2,0) node[below]{$\underline y$};
		\draw (5,0) node[below]{$\bar y$};
		\draw[dashed, gray] (0,3) -- (2,3);
		\draw[dashed, gray] (2,0) -- (2,3);
		\draw[dashed, gray] (5,0) -- (5,3);
		\draw[ultra thick, dark-blue] (2,3) -- (5,3);
		\end{tikzpicture}
	\end{center}
	\caption{Random set $(\mathcal{Y}_1 \times \mathcal{Y}_0)$ under worst-case}
	\label{figure:randomset_worstcase}
\end{figure}
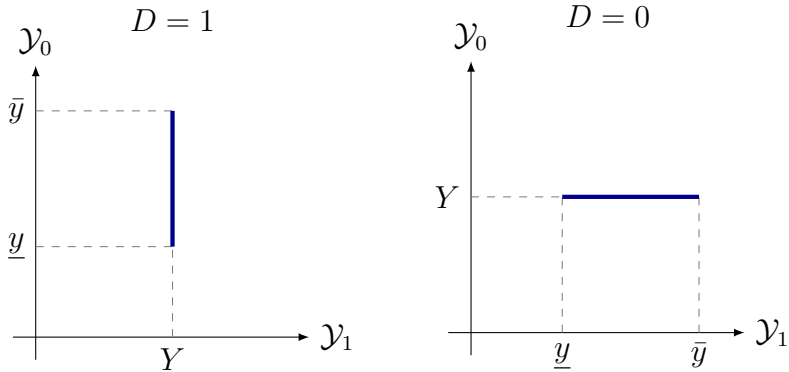

\noindent \cite{manski1990nonparametric} derived worst-case bounds on $Y_1$ and $Y_0$ when the outcome variable is bounded, i.e., $Y\in [\underline y, \bar y]\subset \mathbb R$ where $ -\infty<\underline y \leq \bar y < \infty$. It is called worst-case bounds because no additional assumptions are imposed on their distributions. Then, as shown in Figure \ref{figure:randomset_worstcase}, the random set $(\mathcal{Y}_1 \times \mathcal{Y}_0)$ is such that 
\begin{equation} \label{equation:randomset_worstcase}
\mathcal{Y}_1 \times \mathcal{Y}_0 = 
\begin{cases}
\{Y\}\times [\underline y, \bar y] $ if $ D=1, \\
[\underline y, \bar y] \times \{Y\} $ if $ D=0.
\end{cases}
\end{equation}

The random set in \eqref{equation:randomset_worstcase} switches its value between two sets depending on the value of $D$. If $D=1$, $\mathcal{Y}_1$ is given by a singleton $\{Y\}$ whereas $\mathcal{Y}_0$ is given by the entire support $[\underline y, \bar y]$. Similarly, if $D=0$, $\mathcal{Y}_0$ is given by a singleton $\{Y\}$ whereas $\mathcal{Y}_1$ is given by the entire support $[\underline y, \bar y]$. I plot  $\mathcal{Y}_d$ and its selection $Y_d$ for $d\in\{0,1\}$ as a function of $\omega \in \Omega$ in Figure \ref{figure:selection_worstcase}. If $D=d$, the random set $\mathcal{Y}_d$ is a singleton $\{Y\}$ and the family of selections consists of single random variable $\{Y\}$ as well. On the other hand, if $D=1-d$, the random set $\mathcal{Y}_d$ is an interval $[\underline y, \bar y]$ and the family of all selections consists of all $\mathfrak A$-measurable random variables that has support on $[\underline y, \bar y]$. 
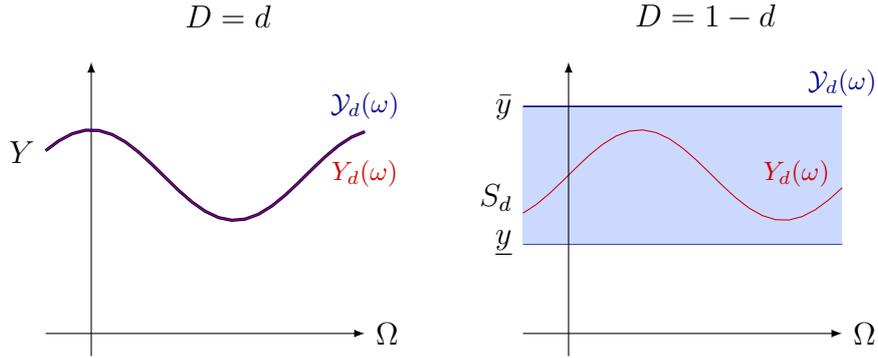
\begin{figure}[htbp]
	\begin{center}	
		\begin{tikzpicture}[scale=0.6]
		[domain=-1:6]
		\node at (3,7) {$D=d$};
		\draw[->] (-1,0) -- (6,0) node[right] {$\Omega$};
		\draw[->] (0,-0.5) -- (0,6) node[above] {};
		\draw (-1,4) node[left]{$Y$};
		\draw[very thick, dark-blue, domain =-1:6] plot(\x,{3.5+cos(\x r )});
		\draw[bu-red, domain =-1:6] plot(\x,{3.5+cos(\x r )});
		\draw (6,4.5) node[dark-blue, above]{\footnotesize{$\mathcal{Y}_d(\omega)$}};
		\draw (6,3) node[bu-red, above]{\footnotesize{$Y_d(\omega)$}};
		\end{tikzpicture}
		\hspace{0.2in}
		\begin{tikzpicture}[scale=0.6]
		[domain=-1:6]	
		\node at (3,7) {$D=1-d$};
		\draw (-1,2) node[left]{$\underline y$};
		\draw (-1,5) node[left]{$\bar y$};
		\draw (-1,3) node[left]{$S_d$};
		\draw[very thick, dark-blue] (-1,5) -- (6,5);
		\draw[very thick, dark-blue] (-1,2) -- (6,2);
		\fill[fill=light-blue] (-1,5) -- (6,5) -- (6,2) -- (-1,2);
		\draw[bu-red, domain =-1:6] plot(\x,{3.5+sin(\x r )});
		\draw (6,5) node[dark-blue, above]{\footnotesize{$\mathcal{Y}_d(\omega)$}};
		\draw (5,3) node[bu-red, above]{\footnotesize{$Y_d(\omega)$}};
		\draw[->] (-1,0) -- (6,0) node[right] {$\Omega$};
		\draw[->] (0,-0.5) -- (0,6) node[above] {};
		\end{tikzpicture}	
	\end{center}
	\caption{Random set $\mathcal{Y}_d$ and its selection $Y_d$ for $d\in\{0,1\}$ as a function of $\omega\in\Omega$ under worst-case}
	\label{figure:selection_worstcase}
\end{figure}
Note that each selection $(Y_1,Y_0)$ of $(\mathcal{Y}_1 \times \mathcal{Y}_0)$ can be represented in the following way. Take random variables $S_1:\Omega \to \mathbb R$ and $S_0:\Omega \to \mathbb R$ whose distributions conditional on $Y$ and $D$ are not specified and can be any probability distributions on $[\underline y, \bar y]$. Then $(Y_1, Y_0)$ that satisfies the following is a selection of $\mathcal{Y}_1 \times \mathcal{Y}_0$:
\begin{align} \label{equation:selection}
\begin{split}
Y_1 = Y\cdot D + S_1\cdot (1-D), \\
Y_0 = Y\cdot(1-D) + S_0\cdot D.
\end{split}
\end{align}
This representation makes it even clearer how I am not imposing any structure on the counterfactuals that I do not observe. $S_1$ and $S_0$ correspond to the \emph{selection mechanisms} that appear in \cite{ponomareva2011misspecification} and \cite{tamer2010partial}.

Now, for the random set in \eqref{equation:randomset_worstcase}, I can calculate its expected support function at directions $v^*=(1,-1)$ and $-v^*=(-1,1)$ to obtain the bounds of the welfare gain in closed form. As shown in Figure \ref{figure:supportfunction}, the support function of random set $(\mathcal{Y}_1 \times \mathcal{Y}_0)$ in \eqref{equation:randomset_worstcase} at direction $v^*=(1,-1)$ is the (signed) distance (rescaled by the norm of $v^*$) between the origin and the hyperplane tangent to the random set in direction $v^*=(1,-1)$. Then, the bounds are given in the following Corollary to Theorem \ref{theorem:main}.

\begin{figure}[htbp] 
	\begin{center}
		\begin{tikzpicture}[scale=0.8]
		[domain=-0.5:4.5]	
		\node at (2,5) {$D=1$};
		\draw[->] (-0.5,0) -- (4.5,0) node[right] {$\mathcal Y_1$};
		\draw[->] (0,-0.5) -- (0,4.5) node[above] {$\mathcal Y_0$};
		\draw[->, thick, bu-red] (0,0) -- (1,-1);
		\draw (1.5,-1.5) node[below, bu-red]{\footnotesize $v^*=(1,-1)$};
		\draw[very thin, gray] (-1,-1) -- (1,1);
		\draw[very thin, gray] (0,-1) -- (3,2);
		\draw[very thin, white] (0,-2) -- (5,3);
		\draw[->, dashed, thick, dark-blue] (1,1) -- (1.5,0.5);
		\draw (2,0) node[below]{$Y$};
		\draw (0,1) node[left]{$\underline y$};
		\draw (0,4) node[left]{$\bar y$};
		\draw[dashed, gray] (0,1) -- (2,1);
		\draw[dashed, gray] (0,4) -- (2,4);
		\draw[dashed, gray] (2,0) -- (2,1);
		\draw[ultra thick, dark-blue] (2,1) -- (2,4);
		\end{tikzpicture}
		\hspace{0.2in}
		\begin{tikzpicture}[scale=0.8]
		[domain=-0.5:4.5]
		\node at (2,5) {$D=0$};
		\draw[->] (-0.5,0) -- (4.5,0) node[right] {$\mathcal Y_1$};
		\draw[->] (0,-0.5) -- (0,4.5) node[above] {$\mathcal Y_0$};
		\draw[->, thick, bu-red] (0,0) -- (1,-1);
		\draw (1.5,-1.5) node[below, bu-red]{\footnotesize $v^*=(1,-1)$};
		\draw[very thin, gray] (-1,-1) -- (2,2);
		\draw[very thin, gray] (0,-2) -- (5,3);
		\draw[->, dashed,thick, dark-blue] (1,1) -- (2,0);
		\draw (0,2) node[left]{$Y$};
		\draw (1,0) node[below]{$\underline y$};
		\draw (4,0) node[below]{$\bar y$};
		\draw[dashed, gray] (0,2) -- (1,2);
		\draw[dashed, gray] (1,0) -- (1,2);
		\draw[dashed, gray] (4,0) -- (4,2);
		\draw[ultra thick, dark-blue] (1,2) -- (4,2);
		\end{tikzpicture}
	\end{center}
\caption{Support function of $(\mathcal{Y}_1 \times \mathcal{Y}_0)$ at direction $v^*=(1,-1)$ under worst-case}
\label{figure:supportfunction}
\end{figure}
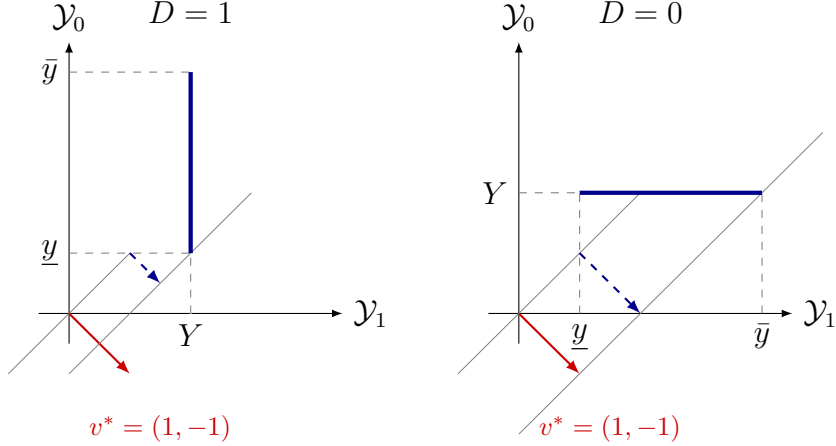

\begin{corollary}[Worst-case] \label{corollary:worstcase}
	Let $(\mathcal{Y}_1 \times \mathcal{Y}_0)$ be a random set in \eqref{equation:randomset_worstcase}.
	Let $\delta: \mathcal{X} \rightarrow \{0,1\}$ and $\delta^*: \mathcal{X} \rightarrow \{0,1\}$ be treatment rules.
	Then, $B_I(\delta, \delta^*)$ in \eqref{equation:id} is an interval $[\beta_l, \beta_u]$ where
	\begin{align} \label{equation:low_worstcase}
	\begin{split}	
	\beta_l & = E\big[\big((\eta(1,X)-\bar{y})\cdot p(X) + (\underline{y}-\eta(0,X))\cdot(1-p(X))\big)\cdot\theta_{10}(X)\\
	&\qquad - \big((\eta(1,X)-\underline{y})\cdot p(X) + (\bar{y}-\eta(0,X))\cdot(1-p(X))\big) \cdot\theta_{01}(X)\big],
	\end{split}
	\end{align}
	and
	\begin{align} \label{equation:up_worstcase}
	\begin{split}
	\beta_u & = E\big[\big((\eta(1,X)-\underline{y})\cdot p(X) + (\bar{y}-\eta(0,X))\cdot(1-p(X))\big)\cdot \theta_{10}(X)\\
	& \qquad -\big((\eta(1,X)-\bar{y})\cdot p(X) + (\underline{y}-\eta(0,X))\cdot(1-p(X))\big)\cdot \theta_{01}(X)\big].
	\end{split}
	\end{align}
	
\end{corollary}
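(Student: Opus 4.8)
The plan is to obtain Corollary~\ref{corollary:worstcase} as a direct specialization of Theorem~\ref{theorem:main} to the worst-case random set in \eqref{equation:randomset_worstcase}, so the only real content is computing the two expected support functions $\underline{\Delta}(X)$ and $\bar{\Delta}(X)$ in closed form. First I would verify that the random set in \eqref{equation:randomset_worstcase} satisfies the hypotheses of Theorem~\ref{theorem:main}. Almost-sure convexity is immediate: conditional on the realized value of $D$, the set $\mathcal{Y}_1 \times \mathcal{Y}_0$ is a Cartesian product of a (possibly degenerate) closed interval with a closed interval, hence convex. Measurability of $\omega \mapsto \mathcal{Y}_1(\omega)\times\mathcal{Y}_0(\omega)$ follows because the quantities defining its endpoints, namely $\underline{y}$, $\bar{y}$, $Y$, and $D$, are all measurable. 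Integrability (in fact integrable boundedness) follows from $Y \in [\underline{y},\bar{y}]$ with $-\infty < \underline{y} \le \bar{y} < \infty$, so the whole random set is contained in the fixed bounded box $[\underline{y},\bar{y}]^2$. Hence Theorem~\ref{theorem:main} applies and $B_I(\delta,\delta^*) = [\beta_l,\beta_u]$ with $\beta_l,\beta_u$ as in \eqref{equation:low_general}--\eqref{equation:up_general}.

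The substantive step is to evaluate $\bar{\Delta}(X) = E[s(v^*,\mathcal{Y}_1\times\mathcal{Y}_0)\mid X]$ and $\underline{\Delta}(X) = -E[s(-v^*,\mathcal{Y}_1\times\mathcal{Y}_0)\mid X]$ at $v^* = (1,-1)'$. I would use the fact that the support function of a Cartesian product splits additively, $s\big((v_1,v_2)', A\times B\big) = s(v_1,A) + s(v_2,B)$, together with the elementary identities $s(c,[\underline{y},\bar{y}]) = c\bar{y}$ for $c \ge 0$, $s(c,[\underline{y},\bar{y}]) = c\underline{y}$ for $c < 0$, and $s(c,\{Y\}) = cY$. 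Applying this on the event $\{D=1\}$, where $\mathcal{Y}_1\times\mathcal{Y}_0 = \{Y\}\times[\underline{y},\bar{y}]$, gives $s(v^*,\cdot) = Y - \underline{y}$ and $s(-v^*,\cdot) = \bar{y} - Y$; on $\{D=0\}$, where the set is $[\underline{y},\bar{y}]\times\{Y\}$, it gives $s(v^*,\cdot) = \bar{y} - Y$ and $s(-v^*,\cdot) = Y - \underline{y}$. Writing these compactly as $s(v^*,\mathcal{Y}_1\times\mathcal{Y}_0) = (Y-\underline{y})D + (\bar{y}-Y)(1-D)$ and $s(-v^*,\mathcal{Y}_1\times\mathcal{Y}_0) = (\bar{y}-Y)D + (Y-\underline{y})(1-D)$, I would take conditional expectations given $X$, use the law of iterated expectations with $E[Y\mid D=d,X] = \eta(d,X)$ and $P(D=1\mid X) = p(X)$, and obtain $\bar{\Delta}(X) = (\eta(1,X)-\underline{y})p(X) + (\bar{y}-\eta(0,X))(1-p(X))$ and $\underline{\Delta}(X) = (\eta(1,X)-\bar{y})p(X) + (\underline{y}-\eta(0,X))(1-p(X))$.

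Finally I would substitute these expressions into \eqref{equation:low_general} and \eqref{equation:up_general} and collect terms, which reproduces \eqref{equation:low_worstcase} and \eqref{equation:up_worstcase} verbatim. I do not expect a genuine obstacle; the only points requiring care are getting the signs right when evaluating $s(c,[\underline{y},\bar{y}])$ for $c$ of either sign — this is exactly what makes $\bar{y}$ appear in $\underline{\Delta}(X)$ and $\underline{y}$ in $\bar{\Delta}(X)$ — and noting that the explicit selection representation \eqref{equation:selection} is consistent with this support-function calculation, so that sharpness is inherited directly from Theorem~\ref{theorem:main} and need not be re-derived. The remainder is routine bookkeeping with the conditional expectation.
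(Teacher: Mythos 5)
Your proposal is correct and follows essentially the same route as the paper: the paper's proof sets $Y_{L,1}=Y_{L,0}=\underline y$ and $Y_{U,1}=Y_{U,0}=\bar y$ in its auxiliary Lemma \ref{lemma:randomset} to obtain exactly the closed-form expressions for $E[s(v^*,\mathcal{Y}_1\times\mathcal{Y}_0)\mid X]$ and $-E[s(-v^*,\mathcal{Y}_1\times\mathcal{Y}_0)\mid X]$ that you derive directly via the additive splitting of the support function over the Cartesian product, and then plugs them into Theorem \ref{theorem:main}. Your sign bookkeeping and the resulting formulas for $\underline{\Delta}(X)$ and $\bar{\Delta}(X)$ match the paper's exactly.
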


\noindent Worst-case analysis is a great starting point as no additional assumptions are imposed on the unobservables. However, the bounds could be too wide to be informative in some cases. In fact, the worst-case bound cover $0$ all the time as $\underline \beta \leq0$ and $\beta_u \geq0$. One could impose additional assumptions on the relationship between the unobservables and obtain tighter bounds. Towards that end, I analyze the monotone treatment response (MTR) assumption of \cite{manski1997monotone}. 
\begin{assumption}[MTR Assumption] \label{assumption:mtr}
\begin{equation} 
Y_1 \geq Y_0 \text{ a.s.}
\end{equation}
\end{assumption}

Assumption \ref{assumption:mtr} states that everyone benefits from the treatment. Suppose Assumption \ref{assumption:mtr} holds. Then, the random set is such that
\begin{equation} \label{equation:randomset_mtr}
\mathcal{Y}_1 \times \mathcal{Y}_0 = 
\begin{cases}
\{Y\}\times [\underline y, Y] $ if $ D=1, \\
[Y, \bar y] \times \{Y\} $ if $ D=0.
\end{cases}
\end{equation}

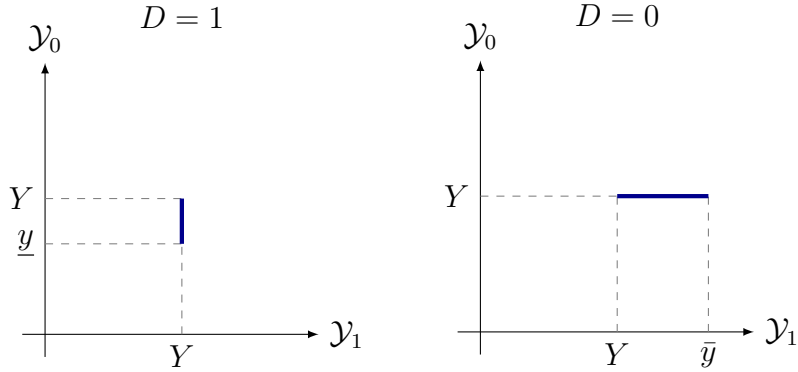
\begin{figure}[h] 
	\begin{center}	
		\begin{tikzpicture}[scale=0.6]
		[domain=-0.5:6]	
		\node at (3,7) {$D=1$};
		\draw[->] (-0.5,0) -- (6,0) node[right] {$\mathcal Y_1$};
		\draw[->] (0,-0.5) -- (0,6) node[above] {$\mathcal Y_0$};
		\draw (3,0) node[below]{$Y$};
		\draw (0,2) node[left]{$\underline y$};
		\draw (0,3) node[left]{$Y$};
		\draw[dashed, gray] (0,2) -- (3,2);
		\draw[dashed, gray] (0,3) -- (3,3);
		\draw[dashed, gray] (3,0) -- (3,2);
		\draw[ultra thick, dark-blue] (3,2) -- (3,3);
		\end{tikzpicture}
		\hspace{0.2in}
		\begin{tikzpicture}[scale=0.6]
		[domain=-0.5:6]
		\node at (3,7) {$D=0$};
		\draw[->] (-0.5,0) -- (6,0) node[right] {$\mathcal Y_1$};
		\draw[->] (0,-0.5) -- (0,6) node[above] {$\mathcal Y_0$};
		\draw (0,3) node[left]{$Y$};
		\draw (3,0) node[below]{$Y$};
		\draw (5,0) node[below]{$\bar y$};
		\draw[dashed, gray] (0,3) -- (3,3);
		\draw[dashed, gray] (3,0) -- (3,3);
		\draw[dashed, gray] (5,0) -- (5,3);
		\draw[ultra thick, dark-blue] (3,3) -- (5,3);
		\end{tikzpicture}
	\end{center}
	\caption{Random set $(\mathcal{Y}_1 \times \mathcal{Y}_0)$ under MTR Assumption}
	\label{figure:randomset_mtr}
\end{figure}

As shown in Figure \ref{figure:randomset_mtr}, depending on the value of $D$, the random set in \eqref{equation:randomset_mtr} switches its value between two sets, that are smaller than those in \eqref{equation:randomset_worstcase}. The bounds of the welfare gain when the random set is given by \eqref{equation:randomset_mtr} are given in the following Corollary to Theorem \ref{theorem:main}. Notice that the lower bound on conditional average treatment effect $\underline{\Delta}(X) = - E[s(-v^*,\mathcal{Y}_1 \times \mathcal{Y}_0)|X]$ equals $0$ when the random set is given by \eqref{equation:randomset_mtr}. It is shown geometrically in Figure \ref{figure:supportfunction_mtr}. The expected support function of the random set in \eqref{equation:randomset_mtr} at direction $-v^*=(-1,1)'$ is always $0$ as the hyperplane tangent to the random set at direction $-v^*=(-1,1)'$ goes through the origin regardless of the value of $D$. 

\begin{figure}[h] 
	\begin{center}	
		\begin{tikzpicture}[scale=0.6]
		[domain=-0.5:6]	
		\node at (3,7) {$D=1$};
		\draw[->] (-0.5,0) -- (6,0) node[right] {$\mathcal Y_1$};
		\draw[->] (0,-0.5) -- (0,6) node[above] {$\mathcal Y_0$};
		\draw (3,0) node[below]{$Y$};
		\draw (0,2) node[left]{$\underline y$};
		\draw (0,3) node[left]{$Y$};
		\draw[dashed, gray] (0,2) -- (3,2);
		\draw[dashed, gray] (0,3) -- (3,3);
		\draw[dashed, gray] (3,0) -- (3,2);
		\draw[ultra thick, dark-blue] (3,2) -- (3,3);
		\draw[->, thick, bu-red] (0,0) -- (-1,1);
		\draw (-2.5,2) node[below, bu-red]{\footnotesize $-v^*=(-1,1)$};
		\draw[very thin, gray] (-1,-1) -- (4,4);
		\end{tikzpicture}
		\hspace{0.2in}
		\begin{tikzpicture}[scale=0.6]
		[domain=-0.5:6]
		\node at (3,7) {$D=0$};
		\draw[->] (-0.5,0) -- (6,0) node[right] {$\mathcal Y_1$};
		\draw[->] (0,-0.5) -- (0,6) node[above] {$\mathcal Y_0$};
		\draw (0,3) node[left]{$Y$};
		\draw (3,0) node[below]{$Y$};
		\draw (5,0) node[below]{$\bar y$};
		\draw[dashed, gray] (0,3) -- (3,3);
		\draw[dashed, gray] (3,0) -- (3,3);
		\draw[dashed, gray] (5,0) -- (5,3);
		\draw[ultra thick, dark-blue] (3,3) -- (5,3);
		\draw[->, thick, bu-red] (0,0) -- (-1,1);
		\draw (-2.5,2) node[below, bu-red]{\footnotesize $-v^*=(-1,1)$};
		\draw[very thin, gray] (-1,-1) -- (1,1);
		\draw[very thin, gray] (-1,-1) -- (4,4);
		\end{tikzpicture}
	\end{center}
	\caption{Support function of $(\mathcal{Y}_1 \times \mathcal{Y}_0)$ at direction $-v^*=(-1,1)'$ under MTR Assumption}
	\label{figure:supportfunction_mtr}
\end{figure}
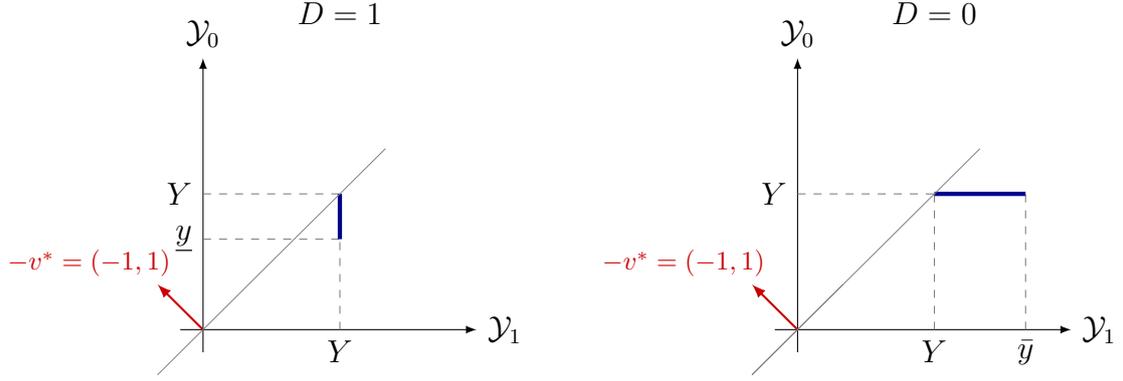

\begin{corollary}[MTR] \label{corollary:mtr}
	Suppose Assumption \ref{assumption:mtr} holds. Let $\delta: \mathcal{X} \rightarrow \{0,1\}$ and $\delta^*: \mathcal{X} \rightarrow \{0,1\}$ be treatment rules. Then, $B_I(\delta, \delta^*)$ in \eqref{equation:id} is an interval $[\beta_l, \beta_u]$ where
	\begin{align} \label{equation:low_mtr}
	\beta_l = E\big[ -\big((\eta(1,X)-\underline{y})\cdot p(X) + (\bar{y}-\eta(0,X))\cdot(1-p(X))\big)\cdot\theta_{01}(X)\big],
	\end{align}
	and
	\begin{align} \label{equation:up_mtr}
	\beta_u = E\big[\big((\eta(1,X)-\underline{y})\cdot p(X) + (\bar{y}-\eta(0,X))\cdot(1-p(X))\big)\cdot\theta_{10}(X)\big].
	\end{align}
\end{corollary}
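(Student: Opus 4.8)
The plan is to invoke Theorem \ref{theorem:main} directly: once its hypotheses are verified and the two expected support functions $\underline{\Delta}(X)=-E[s(-v^*,\mathcal{Y}_1\times\mathcal{Y}_0)|X]$ and $\bar{\Delta}(X)=E[s(v^*,\mathcal{Y}_1\times\mathcal{Y}_0)|X]$ are computed for the random set in \eqref{equation:randomset_mtr}, the claimed formulas \eqref{equation:low_mtr} and \eqref{equation:up_mtr} follow by substitution into \eqref{equation:low_general}--\eqref{equation:up_general}. First I would check that the random set in \eqref{equation:randomset_mtr} is almost surely convex and integrable: almost surely it is a bounded line segment in $\mathbb{R}^2$ (since $Y\in[\underline y,\bar y]$ a.s.), so convexity is immediate and integrability follows from boundedness of $\underline y$ and $\bar y$. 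This licenses the application of Theorem \ref{theorem:main}, so that $B_I(\delta,\delta^*)$ is the interval $[\beta_l,\beta_u]$ with endpoints given by \eqref{equation:low_general} and \eqref{equation:up_general}.

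The core of the argument is the two support-function computations, carried out separately on the events $\{D=1\}$ and $\{D=0\}$. On $\{D=1\}$ the set is $\{Y\}\times[\underline y,Y]$, so
\begin{align*}
s(v^*,\mathcal{Y}_1\times\mathcal{Y}_0) &= \sup_{y_0\in[\underline y,Y]}(Y-y_0)=Y-\underline y, \\
s(-v^*,\mathcal{Y}_1\times\mathcal{Y}_0) &= \sup_{y_0\in[\underline y,Y]}(y_0-Y)=0,
\end{align*}
while on $\{D=0\}$ the set is $[Y,\bar y]\times\{Y\}$, so $s(v^*,\mathcal{Y}_1\times\mathcal{Y}_0)=\sup_{y_1\in[Y,\bar y]}(y_1-Y)=\bar y-Y$ and $s(-v^*,\mathcal{Y}_1\times\mathcal{Y}_0)=\sup_{y_1\in[Y,\bar y]}(Y-y_1)=0$. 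Hence $s(-v^*,\mathcal{Y}_1\times\mathcal{Y}_0)=0$ identically, which gives $\underline{\Delta}(X)=0$ — this is exactly the geometric fact depicted in Figure \ref{figure:supportfunction_mtr}, namely that the supporting hyperplane in direction $-v^*$ passes through the origin irrespective of $D$. For the other direction, $s(v^*,\mathcal{Y}_1\times\mathcal{Y}_0)=(Y-\underline y)\mathbbm{1}\{D=1\}+(\bar y-Y)\mathbbm{1}\{D=0\}$, and taking conditional expectations given $X$, using $E[Y\mathbbm{1}\{D=d\}|X]=\eta(d,X)\,P(D=d|X)$ together with the definitions $\eta(d,x)=E[Y|D=d,X=x]$ and $p(x)=P(D=1|X=x)$, yields
\[
\bar{\Delta}(X)=(\eta(1,X)-\underline y)\,p(X)+(\bar y-\eta(0,X))\,(1-p(X)).
\]

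Finally I would substitute $\underline{\Delta}(X)=0$ and this expression for $\bar{\Delta}(X)$ into \eqref{equation:low_general} and \eqref{equation:up_general}: the terms containing $\underline{\Delta}$ vanish, leaving $\beta_l=E[-\bar{\Delta}(X)\,\theta_{01}(X)]$ and $\beta_u=E[\bar{\Delta}(X)\,\theta_{10}(X)]$, which are precisely \eqref{equation:low_mtr} and \eqref{equation:up_mtr}. I do not expect a genuine obstacle: the only points needing care are confirming the convexity/integrability conditions so that Theorem \ref{theorem:main} applies, and handling the conditioning correctly when passing from $s(v^*,\mathcal{Y}_1\times\mathcal{Y}_0)$ to $\bar{\Delta}(X)$. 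If a fully self-contained argument is desired, one could additionally re-derive \eqref{equation:randomset_mtr} by verifying that the selections of that random set coincide with the pairs $(Y_1,Y_0)$ satisfying $Y=Y_1D+Y_0(1-D)$, $Y_d\in[\underline y,\bar y]$, and $Y_1\ge Y_0$ a.s.
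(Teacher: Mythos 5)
Your proposal is correct and follows essentially the same route as the paper: the paper's proof applies Lemma \ref{lemma:randomset} with $Y_{L,0}=\underline y$, $Y_{U,0}=Y_{L,1}=Y$, $Y_{U,1}=\bar y$ and then plugs the resulting $\underline{\Delta}(X)=0$ and $\bar{\Delta}(X)$ into Theorem \ref{theorem:main}, whereas you compute the same support functions directly on $\{D=1\}$ and $\{D=0\}$, which amounts to re-deriving the relevant special case of that lemma. Your explicit checks of convexity and integrability are a harmless addition that the paper leaves implicit.
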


\subsection{Identification with Instruments} \label{subsection:id_iv}
Availability of additional variables, called \emph{instrumental variables}, could help us tighten the bounds on CATE and hence the bounds on the welfare gain. In this subsection, I consider two types of assumptions: (1) mean independence (IV Assumption) and (2) mean monotonicity (MIV Assumption).

\subsubsection{Mean independence}
\begin{assumption}[IV Assumption]
 \label{assumption:iv}	
	There exists an instrumental variable $Z\in\mathcal Z \subset \mathbb R^{d_z}$ such that, for $d\in\{0,1\}$, the following mean-independence holds:
	\begin{equation} \label{equation:iv}
	E[Y_d|X,Z=z]=E[Y_d|X,Z=z'],
	\end{equation}
	for all $z,z' \in \mathcal Z.$
\end{assumption}

\noindent When data are obtained from a randomized experiment with imperfect compliance, the random assignment can be used as an instrumental variable to identify the effect of the treatment. 

Suppose Assumption \ref{assumption:iv} holds. Since I am imposing an additional restriction on $(Y_1, Y_0)$, the sharp identified region of the  welfare gain is given by 
\begin{align} \label{equation:id_iv}
\begin{split}
B_I(\delta, \delta^*) \equiv &
\{\beta \in \mathbb{R}: \beta = E\big[E[Y_1-Y_0|X]\cdot(\delta(X)- \delta^*(X))\big], (Y_1, Y_0) \in \mathcal S(\mathcal{Y}_1 \times \mathcal{Y}_0), \\
& \qquad \qquad (Y_1, Y_0) \text{ satisfies Assumption \ref{assumption:iv}}\}.
\end{split}
\end{align} 
The following lemma corresponds to the Manski's sharp bounds for CATE under mean-independence assumption. \cite{manski1990nonparametric} explains it for the more general case of when there are level-set restrictions on the outcome regression. 
\begin{lemma}[IV]\label{lemma:expectedsupport_iv}
	Let $(\mathcal{Y}_1 \times \mathcal{Y}_0): \Omega \to \mathcal F$ be an integrable random set that is almost surely convex and let $(Y_1,Y_0)\in \mathcal S(\mathcal{Y}_1 \times \mathcal{Y}_0)$. Let $v_1=(1,0)'$ and $v_0=(0,1)'$. Suppose Assumption \ref{assumption:iv} holds. Then, we have
	\begin{equation}  \label{equation:iv_delta}
	\begin{split}
	&\sup_{z \in\mathcal Z} \big\{ -E[s(-v_1,\mathcal{Y}_1 \times \mathcal{Y}_0)|X, Z=z]\big\}- \inf_{z\in\mathcal Z} \big\{ E[s(v_0,\mathcal{Y}_1 \times \mathcal{Y}_0)|X, Z=z]\big\} \\
	& \qquad\qquad\qquad\qquad\qquad\qquad \leq E[Y_1-Y_0|X] \leq \\
	&  \inf_{z\in\mathcal Z} \big\{  E[s(v_1,\mathcal{Y}_1 \times \mathcal{Y}_0)|X, Z=z] \big\}- \sup_{z\in\mathcal Z}\big\{ -E[s(-v_0,\mathcal{Y}_1 \times \mathcal{Y}_0)|X, Z=z]\big\}\indent a.s.
	\end{split}
	\end{equation}
\end{lemma}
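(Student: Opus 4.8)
The plan is to obtain \eqref{equation:iv_delta} by applying Lemma \ref{lemma:expectedsupport} conditionally on the pair $(X,Z)$ rather than on $X$ alone, thereby bounding $E[Y_1\,|\,X]$ and $E[Y_0\,|\,X]$ one coordinate at a time, and then exploiting the mean-independence restriction to intersect these bounds over the values of the instrument before taking the difference. Note that the lemma asserts only the two-sided inequality (necessity); attainment of the endpoints is a separate matter treated in the characterization of $B_I$.

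First I would apply Lemma \ref{lemma:expectedsupport} with the conditioning variable taken to be $(X,Z)$; the hypotheses of that lemma (integrability and a.s.\ convexity of $\mathcal{Y}_1\times\mathcal{Y}_0$) are unaffected, and its proof goes through verbatim with $X$ replaced by $(X,Z)$. Taking $v=v_1=(1,0)'$ gives, for $P_Z$-almost every $z$,
\begin{equation*}
-E[s(-v_1,\mathcal{Y}_1\times\mathcal{Y}_0)\,|\,X,Z=z]\ \le\ E[Y_1\,|\,X,Z=z]\ \le\ E[s(v_1,\mathcal{Y}_1\times\mathcal{Y}_0)\,|\,X,Z=z]\quad a.s.,
\end{equation*}
and taking $v=v_0=(0,1)'$ gives the analogous two-sided bound on $E[Y_0\,|\,X,Z=z]$. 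Now I would invoke Assumption \ref{assumption:iv}: since $E[Y_d\,|\,X,Z=z]$ does not depend on $z$, it equals $E[Y_d\,|\,X]$, so $E[Y_1\,|\,X]$ must lie between $-E[s(-v_1,\mathcal{Y}_1\times\mathcal{Y}_0)\,|\,X,Z=z]$ and $E[s(v_1,\mathcal{Y}_1\times\mathcal{Y}_0)\,|\,X,Z=z]$ simultaneously for every $z$, and similarly for $E[Y_0\,|\,X]$. Intersecting over $z$ yields
\begin{align*}
\sup_{z\in\mathcal Z}\big\{-E[s(-v_1,\mathcal{Y}_1\times\mathcal{Y}_0)\,|\,X,Z=z]\big\}\ &\le\ E[Y_1\,|\,X]\ \le\ \inf_{z\in\mathcal Z}\big\{E[s(v_1,\mathcal{Y}_1\times\mathcal{Y}_0)\,|\,X,Z=z]\big\},\\
\sup_{z\in\mathcal Z}\big\{-E[s(-v_0,\mathcal{Y}_1\times\mathcal{Y}_0)\,|\,X,Z=z]\big\}\ &\le\ E[Y_0\,|\,X]\ \le\ \inf_{z\in\mathcal Z}\big\{E[s(v_0,\mathcal{Y}_1\times\mathcal{Y}_0)\,|\,X,Z=z]\big\}.
\end{align*}
The final step is to write $E[Y_1-Y_0\,|\,X]=E[Y_1\,|\,X]-E[Y_0\,|\,X]$ and combine: pairing the lower bound on $E[Y_1\,|\,X]$ with the upper bound on $E[Y_0\,|\,X]$ delivers the claimed lower bound, and pairing the upper bound on $E[Y_1\,|\,X]$ with the lower bound on $E[Y_0\,|\,X]$ delivers the claimed upper bound, which is exactly \eqref{equation:iv_delta}.

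The steps above are largely bookkeeping, and the one genuinely delicate point is measurability. Lemma \ref{lemma:expectedsupport} applied to $(X,Z)$ yields its inequalities only up to null sets, so ``for every $z$'' has to be read as ``for $P_Z$-almost every $z$ in the support of $Z$'', and one must check that $z\mapsto E[s(\pm v_d,\mathcal{Y}_1\times\mathcal{Y}_0)\,|\,X,Z=z]$ admits a jointly measurable version, so that the suprema and infima over $\mathcal Z$ are genuine measurable functions of $X$; this is handled by passing to regular conditional distributions and selecting suitable versions, exactly as in the unconditional Lemma \ref{lemma:expectedsupport}. It is also worth remarking why one bounds $E[Y_1\,|\,X]$ and $E[Y_0\,|\,X]$ separately rather than bounding $E[Y_1-Y_0\,|\,X]$ directly through the direction $v^*=(1,-1)'$: Assumption \ref{assumption:iv} restricts the conditional mean of \emph{each} potential outcome, and intersecting the two single-coordinate bounds over $z$ before differencing is weakly tighter than intersecting a bound on the difference, which is precisely where the instrument's identifying power enters.
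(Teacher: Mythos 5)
Your proposal is correct and follows essentially the same route as the paper's proof: apply Lemma \ref{lemma:expectedsupport} with conditioning on $(X,Z)$, use Assumption \ref{assumption:iv} to replace $E[Y_d\,|\,X,Z=z]$ by $E[Y_d\,|\,X]$, intersect the resulting bounds over $z$, and then difference the coordinate-wise bounds on $E[Y_1\,|\,X]$ and $E[Y_0\,|\,X]$. Your added remarks on measurability and on why one intersects coordinate-wise rather than in the direction $v^*=(1,-1)'$ are sensible elaborations of points the paper leaves implicit, but they do not change the substance of the argument.
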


Bounds for CATE with instrumental variables involve expected support functions at directions $v_1=(1,0)$ and $v_0=(0,1)$. The support function of the random set $(\mathcal{Y}_1 \times \mathcal{Y}_0)$ at direction $v_1=(1,0)$ under worst-case is depicted in Figure \ref{figure:supportfunction_10}.

\begin{figure}[htbp] 
	\begin{center}
		\begin{tikzpicture}[scale=0.8]
		[domain=-0.5:5.5]	
		\node at (3,6) {$D=1$};
		\draw[->] (-0.5,0) -- (5.5,0) node[right] {$\mathcal Y_1$};
		\draw[->] (0,-0.5) -- (0,5.5) node[above] {$\mathcal Y_0$};
		\draw[->, very thick, bu-red] (0,0) -- (1,0);
		\node[bu-red] at (1,-0.5) {\scriptsize $v_1=(1,0)$};
		\draw[very thin, gray] (3,0) -- (3,5);
		\draw[very thin, white] (0,-2) -- (5,3);
		\draw[->, thick, dashed,thick, dark-blue] (0,3) -- (3,3);
		\draw (3,0) node[below]{$Y$};
		\draw (0,2) node[left]{$\underline y$};
		\draw (0,5) node[left]{$\bar y$};
		\draw[dashed, gray] (0,2) -- (3,2);
		\draw[dashed, gray] (0,5) -- (3,5);
		\draw[dashed, gray] (3,0) -- (3,5);
		\draw[ultra thick, dark-blue] (3,2) -- (3,5);
		\end{tikzpicture}
		\hspace{0.2in}
		\begin{tikzpicture}[scale=0.8]
		[domain=-0.5:5.5]
		\node at (3,6) {$D=0$};
		\draw[->] (-0.5,0) -- (5.5,0) node[right] {$\mathcal Y_1$};
		\draw[->] (0,-0.5) -- (0,5.5) node[above] {$\mathcal Y_0$};
		\draw[->, very thick, bu-red] (0,0) -- (1,0);
		\node[bu-red] at (1,-0.5) {\scriptsize $v_1=(1,0)$};
		\draw[very thin] (0,0) -- (0,4);
		\draw[very thin, gray] (5,0) -- (5,5);
		\draw[very thin, white] (0,-2) -- (5,3);
		\draw[->, dashed, thick, dark-blue] (0,1) -- (5,1);
		\draw (0,3) node[left]{$Y$};
		\draw (2,0) node[below]{$\underline y$};
		\draw (5,0) node[below]{$\bar y$};
		\draw[dashed, gray] (0,3) -- (2,3);
		\draw[dashed, gray] (2,0) -- (2,3);
		\draw[dashed, gray] (5,0) -- (5,3);
		\draw[ultra thick, dark-blue] (2,3) -- (5,3);
		\end{tikzpicture}
	\end{center}
\caption{Support function of $(\mathcal{Y}_1 \times \mathcal{Y}_0)$ at direction $v_1=(1,0)$ under worst-case}
\label{figure:supportfunction_10}
\end{figure}
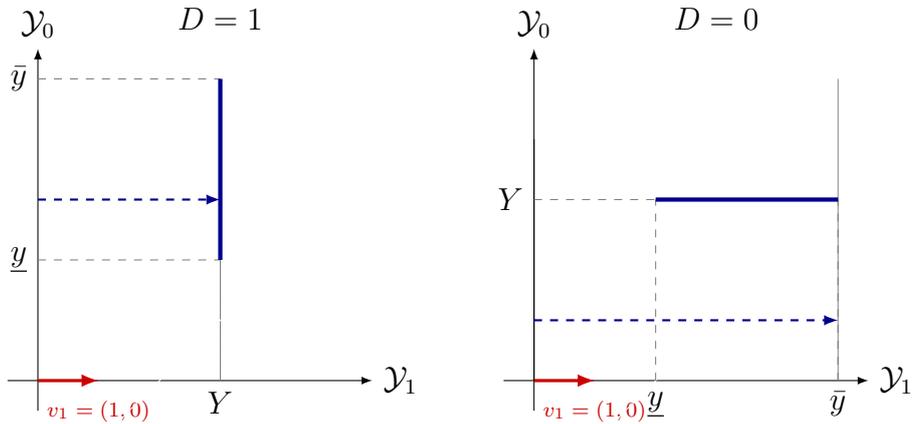

\begin{theorem}[IV]  \label{theorem:iv}
	Suppose $(\mathcal{Y}_1 \times \mathcal{Y}_0): \Omega \to \mathcal F$ is an integrable random set that is almost surely convex.  Let $\delta: \mathcal{X} \rightarrow \{0,1\}$ and $\delta^*: \mathcal{X} \rightarrow \{0,1\}$ be treatment rules. Also, let $v_1=(1,0)'$ and $v_0=(0,1)'$. Then, $B_I(\delta, \delta^*)$ in \eqref{equation:id_iv} is an interval $[\beta_l, \beta_u]$ where
	\begin{equation} \label{lower}
	\beta_l = E\big[\underline{\Delta}(X) \cdot \theta_{10}(X) - \bar{\Delta}(X) \cdot \theta_{01}(X)\big],
	\end{equation}
	and
	\begin{equation} \label{upper}
	\beta_u = E\big[\bar{\Delta}(X)\cdot \theta_{10}(X) -\underline{\Delta}(X)\cdot \theta_{01}(X)\big],
	\end{equation}
	where $\underline{\Delta}(X)\equiv \sup_{z \in\mathcal Z} \big\{ -E[s(-v_1,\mathcal{Y}_1 \times \mathcal{Y}_0)|X, Z=z]\big\}- \inf_{z\in\mathcal Z} \big\{ E[s(v_0,\mathcal{Y}_1 \times \mathcal{Y}_0)|X, Z=z]\big\}$ and $\bar{\Delta}(X) \equiv \inf_{z\in\mathcal Z} \big\{  E[s(v_1,\mathcal{Y}_1 \times \mathcal{Y}_0)|X, Z=z] \big\}- \sup_{z\in\mathcal Z}\big\{ -E[s(-v_0,\mathcal{Y}_1 \times \mathcal{Y}_0)|X, Z=z]\big\}$.
\end{theorem}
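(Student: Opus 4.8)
The plan is to mirror the two-step proof of Theorem \ref{theorem:main}, carrying the mean-independence restriction of Assumption \ref{assumption:iv} through both steps: first show $B_I(\delta,\delta^*) \subseteq [\beta_l,\beta_u]$ using Lemma \ref{lemma:expectedsupport_iv}, then establish sharpness by constructing IV-valid selections that attain each endpoint, and finally conclude that the whole interval is covered by a convexity argument.

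For the inclusion, I would first note the pointwise identity $\delta(X) - \delta^*(X) = \theta_{10}(X) - \theta_{01}(X)$, so that every candidate value can be written $\beta = E[\Delta(X)\,\theta_{10}(X)] - E[\Delta(X)\,\theta_{01}(X)]$ with $\Delta(X) = E[Y_1 - Y_0 \mid X]$. If the selection $(Y_1,Y_0)$ also satisfies Assumption \ref{assumption:iv}, Lemma \ref{lemma:expectedsupport_iv} gives $\underline{\Delta}(X) \le \Delta(X) \le \bar{\Delta}(X)$ a.s.; since $\theta_{10}$ and $\theta_{01}$ are nonnegative with disjoint supports, this yields $\beta_l \le \beta \le \beta_u$, hence $B_I(\delta,\delta^*) \subseteq [\beta_l,\beta_u]$. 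Next I would argue $B_I(\delta,\delta^*)$ is convex: the set of laws of $(Y,D,X,Z,Y_1,Y_0)$ that keep the observed law of $(Y,D,X,Z)$ (and hence the random set $\mathcal Y_1 \times \mathcal Y_0$) fixed, make $(Y_1,Y_0)$ a selection, and satisfy Assumption \ref{assumption:iv} is convex — mixing two such laws with fixed weights keeps $(Y_1,Y_0)$ a.s. in the random set, and because the observable marginal is held fixed the mixed conditional means $E[Y_d \mid X,Z=z]$ are the fixed-weight averages of the originals and so remain constant in $z$. Since $\beta = E[(Y_1-Y_0)(\theta_{10}(X)-\theta_{01}(X))]$ is a linear functional of this law, $B_I(\delta,\delta^*)$ is an interval, and it then suffices to exhibit selections attaining $\beta_l$ and $\beta_u$.

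The main work — and the main obstacle — is the sharpness construction. For $\beta_l$ I would build, working conditionally on $X=x$ and measurably in $x$, a law for $(Y_1,Y_0)$ given $(Y,D,Z,X=x)$ that is a.s. in $\mathcal Y_1 \times \mathcal Y_0$ and whose conditional means are the $z$-independent constants $E[Y_1 \mid X=x,Z=z] \equiv \sup_{z'}\{-E[s(-v_1,\mathcal Y_1\times\mathcal Y_0)\mid X=x,Z=z']\}$ and $E[Y_0 \mid X=x,Z=z] \equiv \inf_{z'} E[s(v_0,\mathcal Y_1\times\mathcal Y_0)\mid X=x,Z=z']$. This is feasible because, by Lemma \ref{lemma:expectedsupport} applied conditionally on $\{Z=z\}$, each target constant lies in the interval the corresponding coordinate of the selection expectation can take at that $z$ — between $-E[s(-v_1,\cdot)\mid x,z]$ and $E[s(v_1,\cdot)\mid x,z]$ for $Y_1$, symmetrically for $Y_0$ — by the very definition of the sup and inf, provided these one-sided ranges are nonempty, which is exactly the requirement that Assumption \ref{assumption:iv} not contradict the data. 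This selection has $E[Y_1-Y_0\mid X=x] = \underline{\Delta}(x)$; using it on $\{\theta_{10}=1\}$ and the analogous construction (targeting $\inf_{z'}E[s(v_1,\cdot)\mid x,z']$ for $Y_1$ and $\sup_{z'}\{-E[s(-v_0,\cdot)\mid x,z']\}$ for $Y_0$) on $\{\theta_{01}=1\}$ gives a single IV-valid selection with value $\beta_l$; the argument for $\beta_u$ is symmetric, swapping the roles of the two constructions.

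The delicate points I expect to spend the most effort on are: (i) patching the $x$-wise constructions into an honest random vector on $\Omega$, which I would handle with a measurable selection theorem exactly as in the proof of Theorem \ref{theorem:main}; (ii) verifying that the target corner $(E[Y_1\mid x,z],E[Y_0\mid x,z])$ is attainable inside the two-dimensional selection expectation $E[\mathcal Y_1\times\mathcal Y_0 \mid X=x,Z=z]$ and not merely in its coordinate projections — which uses the product structure of $\mathcal Y_1 \times \mathcal Y_0$ present in the maintained restrictions — so that $\underline{\Delta}$ and $\bar{\Delta}$ as defined really are the sharp conditional-effect bounds; and (iii) handling degenerate regions such as $P(D=d\mid X=x,Z=z)=0$ and checking the displayed sup/inf are finite so all objects are well defined.
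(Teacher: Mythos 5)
Your proposal is correct, and its first step is exactly the paper's argument: invoke Lemma \ref{lemma:expectedsupport_iv} to get $\underline{\Delta}(X)\le\Delta(X)\le\bar{\Delta}(X)$ a.s.\ for any selection satisfying Assumption \ref{assumption:iv}, then run the same two-case decomposition on $\delta(X)-\delta^*(X)\in\{-1,0,1\}$ as in Theorem \ref{theorem:main} to conclude $B_I(\delta,\delta^*)\subseteq[\beta_l,\beta_u]$. (Incidentally, your version is cleaner than the paper's own display, which states the bound in terms of $v^*$ even though the theorem defines $\underline{\Delta}$ and $\bar{\Delta}$ via $v_1$ and $v_0$.) Where you genuinely depart from the paper is that the paper stops there: its proof of Theorem \ref{theorem:iv}, like that of Theorem \ref{theorem:main}, only establishes the inclusion and never verifies that $\beta_l$ and $\beta_u$ are attained or that $B_I$ is convex, even though the stated conclusion is the equality $B_I=[\beta_l,\beta_u]$. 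Your endpoint construction (a $z$-constant conditional mean for each potential outcome, set to the binding one-sided envelope, realizable coordinate-wise because $\mathcal{Y}_1\times\mathcal{Y}_0$ is a product and hence its conditional Aumann expectation is a product of intervals) together with the mixture argument for convexity supplies exactly the missing half, so your proof is strictly more complete. Your caveat about feasibility is also substantive rather than cosmetic: if the intervals $[-E[s(-v_1,\cdot)\,|\,X,Z=z],\,E[s(v_1,\cdot)\,|\,X,Z=z]]$ have empty intersection over $z$ (i.e.\ the IV assumption is refuted), then $B_I$ is empty while the theorem's formulas can return $\beta_l>\beta_u$; the paper implicitly assumes non-refutation and your proof makes that hypothesis visible. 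The one small slip is attributing a measurable-selection step to the proof of Theorem \ref{theorem:main} — no such step appears there, since the paper never constructs attaining selections — but the tool you name is the right one for the job.
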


Identification of the welfare gain with instruments is similar to idenfication without instruments. The difference lies in the forms of lower and upper bounds on the CATE. Theorem \ref{theorem:iv} can be combined with different maintained assumptions on the potential outcomes to result in different bounds. Corollary \ref{corollary:iv_worstcase} shows the IV bounds under worst-case assumption and Corollary \ref{corollary:iv_mtr} shows the IV bounds under MTR assumption. To ease notation, let $\eta(d,x,z)\equiv E[Y|D=d, X=x, Z=z]$ for $d\in\{0,1\}$ denote the conditional mean treatment responses and $p(x,z) \equiv P(D=1|X=x, Z=z)$ denote the propensity score.

\begin{corollary}[IV-worst case]\label{corollary:iv_worstcase}
	Let $(\mathcal{Y}_1 \times \mathcal{Y}_0)$ be a random set in \eqref{equation:randomset_worstcase}. Let $\delta: \mathcal{X} \rightarrow \{0,1\}$ and $\delta^*: \mathcal{X} \rightarrow \{0,1\}$ be treatment rules. Then, $B_I(\delta, \delta^*)$ in \eqref{equation:id_iv} is an interval $[\beta_l, \beta_u]$ where
	\begin{align} \label{equation:low_manski-iv_worstcase}
	\begin{split}
	\beta_l & = E\big[\big(\sup_{z\in\mathcal Z} \big\{ \eta(1, X, z)\cdot p(X, z) + \underline{y}\cdot(1-p(X, z))\big\} \\
	& \qquad\qquad- \inf_{z\in\mathcal Z} \big\{\bar{y}\cdot p(X,z) + \eta(0, X,z)\cdot(1-p(X,z))\big\}\big) \cdot \theta_{10}(X) \\
	& \qquad- \big(\inf_{z\in\mathcal Z} \big\{\eta(1, X, z)\cdot p(X, z) + \bar{y}\cdot(1-p(X,z))\big\} \\
	& \qquad\qquad- \sup_{z\in\mathcal Z} \big\{\underline{y}\cdot p(X,z) + \eta(0, X,z)\cdot(1-p(X,z))\big\}\big) \cdot \theta_{01}(X)\big],
	\end{split}
	\end{align}
	and
	\begin{align} \label{equation:up_manski-iv_worstcase}
	\begin{split}
	\beta_u & = E\big[\big(\inf_{z\in\mathcal Z} \big\{\eta(1, X, z)\cdot p(X, z) + \bar{y}\cdot(1-p(X,z))\big\} \\
	& \qquad\qquad- \sup_{z\in\mathcal Z} \big\{\underline{y}\cdot p(X,z) + \eta(0, X,z)\cdot(1-p(X,z))\big\}\big)\cdot \theta_{10}(X) \\
	& \qquad- \big(\sup_{z\in\mathcal Z} \big\{ \eta(1, X, z)\cdot p(X, z) + \underline{y}\cdot(1-p(X, z))\big\} \\
	& \qquad \qquad - \inf_{z\in\mathcal Z} \big\{\bar{y}\cdot p(X,z) + \eta(0, X,z)\cdot(1-p(X,z))\big\}\big) \cdot \theta_{01}(X)\big].
	\end{split}
	\end{align}	
\end{corollary}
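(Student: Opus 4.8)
The plan is to specialize Theorem~\ref{theorem:iv} to the worst-case random set~\eqref{equation:randomset_worstcase}, so the only work is evaluating the four expected support functions that enter $\underline{\Delta}(X)$ and $\bar{\Delta}(X)$. First I would verify the hypotheses of Theorem~\ref{theorem:iv}: for each realization of $D$ the set in~\eqref{equation:randomset_worstcase} is a singleton crossed with the compact interval $[\underline{y},\bar{y}]$ (or that interval crossed with a singleton), hence almost surely convex, and it is contained in the fixed compact box $[\underline{y},\bar{y}]^2$, hence integrable. Theorem~\ref{theorem:iv} then yields $B_I(\delta,\delta^*)=[\beta_l,\beta_u]$ with $\underline{\Delta},\bar{\Delta}$ as in its statement, and it remains to compute these for the present random set.

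Second, I would evaluate $s(v,\mathcal{Y}_1 \times \mathcal{Y}_0)=\sup_{x\in\mathcal{Y}_1 \times \mathcal{Y}_0}\langle v,x\rangle$ for $v\in\{v_1,-v_1,v_0,-v_0\}$ by splitting on the value of $D$. For $v_1=(1,0)'$ this equals $Y$ when $D=1$ and $\bar{y}$ when $D=0$; for $-v_1$ it equals $-Y$ when $D=1$ and $-\underline{y}$ when $D=0$; for $v_0=(0,1)'$ it equals $\bar{y}$ when $D=1$ and $Y$ when $D=0$; for $-v_0$ it equals $-\underline{y}$ when $D=1$ and $-Y$ when $D=0$. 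Taking conditional expectations given $(X,Z=z)$ and using $E[YD\mid X,Z=z]=\eta(1,X,z)p(X,z)$, $E[Y(1-D)\mid X,Z=z]=\eta(0,X,z)(1-p(X,z))$ and $P(D=1\mid X,Z=z)=p(X,z)$ gives
\begin{align*}
-E[s(-v_1,\mathcal{Y}_1 \times \mathcal{Y}_0)\mid X,Z=z] &= \eta(1,X,z)p(X,z)+\underline{y}(1-p(X,z)),\\
E[s(v_1,\mathcal{Y}_1 \times \mathcal{Y}_0)\mid X,Z=z] &= \eta(1,X,z)p(X,z)+\bar{y}(1-p(X,z)),\\
-E[s(-v_0,\mathcal{Y}_1 \times \mathcal{Y}_0)\mid X,Z=z] &= \underline{y}\,p(X,z)+\eta(0,X,z)(1-p(X,z)),\\
E[s(v_0,\mathcal{Y}_1 \times \mathcal{Y}_0)\mid X,Z=z] &= \bar{y}\,p(X,z)+\eta(0,X,z)(1-p(X,z)).
\end{align*}

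Third, I would substitute into the formulas for $\underline{\Delta}(X)$ and $\bar{\Delta}(X)$ in Theorem~\ref{theorem:iv}: there $\sup_{z\in\mathcal Z}$ is applied to the two terms of the form $-E[s(-v,\mathcal{Y}_1 \times \mathcal{Y}_0)\mid X,Z=z]$ (the $z$-wise lower bounds on the $E[Y_1\mid X]$ and $E[Y_0\mid X]$ components) and $\inf_{z\in\mathcal Z}$ to the two of the form $E[s(v,\mathcal{Y}_1 \times \mathcal{Y}_0)\mid X,Z=z]$ (the $z$-wise upper bounds). Carrying the displayed identities through reproduces exactly the bracketed expressions multiplying $\theta_{10}(X)$ and $\theta_{01}(X)$ in~\eqref{equation:low_manski-iv_worstcase}--\eqref{equation:up_manski-iv_worstcase}, and plugging the resulting $\underline{\Delta}(X),\bar{\Delta}(X)$ into $\beta_l=E[\underline{\Delta}(X)\theta_{10}(X)-\bar{\Delta}(X)\theta_{01}(X)]$ and $\beta_u=E[\bar{\Delta}(X)\theta_{10}(X)-\underline{\Delta}(X)\theta_{01}(X)]$ finishes the proof.

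I do not anticipate a genuine obstacle; the argument is a direct computation once Theorem~\ref{theorem:iv} is available. The only step requiring care is the sign bookkeeping in the four support-function evaluations — the directions $-v_1$ and $-v_0$ flip the relevant coordinate, so the supremum over the box is attained at the corner $\underline{y}$ rather than at $\bar{y}$ — together with keeping straight which $z$-extremum (sup or inf) belongs to which bound, which is in any case dictated by the form of $\underline{\Delta}$ and $\bar{\Delta}$ in Theorem~\ref{theorem:iv}.
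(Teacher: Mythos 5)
Your proposal is correct and follows essentially the same route as the paper: specialize Theorem~\ref{theorem:iv} by evaluating the expected support functions of the worst-case random set conditional on $(X,Z=z)$ and substituting into $\underline{\Delta}(X)$ and $\bar{\Delta}(X)$. The only cosmetic difference is that you compute the four directional evaluations at $v_1,-v_1,v_0,-v_0$ from scratch, whereas the paper obtains them by conditioning its general Lemma~\ref{lemma:randomset} on $Z$ and setting $Y_{L,1}=Y_{L,0}=\underline y$, $Y_{U,1}=Y_{U,0}=\bar y$; your bookkeeping matches the form of $\underline{\Delta}$ and $\bar{\Delta}$ in Theorem~\ref{theorem:iv} exactly.
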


\begin{corollary}[IV-MTR]\label{corollary:iv_mtr}
	Suppose Assumption \ref{assumption:mtr} holds. Let $\delta: \mathcal{X} \rightarrow \{0,1\}$ and $\delta^*: \mathcal{X} \rightarrow \{0,1\}$ be treatment rules. Then, $B_I(\delta, \delta^*)$ in \eqref{equation:id_iv} is an interval $[\beta_l, \beta_u]$ where	
	\begin{align} \label{equation:low_manski-iv_mtr}
	\begin{split}
	\beta_l & = E\big[-\big(\inf_{z\in\mathcal Z} \big\{\eta(1, X, z)\cdot p(X, z) + \bar{y}\cdot(1-p(X,z))\big\} \\
	& \qquad\qquad- \sup_{z\in\mathcal Z} \big\{\underline{y}\cdot p(X,z) + \eta(0, X,z)\cdot(1-p(X,z))\big\}\big) \cdot \theta_{01}(X)\big],
	\end{split}
	\end{align}
	and
	\begin{align} \label{equation:up_manski-iv_mtr}
	\begin{split}
	\beta_u & = E\big[\big(\inf_{z\in\mathcal Z} \big\{\eta(1, X, z)\cdot p(X, z) + \bar{y}\cdot(1-p(X,z))\big\} \\
	& \qquad\qquad- \sup_{z\in\mathcal Z} \big\{\underline{y}\cdot p(X,z) + \eta(0, X,z)\cdot(1-p(X,z))\big\}\big)\cdot \theta_{10}(X)\big].
	\end{split}
	\end{align}	
\end{corollary}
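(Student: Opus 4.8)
The plan is to derive Corollary \ref{corollary:iv_mtr} as a direct specialization of Theorem \ref{theorem:iv} to the random set \eqref{equation:randomset_mtr}. Theorem \ref{theorem:iv} already does the conceptual work: it reduces the identified region of the welfare gain to an interval whose endpoints are built from $\underline\Delta(X)$ and $\bar\Delta(X)$, which are in turn assembled, via $\sup_z$ and $\inf_z$ over $z\in\mathcal Z$, from the four conditional expected support functions $E[s(v_1,\mathcal Y_1\times\mathcal Y_0)\mid X,Z=z]$, $E[s(-v_1,\mathcal Y_1\times\mathcal Y_0)\mid X,Z=z]$, $E[s(v_0,\mathcal Y_1\times\mathcal Y_0)\mid X,Z=z]$, $E[s(-v_0,\mathcal Y_1\times\mathcal Y_0)\mid X,Z=z]$, with $v_1=(1,0)'$ and $v_0=(0,1)'$. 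Since Assumption \ref{assumption:mtr} enters only through the shape of the random set, and \eqref{equation:randomset_mtr} is almost surely a convex, integrable line segment in $\mathbb R^2$, all that is left to do is to evaluate these four quantities in closed form.

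First I would compute the support functions pointwise in $\omega$. On $\{D=1\}$ the set is $\{Y\}\times[\underline y,Y]$ and on $\{D=0\}$ it is $[Y,\bar y]\times\{Y\}$; for a segment, $s(v,\cdot)$ is just $\langle v,\cdot\rangle$ evaluated at whichever endpoint is extreme in direction $v$, which can be read off directly (as in Figure \ref{figure:randomset_mtr}). This gives $s(v_1,\cdot)=YD+\bar y(1-D)$, $s(-v_1,\cdot)=-Y$, $s(v_0,\cdot)=Y$, and $s(-v_0,\cdot)=-\underline y\,D-Y(1-D)$. I would then take conditional expectations given $(X,Z=z)$ using $E[D\mid X,Z=z]=p(X,z)$, $E[YD\mid X,Z=z]=\eta(1,X,z)p(X,z)$, and $E[Y(1-D)\mid X,Z=z]=\eta(0,X,z)(1-p(X,z))$, obtaining $E[s(v_1,\cdot)\mid X,Z=z]=\eta(1,X,z)p(X,z)+\bar y(1-p(X,z))$, $-E[s(-v_0,\cdot)\mid X,Z=z]=\underline y\,p(X,z)+\eta(0,X,z)(1-p(X,z))$, and $-E[s(-v_1,\cdot)\mid X,Z=z]=E[s(v_0,\cdot)\mid X,Z=z]=E[Y\mid X,Z=z]$.

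Plugging these into the definitions of $\bar\Delta(X)$ and $\underline\Delta(X)$ from Theorem \ref{theorem:iv} and taking the indicated $\inf_z$ and $\sup_z$ produces $\bar\Delta(X)=\inf_{z}\{\eta(1,X,z)p(X,z)+\bar y(1-p(X,z))\}-\sup_{z}\{\underline y\,p(X,z)+\eta(0,X,z)(1-p(X,z))\}$, which is exactly the bracketed expression appearing in \eqref{equation:low_manski-iv_mtr}--\eqref{equation:up_manski-iv_mtr}; substituting $\bar\Delta$ together with the simplified $\underline\Delta$ into \eqref{lower}--\eqref{upper} then yields the stated formulas. Measurability, integrability and convexity of \eqref{equation:randomset_mtr}, and the interval/sharpness conclusion are all inherited from Theorem \ref{theorem:iv} once the random set is fixed, so nothing additional must be argued there.

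The step I expect to be the main obstacle is pinning down $\underline\Delta(X)$ under MTR. Since $-E[s(-v_1,\cdot)\mid X,Z=z]$ and $E[s(v_0,\cdot)\mid X,Z=z]$ both collapse to $E[Y\mid X,Z=z]$, the lower CATE bound reduces to $\sup_z E[Y\mid X,Z=z]-\inf_z E[Y\mid X,Z=z]$, and I would need to verify carefully that this residual does not contribute to the reported endpoints — that it vanishes, as it does in the instrument-free Corollary \ref{corollary:mtr} (whose geometry is drawn in Figure \ref{figure:supportfunction_mtr}) — which is what the stated formulas require. Once that simplification is justified, everything else is routine bookkeeping built on Theorem \ref{theorem:iv}.
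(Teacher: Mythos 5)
Your overall strategy --- specialize Theorem \ref{theorem:iv} by evaluating the directional expected support functions of the MTR random set \eqref{equation:randomset_mtr} conditional on $(X,Z)$ --- is the same as the paper's, and your pointwise computations of $s(v_1,\cdot)$, $s(-v_1,\cdot)$, $s(v_0,\cdot)$, $s(-v_0,\cdot)$ and the resulting expression for $\bar\Delta(X)$ are correct. However, the step you flag as the ``main obstacle'' is a genuine gap, and the way you propose to close it cannot work. Plugging your (correct) expressions into the definition of $\underline\Delta(X)$ in Theorem \ref{theorem:iv} gives $\underline\Delta(X)=\sup_{z}E[Y|X,Z=z]-\inf_{z}E[Y|X,Z=z]$, and this quantity does \emph{not} vanish in general: Assumption \ref{assumption:iv} makes $E[Y_d|X,Z]$ constant in $Z$, but the \emph{observed} outcome $Y=Y_1D+Y_0(1-D)$ still depends on $Z$ through $D$ --- the instrument is useful precisely because it shifts $D$ --- so $E[Y|X,Z=z]=\eta(1,X,z)p(X,z)+\eta(0,X,z)(1-p(X,z))$ varies with $z$ whenever the instrument is relevant and the two arms have different conditional means. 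The analogy with the instrument-free Corollary \ref{corollary:mtr} does not transfer, because there the zero lower bound comes from the direction $-v^*$, not from differencing marginal bounds on $E[Y_1|X]$ and $E[Y_0|X]$.

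The paper closes this step differently: it evaluates the expected support function of \eqref{equation:randomset_mtr} in the single direction $-v^*=(-1,1)'$. Under Assumption \ref{assumption:mtr} the random set always contains the point $(Y,Y)$ and is contained in the half-plane $\{y_1\geq y_0\}$, so $s(-v^*,\mathcal Y_1\times\mathcal Y_0)=\sup\,(y_0-y_1)=0$ pointwise, hence $-E[s(-v^*,\mathcal Y_1\times\mathcal Y_0)|X,Z=z]=0$ for every $z$ and the lower CATE bound is $\sup_z 0=0$, which is what the stated formulas implicitly use. Note, incidentally, that your route --- if carried through honestly rather than forced to zero --- yields a \emph{weakly tighter} valid lower bound on the CATE, namely $\sup_z E[Y|X,Z=z]-\inf_z E[Y|X,Z=z]\geq 0$ (under MTR, $Y_1\geq Y\geq Y_0$ pointwise, and mean independence lets you intersect the bounds on $E[Y_1|X]$ and $E[Y_0|X]$ separately across $z$, exactly as the paper itself does in the MIV--MTR case of Corollary \ref{corollary:miv_mtr}); this would sharpen $\beta_l$ in \eqref{equation:low_manski-iv_mtr} and $\beta_u$ in \eqref{equation:up_manski-iv_mtr} by the corresponding $\theta_{10}$ and $\theta_{01}$ terms rather than reproduce them. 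To prove the corollary as stated, you must replace the ``residual vanishes'' claim by the $-v^*$ support-function calculation above.
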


Bounds obtained with instruments are functions of $\eta(1,x,z)$, $\eta(0,x,z)$ and $p(x,z)$ and involve taking intersections across values of $Z$. If $Z$ is continuous, this would amount to infinitely many intersections. However, bounds can be simplified in some empirically relevant cases such as the following. 

\begin{assumption}[Binary IV with monotonic first-step] \label{assumption:binary_iv}
Suppose $Z\in\{0,1\}$ is a binary instrumental variable that satisfies Assumption \ref{assumption:iv}. Suppose further that for all $x \in \mathcal X$,
\begin{align}\label{equation:simulation_iv1}
p(x,1)=P(D=1|X=x, Z=1) \geq P(D=1|X=x, Z=0) = p(x,0).
\end{align}
\end{assumption}
When $Z\in\{0,1\}$ is random offer and $D\in\{0,1\}$ is program participation, this means that someone who received an offer to participate in the program is more likely to participate in the program than someone who didn't receive an offer.

\begin{lemma} \label{lemma:binary_iv}
Suppose Assumption \ref{assumption:binary_iv} holds. Then,
\begin{align}
1 & = \arg\max_{z\in\{0,1\}} \big\{ \eta(1, X, z)\cdot p(X, z) + \underline{y}\cdot(1-p(X, z))\big\}, \\
0 & = \arg\min_{z\in\{0,1\}} \big\{\bar{y}\cdot p(X,z) + \eta(0, X,z)\cdot(1-p(X,z))\big\}, \\
1 & =  \arg\min_{z\in\{0,1\}} \big\{\eta(1, X, z)\cdot p(X, z) + \bar{y}\cdot(1-p(X,z))\big\}, \\
0 &= \arg\max_{z\in\{0,1\}} \big\{\underline{y}\cdot p(X,z) + \eta(0, X,z)\cdot(1-p(X,z))\big\}.
\end{align}
\end{lemma}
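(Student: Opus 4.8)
The plan is to verify each of the four $\arg\max$/$\arg\min$ claims directly by comparing the value of the relevant objective at $z=1$ versus $z=0$, exploiting the structure that each objective is an affine function of the propensity score $p(X,z)$ with slope determined by the conditional mean response and the bounds $\underline{y},\bar{y}$. Since $Z$ is binary, in each case the optimization is over the two points $\{0,1\}$, so it suffices to sign a single difference. The key observation is that, under Assumption \ref{assumption:iv} (mean independence), for each $d\in\{0,1\}$ the quantity $E[Y_d\mid X,Z=z]$ does not depend on $z$; combined with the identity $E[Y_d\mid X,Z=z] = \eta(d,X,z)\cdot\mathbbm 1\{d=1\}p(X,z)+\dots$ — more precisely the worst-case selection bounds give, for any selection consistent with the random set \eqref{equation:randomset_worstcase},
\begin{equation*}
\eta(1,X,z)p(X,z)+\underline{y}(1-p(X,z)) \;\le\; E[Y_1\mid X,Z=z] \;\le\; \eta(1,X,z)p(X,z)+\bar{y}(1-p(X,z)),
\end{equation*}
and analogously for $Y_0$ — so the objectives in the lemma are exactly the selection-expectation support-function expressions appearing in Corollary \ref{corollary:iv_worstcase}.

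First I would treat the first claim. Fix $X=x$. The map $z\mapsto \eta(1,x,z)p(x,z)+\underline{y}(1-p(x,z))$ is a lower bound on $E[Y_1\mid X=x,Z=z]$, which by Assumption \ref{assumption:iv} is constant in $z$; the largest such lower bound over $z\in\{0,1\}$ is the sharp one, but what the lemma asserts is the weaker, purely algebraic fact that the maximizer is $z=1$. To see this, write the objective as $\underline{y} + p(x,z)\big(\eta(1,x,z)-\underline{y}\big)$. Since $\eta(1,x,z)=E[Y\mid D=1,X=x,Z=z]\ge \underline{y}$ (the outcome is supported on $[\underline y,\bar y]$), the coefficient $\eta(1,x,z)-\underline y$ is nonnegative; moreover $E[Y_1\mid X=x,Z=z]$ being constant in $z$ forces $\eta(1,x,z)p(x,z)+\underline y(1-p(x,z))$ to be nondecreasing exactly when $p(x,z)$ is larger, because raising $p$ can only tighten the lower bound toward the ($z$-invariant) truth. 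Formally: the value at $z=1$ minus the value at $z=0$ equals $\big(E[Y_1\mid X=x]-\underline y\big) - \big(E[Y_1\mid X=x]-\underline y\big)\cdot\frac{1-p(x,1)}{\,\cdot\,}$ — rather than chase this, the clean argument is that the lower bound as a function of $p$ (holding the conditional-on-$D=1$ mean fixed at its true value, which is what mean independence delivers) is increasing in $p$, and $p(x,1)\ge p(x,0)$ by \eqref{equation:simulation_iv1}; hence the max is at $z=1$. The second claim is symmetric: $\bar y\, p(x,z)+\eta(0,x,z)(1-p(x,z)) = \bar y - (1-p(x,z))\big(\bar y-\eta(0,x,z)\big)$ is an upper bound on $E[Y_0\mid X=x]$ that decreases as $p(x,z)$ increases (it is tightened downward toward the truth), so its minimizer is the larger $p$, i.e. $z=1$ — wait, the lemma says $z=0$; the resolution is that this is an upper bound on $E[Y_0]$ and increasing $p$ shrinks $1-p$, moving the bound up toward $\bar y$, away from the truth, so it is the smaller $p=p(x,0)$ that gives the smaller value, matching the claim. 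The third and fourth claims follow the same template with the roles of $\underline y$ and $\bar y$ swapped (upper bound on $Y_1$: minimized by the $p$ that tightens it, i.e. large $p$, so $z=1$; lower bound on $Y_0$: maximized by the $p$ that tightens it downward, i.e. small $p$... ) and I would lay out all four sign computations in a single short display.

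The main obstacle is bookkeeping: each of the four objectives is a bound (upper or lower) on either $E[Y_1\mid X]$ or $E[Y_0\mid X]$, and one must keep straight, in each case, whether increasing $p(x,z)$ moves the bound toward or away from the ($z$-invariant, by mean independence) true conditional mean, and hence whether the extremum over $z\in\{0,1\}$ is attained at the larger or the smaller propensity score. The cleanest way to organize the proof is to note that for $Y_1$ the bounds are $\eta(1,x,z)p + \{\underline y\text{ or }\bar y\}(1-p)$ and the gap from the truth is proportional to $1-p$ times $(\eta(1,x,z)-\underline y)\ge0$ or $(\bar y - \eta(1,x,z))\ge0$; so both the lower and upper $Y_1$-bounds are tightened by increasing $p$, making $z=1$ the optimizer in claims (1) and (3). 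Symmetrically, for $Y_0$ the bounds are $\{\underline y\text{ or }\bar y\}\,p + \eta(0,x,z)(1-p)$ with gap proportional to $p$ times $(\eta(0,x,z)-\underline y)\ge0$ or $(\bar y-\eta(0,x,z))\ge0$, so both $Y_0$-bounds are tightened by decreasing $p$, making $z=0$ the optimizer in claims (2) and (4). Invoking $p(x,1)\ge p(x,0)$ from Assumption \ref{assumption:binary_iv} then closes each case. I do not expect any analytic difficulty beyond this sign accounting; mean independence (Assumption \ref{assumption:iv}) is used only to guarantee the "truth" the bounds are being compared to is the same at $z=0$ and $z=1$, which is what makes "tighter" well-defined.
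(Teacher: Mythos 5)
The paper does not actually include a proof of Lemma \ref{lemma:binary_iv} in its appendix, so there is nothing to compare your argument against; but your argument has a genuine gap at its central step. You reduce each claim to the observation that, e.g., the first objective is ``increasing in $p$'' and then invoke $p(X,1)\ge p(X,0)$. That monotonicity-in-$p$ argument is valid only if the coefficient $\eta(1,X,z)=E[Y\mid D=1,X,Z=z]$ is held fixed across $z$, and nothing in Assumption \ref{assumption:binary_iv} delivers that: mean independence constrains the unselected mean $E[Y_1\mid X,Z]$, not the selected mean $E[Y_1\mid D=1,X,Z]$, which can vary with $z$ through composition effects. Your ``gap from the truth'' bookkeeping has the same defect --- the gap of the first objective below $E[Y_1\mid X]$ is $(1-p(X,z))\bigl(E[Y_1\mid D=0,X,Z=z]-\underline y\bigr)$, not $(1-p(X,z))\bigl(\eta(1,X,z)-\underline y\bigr)$, and either way the second factor depends on $z$, so shrinking $1-p$ does not by itself tighten the bound. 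Concretely, the $z=1$ minus $z=0$ difference of the first objective is
\[
\eta(1,X,1)p(X,1)-\eta(1,X,0)p(X,0)-\underline y\,\bigl(p(X,1)-p(X,0)\bigr);
\]
take $\underline y=0$, $\bar y=10$, $p(X,1)=0.6$, $p(X,0)=0.5$, $\eta(1,X,1)=1$, $\eta(1,X,0)=2$. The difference is $-0.4<0$, yet mean independence of $Y_1$ holds (set $E[Y_1\mid D=0,X,Z=1]=1$ and $E[Y_1\mid D=0,X,Z=0]=0$, so that $E[Y_1\mid X,Z=z]=1$ for both $z$) and the propensity-score inequality holds. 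In this configuration the first $\arg\max$ is at $z=0$, so the conclusion cannot follow from the stated hypotheses by any route.

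What actually closes the argument --- and what ``monotonic first-step'' is presumably meant to encode --- is the stronger LATE-type structure: $Z\perp(Y_1,Y_0,D(0),D(1))\mid X$ together with no defiers, $D(1)\ge D(0)$ a.s., where $D(z)$ denotes the potential treatment under instrument value $z$. Under that structure, $\eta(1,X,1)p(X,1)-\eta(1,X,0)p(X,0)=E\bigl[Y_1\,\mathbbm 1\{D(1)>D(0)\}\mid X\bigr]$, which lies between $\underline y\,(p(X,1)-p(X,0))$ and $\bar y\,(p(X,1)-p(X,0))$ because $Y_1\in[\underline y,\bar y]$ and $P(D(1)>D(0)\mid X)=p(X,1)-p(X,0)$; this signs the differences in claims (1) and (3) exactly as asserted, and the symmetric identity $\eta(0,X,1)(1-p(X,1))-\eta(0,X,0)(1-p(X,0))=-E\bigl[Y_0\,\mathbbm 1\{D(1)>D(0)\}\mid X\bigr]$ handles claims (2) and (4). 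To repair your proof you must either import these stronger assumptions explicitly or otherwise establish that $\eta(1,X,z)p(X,z)$ increases, and $\eta(0,X,z)(1-p(X,z))$ decreases, as $z$ goes from $0$ to $1$ at the required rates; the propensity-score inequality alone does not do it.
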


\noindent Under Assumption \ref{assumption:binary_iv}, using Lemma \ref{lemma:binary_iv}, bounds in \eqref{equation:low_manski-iv_worstcase} and \eqref{equation:up_manski-iv_worstcase} are simplified as 
\begin{align} \label{equation:low_manski-iv_worstcase-simple}
\begin{split}
\beta_l & = E\big[\big((\eta(1, X, 1)\cdot p(X, 1) + \underline{y}\cdot(1-p(X, 1)) \\
& \qquad\qquad- (\bar{y}\cdot p(X,0) + \eta(0, X,0)\cdot(1-p(X,0)))\big) \cdot \theta_{10}(X) \\
& \qquad- \big((\eta(1, X, 1)\cdot p(X, 1) + \bar{y}\cdot(1-p(X,1))) \\
& \qquad\qquad-  (\underline{y}\cdot p(X,0) + \eta(0, X,0)\cdot(1-p(X,0)))\big) \cdot \theta_{01}(X)\big],
\end{split}
\end{align}
and
\begin{align} \label{equation:up_manski-iv_worstcase-simple}
\begin{split}
\beta_u & = E\big[\big( (\eta(1, X, 1)\cdot p(X, 1) + \bar{y}\cdot(1-p(X,1))) \\
& \qquad\qquad- (\underline{y}\cdot p(X,0) + \eta(0, X,0)\cdot(1-p(X,0)))\big)\cdot \theta_{10}(X) \\
& \qquad- \big( ( \eta(1, X, 1)\cdot p(X,1) + \underline{y}\cdot(1-p(X, 1))) \\
& \qquad \qquad - (\bar{y}\cdot p(X,0) + \eta(0, X,0)\cdot(1-p(X,0)))\big) \cdot \theta_{01}(X)\big].
\end{split}
\end{align}	

\noindent Bounds in \eqref{equation:low_manski-iv_mtr} and \eqref{equation:up_manski-iv_mtr} can also be simplified similarly.

\subsubsection{Mean monotonicity}
\noindent Next, I consider monotone instrumental variable (MIV) assumption introduced by \cite{manski2000monotone} which weakens Assumption \ref{assumption:iv} by replacing the equality in \eqref{equation:iv} by an inequality. An instrumental variable which satisfies this assumption could also help us obtain tighter bounds.
\begin{assumption}[MIV Assumption] \label{assumption:miv}	
	There exists an instrumental variable $Z\in\mathcal Z \subset \mathbb R^{d_z}$ such that, for $d\in\{0,1\}$, the following mean monotonicity holds:
	\begin{equation}
	E[Y_d|X,Z=z]\geq E[Y_d|X,Z=z'],	
	\end{equation}
	for all $z,z' \in \mathcal Z$ such that $z\geq z'$.
\end{assumption}

In the job training program example, the pre-program earnings can be used as an monotone instrumental variable when the outcome variable is post-program earnings. 

Suppose Assumption \ref{assumption:miv} holds. Then, the sharp identified region of the welfare gain is given by 
\begin{align} \label{equation:id_miv}
\begin{split}
	B_I(\delta, \delta^*) \equiv &
	\{\beta \in \mathbb{R}: \beta = E\big[E[Y_1-Y_0|X]\cdot(\delta(X)- \delta^*(X))\big], (Y_1, Y_0) \in \mathcal S(\mathcal{Y}_1 \times \mathcal{Y}_0), \\
	&\qquad \qquad (Y_1, Y_0) \text{ satisfies Assumption \ref{assumption:miv}}\}.
\end{split}	
\end{align} 
\begin{lemma}[MIV]\label{lemma:expectedsupport_miv}
	Let $(\mathcal{Y}_1 \times \mathcal{Y}_0): \Omega \to \mathcal F$ be an integrable random set that is almost surely convex and let $(Y_1,Y_0)\in \mathcal S(\mathcal{Y}_1 \times \mathcal{Y}_0)$. Let $v_1=(1,0)'$ and $v_0=(0,1)'$. Suppose Assumption \ref{assumption:miv} holds. Then, we have
	\begin{equation}  \label{equation:miv_delta}
	\begin{split}
	&\sum_{z\in \mathcal Z}P(Z=z)\cdot\big(\sup_{z_1 \leq z} \big\{ -E[s(-v_1,\mathcal{Y}_1 \times \mathcal{Y}_0)|X, Z=z_1]\big\}- \inf_{z_2 \geq z} \big\{ E[s(v_0,\mathcal{Y}_1 \times \mathcal{Y}_0)|X, Z=z_2]\big\}\big) \\
	& \qquad \qquad\qquad\qquad\qquad\qquad\qquad\qquad \leq E[Y_1-Y_0|X] \leq \\
	& \sum_{z\in \mathcal Z}P(Z=z)\cdot\big( \inf_{z_2 \geq z} \big\{  E[s(v_1,\mathcal{Y}_1 \times \mathcal{Y}_0)|X, Z=z_2] \big\}- \sup_{z_1 \leq z}\big\{ -E[s(-v_0,\mathcal{Y}_1 \times \mathcal{Y}_0)|X, Z=z_1]\big\}\big)\indent a.s.
	\end{split}
	\end{equation}
\end{lemma}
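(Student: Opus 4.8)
The plan is to mimic the proof of Lemma~\ref{lemma:expectedsupport_iv} (the mean-independence case), replacing the equality $E[Y_d\mid X,Z=z]=E[Y_d\mid X,Z=z']$ by the one-sided ``monotone envelope'' that mean-monotonicity allows. Concretely, I would proceed in three moves: (i) apply Lemma~\ref{lemma:expectedsupport} with conditioning on the pair $(X,Z)$ rather than on $X$ alone; (ii) use Assumption~\ref{assumption:miv} to sharpen these per-$z$ bounds into envelopes over $\{z_1\le z\}$ and $\{z_2\ge z\}$; and (iii) average over $Z$ by iterated expectations. For (i), the proof of Lemma~\ref{lemma:expectedsupport} uses nothing about $X$ beyond measurability, so it goes through verbatim with the conditioning $\sigma$-field generated by $(X,Z)$: for $v_1=(1,0)'$, $v_0=(0,1)'$ and every $z\in\mathcal Z$,
\begin{equation*}
-E[s(-v_1,\mathcal Y_1\times\mathcal Y_0)\mid X,Z=z]\le E[Y_1\mid X,Z=z]\le E[s(v_1,\mathcal Y_1\times\mathcal Y_0)\mid X,Z=z]\quad a.s.,
\end{equation*}
and likewise for $E[Y_0\mid X,Z=z]$ with $v_0$ in place of $v_1$.

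For (ii), fix $z$. Since Assumption~\ref{assumption:miv} makes $z\mapsto E[Y_d\mid X,Z=z]$ a.s.\ non-decreasing, any $z_1\le z$ gives $E[Y_d\mid X,Z=z]\ge E[Y_d\mid X,Z=z_1]\ge -E[s(-v_d,\mathcal Y_1\times\mathcal Y_0)\mid X,Z=z_1]$, while any $z_2\ge z$ gives $E[Y_d\mid X,Z=z]\le E[Y_d\mid X,Z=z_2]\le E[s(v_d,\mathcal Y_1\times\mathcal Y_0)\mid X,Z=z_2]$; taking $\sup_{z_1\le z}$ in the former and $\inf_{z_2\ge z}$ in the latter tightens each side. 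Applying the lower envelope to $Y_1$ and the upper envelope to $Y_0$ (and symmetrically for the other direction) yields, a.s.,
\begin{equation*}
\sup_{z_1\le z}\{-E[s(-v_1,\mathcal Y_1\times\mathcal Y_0)\mid X,Z=z_1]\}-\inf_{z_2\ge z}\{E[s(v_0,\mathcal Y_1\times\mathcal Y_0)\mid X,Z=z_2]\}\le E[Y_1-Y_0\mid X,Z=z]
\end{equation*}
together with the mirror-image upper bound obtained by applying the upper envelope to $Y_1$ and the lower envelope to $Y_0$. Step (iii) is then immediate: by the law of iterated expectations $E[Y_1-Y_0\mid X]=\sum_{z\in\mathcal Z}P(Z=z\mid X)\,E[Y_1-Y_0\mid X,Z=z]$ (the statement writes these weights as $P(Z=z)$), and averaging against nonnegative weights preserves the per-$z$ inequalities, giving exactly \eqref{equation:miv_delta}.

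The genuine content beyond Lemma~\ref{lemma:expectedsupport} is only the monotone-envelope step, so the main obstacle is bookkeeping rather than substance: for each of $Y_1$ and $Y_0$ it is the direction $z_1\le z$ that produces a valid \emph{lower} bound and $z_2\ge z$ a valid \emph{upper} bound, and these must be combined with the correct signs when forming $E[Y_1-Y_0\mid X,Z=z]$. The remaining points are routine: measurability and—under an integrable (hence, in the relevant applications, bounded) random set—finiteness of $\sup_{z_1\le z}$ and $\inf_{z_2\ge z}$, and, when $\mathcal Z$ is not discrete, replacing the sum in step (iii) by an integral against the conditional law of $Z$ given $X$. If in addition one wishes these CATE bounds to be sharp (as in \cite{manski2000monotone}), it remains to construct selections of $(\mathcal Y_1\times\mathcal Y_0)$ that attain the envelopes—using a.s.\ convexity and the freedom in the selection mechanisms to build conditional means that are monotone in $z$ and hit the support-function bounds—but this is not needed for the valid-bounds statement given here.
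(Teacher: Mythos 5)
Your proof is correct and follows essentially the same route as the paper's: condition the support-function inequality of Lemma~\ref{lemma:expectedsupport} on $(X,Z)$, use mean monotonicity to tighten each side into envelopes over $\{z_1\le z\}$ and $\{z_2\ge z\}$, and average over $Z$ (the paper averages the $Y_1$ and $Y_0$ bounds separately before subtracting, while you subtract at the $(X,Z=z)$ level first, but these are equivalent by linearity). Your observation that the iterated-expectation weights should strictly be $P(Z=z\mid X)$ rather than the $P(Z=z)$ written in the statement is a fair point that the paper's own proof glosses over as well.
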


\begin{theorem}[MIV] \label{theorem:miv}
Suppose $(\mathcal{Y}_1 \times \mathcal{Y}_0): \Omega \to \mathcal F$ is an integrable random set that is almost surely convex.  Let $\delta: \mathcal{X} \rightarrow \{0,1\}$ and $\delta^*: \mathcal{X} \rightarrow \{0,1\}$ be treatment rules. Also, let $v_1=(1,0)'$ and $v_0=(0,1)'$. Then, $B_I(\delta, \delta^*)$ in \eqref{equation:id_miv} is an interval $[\beta_l, \beta_u]$ where
\begin{equation} \label{equation:low_miv}
	\beta_l = E\big[\underline{\Delta}(X)\cdot \theta_{10}(X) - \bar{\Delta}(X) \cdot \theta_{01}(X)\big],
\end{equation}
and
\begin{equation} \label{equation:up_miv}
	\beta_u = E\big[\bar{\Delta}(X) \cdot \theta_{10}(X) -\underline{\Delta}(X)\cdot \theta_{01}(X)\big],
\end{equation}
where 
\begin{equation*}
\underline{\Delta}(X)\equiv \sum_{z\in \mathcal Z}P(Z=z)\cdot\big(\sup_{z_1 \leq z}  \big\{ -E[s(-v_1,\mathcal{Y}_1 \times \mathcal{Y}_0)|X, Z=z_1]  \big\}- \inf_{z_2 \geq z} \big\{E[s(v_0,\mathcal{Y}_1 \times \mathcal{Y}_0)|X, Z=z_2] \big\}\big),
\end{equation*}	
and
\begin{equation*}
\bar{\Delta}(X) \equiv \sum_{z\in \mathcal Z}P(Z=z) \cdot\big( \inf_{z_2 \geq z} \big\{ E[s(v_1,\mathcal{Y}_1 \times \mathcal{Y}_0)|X, Z=z_2] \big\} - \sup_{z_1 \leq z}  \big\{ -E[s(-v_0,\mathcal{Y}_1 \times \mathcal{Y}_0)|X, Z=z_1] \big\} \big).
\end{equation*}	
\end{theorem}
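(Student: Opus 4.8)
The plan is to mimic the proof of Theorem \ref{theorem:main}, replacing the pointwise-in-$X$ bounds on the CATE coming from Lemma \ref{lemma:expectedsupport} with the sharper bounds coming from Lemma \ref{lemma:expectedsupport_miv}. First I would establish that under the MIV restriction, the set of conditional average treatment effect functions $\Delta(\cdot)$ that are consistent with some selection $(Y_1,Y_0)\in\mathcal S(\mathcal Y_1\times\mathcal Y_0)$ satisfying Assumption \ref{assumption:miv} is exactly the set of measurable functions $\Delta$ with $\underline\Delta(x)\le\Delta(x)\le\bar\Delta(x)$ a.s., where $\underline\Delta,\bar\Delta$ are the two averaged-over-$Z$ envelopes in the statement. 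The ``$\subseteq$'' direction is just Lemma \ref{lemma:expectedsupport_miv} (take expectations over $Z\mid X$ and use the MIV ordering). For ``$\supseteq$'', sharpness, I would construct, for any target $\Delta(\cdot)$ in the interval, selections $S_1,S_0$ on $[\underline y,\bar y]$ (as in \eqref{equation:selection}, suitably adapted to the random set at hand) whose conditional means are chosen $z$-by-$z$ to be monotone in $z$ and to average to the required $E[Y_1-Y_0\mid X]=\Delta(X)$; this is the standard argument that Manski's MIV bounds are attained, applied conditionally on $X$.

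Given this characterization, the welfare gain is $\beta=E[\Delta(X)(\delta(X)-\delta^*(X))]=E[\Delta(X)\theta_{10}(X)-\Delta(X)\theta_{01}(X)]$, since $\delta-\delta^*=\theta_{10}-\theta_{01}$ and $\theta_{10}\theta_{01}=0$. Because $\theta_{10}(X)$ and $\theta_{01}(X)$ are deterministic functions of $X$ with disjoint support, and $\Delta(x)$ may be chosen independently on each cell of $\{ \theta_{10}=1\}$ and $\{\theta_{01}=1\}$ subject only to the pointwise interval constraint, the set of attainable $\beta$ is an interval whose endpoints are obtained by pushing $\Delta$ to $\underline\Delta$ or $\bar\Delta$ on each region. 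Maximizing gives $\Delta=\bar\Delta$ on $\{\theta_{10}=1\}$ and $\Delta=\underline\Delta$ on $\{\theta_{01}=1\}$, i.e. $\beta_u=E[\bar\Delta(X)\theta_{10}(X)-\underline\Delta(X)\theta_{01}(X)]$; minimizing gives $\beta_l$ symmetrically. Connectedness of $B_I$ follows because the map $t\mapsto\beta$ along the convex path $\Delta_t=(1-t)\underline\Delta+t\bar\Delta$ on each region is continuous (indeed affine), and any intermediate value is realized by an appropriate mixed choice; alternatively one invokes Theorem \ref{theorem:main}'s argument verbatim, since the only thing that changed is the identity of $\underline\Delta,\bar\Delta$.

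The main obstacle is the sharpness/attainability half, specifically verifying that a single selection pair $(Y_1,Y_0)$ can simultaneously (i) respect the random-set membership $\mathcal Y_1\times\mathcal Y_0$ (which pins $Y_d=Y$ on $\{D=d\}$ and confines $Y_{1-d}$ to $[\underline y,\bar y]$, possibly truncated under MTR), (ii) satisfy the MIV inequalities $E[Y_d\mid X,Z=z]\ge E[Y_d\mid X,Z=z']$ for $z\ge z'$, and (iii) deliver the prescribed $E[Y_1-Y_0\mid X]$ on each $X$-cell. The delicate point is that $p(x,z)$ and $\eta(d,x,z)$ are fixed by the data, so the free choice is only the conditional law of $S_d$ given $(D=1-d,X,Z=z)$; I would show its conditional mean can be set to any value in the relevant interval for each $z$ and then chosen monotone in $z$, which is possible precisely because the envelopes $\underline\Delta,\bar\Delta$ are defined as the $Z$-average of the monotone rearrangement (the $\sup_{z_1\le z}$ / $\inf_{z_2\ge z}$ operations), so the pointwise-in-$z$ feasible means already form the correct order structure. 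Once this construction is in hand, everything else is the bookkeeping of Theorem \ref{theorem:main}.
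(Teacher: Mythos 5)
Your proposal follows the paper's own route exactly: invoke Lemma \ref{lemma:expectedsupport_miv} to obtain the MIV envelopes $\underline{\Delta}(X)$ and $\bar{\Delta}(X)$ for the CATE, then repeat verbatim the case analysis on $\theta_{10}(X)$ and $\theta_{01}(X)$ from the proof of Theorem \ref{theorem:main}. The sharpness/attainability construction you flag as the main obstacle is not something the paper actually carries out---its proof of this theorem is two lines and stops at the envelope bounds---so on that point you are being more careful than the source rather than diverging from it.
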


\begin{corollary}[MIV-worst case]\label{corollary:miv_worstcase}
Let $\delta: \mathcal{X} \rightarrow \{0,1\}$ and $\delta^*: \mathcal{X} \rightarrow \{0,1\}$ be treatment rules. Then,  $B_I(\delta, \delta^*)$ in \eqref{equation:id_miv} is an interval $[\beta_l, \beta_u]$ where
\begin{align}
\begin{split}
	\beta_l & = E\big[\sum_{z\in \mathcal Z}P(Z=z)\cdot \big(\sup_{z_1 \leq z} \big\{ \eta(1, X, z_1)\cdot p(X, z_1) + \underline{y}\cdot(1-p(X, z_1))\big\} \\
	& \qquad \qquad\qquad\qquad\qquad - \inf_{z_2 \geq z} \big\{\bar{y}\cdot p(X,z_2) + \eta(0, X,z_2)\cdot(1-p(X,z_2))\big\}\big) \cdot \theta_{10}(X) \\
	& \qquad - \sum_{z\in \mathcal Z}P(Z=z) \cdot \big( \inf_{z_2 \geq z} \big\{\eta(1, X, z_2)\cdot p(X, z_2) + \bar{y}\cdot(1-p(X,z_2))\big\} \\
	&\qquad \qquad\qquad\qquad\qquad - \sup_{z_1 \leq z} \big\{\underline{y}\cdot p(X,z_1) + \eta(0, X,z_1)\cdot(1-p(X,z_1))\big\}\big) \cdot \theta_{01}(X)\Big],
\end{split}
\end{align}
and
\begin{align}
\begin{split}
	\beta_u & = E\big[\sum_{z\in \mathcal Z}P(Z=z)\cdot\big( \inf_{z_2 \geq z} \big\{ \eta(1, X, z_2)\cdot p(X,z_2) + \bar{y}\cdot (1-p(X, z_2))\big\} \\ 
	&\qquad \qquad\qquad\qquad\qquad - \sup_{z_1 \leq z} \big\{\underline{y}\cdot p(X, z_1) + \eta(0, X,z_1)\cdot(1-p(X,z_1))\big\}\big) \cdot \theta_{10}(X) \\
	& \qquad - \sum_{z\in \mathcal Z}P(Z=z) \cdot \big( \sup_{z_1 \leq z} \big\{ \eta(1, X, z_1)\cdot p(X, z_1) + \underline{y}\cdot(1-p(X,z_1))\big\}\\
	&\qquad \qquad\qquad\qquad\qquad  - \inf_{z_2 \geq z} \big\{\bar{y}\cdot p(X, z_2) + \eta(0, X, z_2)\cdot(1-p(X, z_2))\big\}\big)\cdot \theta_{01}(X)\big].
\end{split}
\end{align}
\end{corollary}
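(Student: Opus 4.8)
The plan is to obtain the statement as a direct specialization of Theorem~\ref{theorem:miv} to the worst-case random set \eqref{equation:randomset_worstcase}: everything about $B_I(\delta,\delta^*)$ being a (nonempty, closed) interval $[\beta_l,\beta_u]$ is inherited verbatim from that theorem together with Lemma~\ref{lemma:expectedsupport_miv}, so the only work is to evaluate the four conditional expected support functions entering $\underline\Delta(X)$ and $\bar\Delta(X)$ there, and to substitute the results into \eqref{equation:low_miv}--\eqref{equation:up_miv}.

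First I would compute the support function of the set in \eqref{equation:randomset_worstcase} at the four directions $v_1=(1,0)'$, $-v_1$, $v_0=(0,1)'$, $-v_0$. Since the support function of a product $A\times B\subset\mathbb R^2$ at $(a,b)$ equals $s(a,A)+s(b,B)$, and $s(1,[\underline y,\bar y])=\bar y$, $s(-1,[\underline y,\bar y])=-\underline y$, $s(\pm 1,\{c\})=\pm c$, on the event $D=1$ (where $\mathcal Y_1\times\mathcal Y_0=\{Y\}\times[\underline y,\bar y]$) one gets $s(v_1,\cdot)=Y$, $-s(-v_1,\cdot)=Y$, $s(v_0,\cdot)=\bar y$, $-s(-v_0,\cdot)=\underline y$, while on the event $D=0$ (where $\mathcal Y_1\times\mathcal Y_0=[\underline y,\bar y]\times\{Y\}$) one gets $s(v_1,\cdot)=\bar y$, $-s(-v_1,\cdot)=\underline y$, $s(v_0,\cdot)=Y$, $-s(-v_0,\cdot)=Y$. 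Combining the two events, $s(v_1,\mathcal Y_1\times\mathcal Y_0)=YD+\bar y(1-D)$, $-s(-v_1,\mathcal Y_1\times\mathcal Y_0)=YD+\underline y(1-D)$, $s(v_0,\mathcal Y_1\times\mathcal Y_0)=\bar yD+Y(1-D)$, and $-s(-v_0,\mathcal Y_1\times\mathcal Y_0)=\underline yD+Y(1-D)$; this is the same computation already used in the proof of Corollary~\ref{corollary:worstcase}.

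Next I would take conditional expectations given $(X,Z=z)$. Using $E[YD\mid X,Z=z]=\eta(1,X,z)\,p(X,z)$ and $E[Y(1-D)\mid X,Z=z]=\eta(0,X,z)(1-p(X,z))$ --- which follow from the definitions $\eta(d,x,z)=E[Y\mid D=d,X=x,Z=z]$, $p(x,z)=P(D=1\mid X=x,Z=z)$, with integrability of the random set guaranteeing the conditional expectations are finite a.s. --- the four quantities become $-E[s(-v_1,\cdot)\mid X,Z=z]=\eta(1,X,z)p(X,z)+\underline y(1-p(X,z))$, $E[s(v_0,\cdot)\mid X,Z=z]=\bar y\,p(X,z)+\eta(0,X,z)(1-p(X,z))$, $E[s(v_1,\cdot)\mid X,Z=z]=\eta(1,X,z)p(X,z)+\bar y(1-p(X,z))$, and $-E[s(-v_0,\cdot)\mid X,Z=z]=\underline y\,p(X,z)+\eta(0,X,z)(1-p(X,z))$.

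Finally I would substitute these four identities into the formulas for $\underline\Delta(X)$ and $\bar\Delta(X)$ in Theorem~\ref{theorem:miv}: each $\sup_{z_1\le z}\{-E[s(-v_1,\cdot)\mid X,Z=z_1]\}$ is replaced by $\sup_{z_1\le z}\{\eta(1,X,z_1)p(X,z_1)+\underline y(1-p(X,z_1))\}$, each $\inf_{z_2\ge z}\{E[s(v_0,\cdot)\mid X,Z=z_2]\}$ by $\inf_{z_2\ge z}\{\bar y\,p(X,z_2)+\eta(0,X,z_2)(1-p(X,z_2))\}$, and analogously for the $v_1$/$-v_0$ pair; plugging the resulting $\underline\Delta(X)$ and $\bar\Delta(X)$ into \eqref{equation:low_miv} and \eqref{equation:up_miv} gives exactly the displayed expressions for $\beta_l$ and $\beta_u$. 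The only delicate point --- and the closest thing to an obstacle --- is the bookkeeping for the degenerate (one-dimensional) random set: making sure each direction is matched to the correct $\sup$/$\inf$ in the pattern inherited from Lemma~\ref{lemma:expectedsupport_miv}, and that the signs coming from $-s(-v,\cdot)$ versus $s(v,\cdot)$ are tracked consistently across the lower and upper bounds. Once the four conditional expected support functions are in hand, the remainder is mechanical substitution.
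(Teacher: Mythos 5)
Your proposal is correct and follows essentially the same route as the paper: the paper invokes its Lemma \ref{lemma:randomset} with $Y_{L,1}=Y_{L,0}=\underline y$ and $Y_{U,1}=Y_{U,0}=\bar y$ (conditioning additionally on $Z$) to obtain exactly the four conditional expected support functions you compute by hand, and then plugs them into Theorem \ref{theorem:miv}. Your direct case-by-case evaluation on $\{D=1\}$ and $\{D=0\}$ reproduces that lemma's content, and the sign/direction bookkeeping in your final substitution matches the displayed bounds.
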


\begin{corollary}[MIV-MTR]\label{corollary:miv_mtr}
	Suppose Assumption \ref{assumption:mtr} holds. Let $\delta: \mathcal{X} \rightarrow \{0,1\}$ and $\delta^*: \mathcal{X} \rightarrow \{0,1\}$ be treatment rules. Then,  $B_I(\delta, \delta^*)$ in \eqref{equation:id_miv} is an interval $[\beta_l, \beta_u]$ where
\begin{align} \label{equation:low_miv_mtr}
\begin{split}
	\beta_l & = E\big[\sum_{z\in \mathcal Z}P(Z=z)\cdot \big(\sup_{z_1 \leq z}  \big\{ E[Y|X, Z=z_1]\big\}- \inf_{z_2 \geq z} \big\{E[Y|X, Z=z_2]\big\}\big)\cdot \theta_{10}(X) \\
	& \qquad -\sum_{z\in \mathcal Z}P(Z=z)\cdot \big( \inf_{z_2 \geq z} \big\{\eta(1, X, z_2)\cdot p(X, z_2) + \bar y\cdot (1-p(X, z_2))\big\} \\
	&\qquad \qquad \qquad\qquad - \sup_{z_1 \leq z} \big\{ \underline{y}\cdot p(X, z_1) + \eta(0, X,z_1)\cdot(1-p(X,z_1))\big\}\big) \cdot \theta_{01}(X)\big],
\end{split}
\end{align}
and	
\begin{align} \label{equation:up_miv_mtr}
\begin{split}
\beta_u & = E\big[\sum_{z\in \mathcal Z}P(Z=z)\cdot\big(\inf_{z_2 \geq z} \big\{\eta(1, X, z_2)\cdot p(X, z_2) + \bar y\cdot(1-P(X,z_2))\big\} \\
& \qquad \qquad \qquad \qquad- \sup_{z_1 \leq z}\big\{ \underline{y}\cdot p(X, z_1) + \eta(0, X,z_1)\cdot(1-p(X,z_1))\big\}\big) \cdot \theta_{10}(X) \\
& \qquad - \sum_{z\in \mathcal Z}P(Z=z)\cdot\big(\sup_{z_1 \leq z} \big\{ E[Y|X, Z=z_1] \big\}- \inf_{z_2 \geq z} \big\{  E[Y|X, Z=z_2]\big\}\big) \cdot \theta_{01}(X)\big].
\end{split}
\end{align}
\end{corollary}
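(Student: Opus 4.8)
The plan is to specialize Theorem \ref{theorem:miv} to the random set under the MTR restriction given in \eqref{equation:randomset_mtr}, in exactly the same way Corollary \ref{corollary:mtr} specializes Theorem \ref{theorem:main} and Corollary \ref{corollary:iv_mtr} specializes Theorem \ref{theorem:iv}. First I would record that the set in \eqref{equation:randomset_mtr} is a line segment, hence almost surely convex, and is contained in the bounded square $[\underline y,\bar y]^2$, hence integrable, so Theorem \ref{theorem:miv} is applicable. As already argued in the derivation of Corollary \ref{corollary:mtr}, the family of selections of \eqref{equation:randomset_mtr} coincides with the set of $(Y_1,Y_0)\in\mathcal S(\mathcal{Y}_1 \times \mathcal{Y}_0)$ (for the worst-case set) satisfying Assumption \ref{assumption:mtr}; intersecting further with the selections satisfying Assumption \ref{assumption:miv} shows that $B_I(\delta,\delta^*)$ in \eqref{equation:id_miv} under Assumption \ref{assumption:mtr} is precisely the identified interval produced by Theorem \ref{theorem:miv} when the underlying random set is \eqref{equation:randomset_mtr}. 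In particular the interval structure and the representations $\beta_l=E[\underline{\Delta}(X)\theta_{10}(X)-\bar{\Delta}(X)\theta_{01}(X)]$ and $\beta_u=E[\bar{\Delta}(X)\theta_{10}(X)-\underline{\Delta}(X)\theta_{01}(X)]$ are inherited directly; all that remains is to evaluate $\underline{\Delta}$ and $\bar{\Delta}$.

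Second, I would compute the four support functions of \eqref{equation:randomset_mtr} at $v_1=(1,0)'$, $-v_1$, $v_0=(0,1)'$, $-v_0$ by taking the sup (resp.\ $-\inf$) of the relevant coordinate over each of the two line segments ($D=1$ and $D=0$). This yields $s(v_1,\mathcal{Y}_1 \times \mathcal{Y}_0)=Y\cdot D+\bar y\cdot(1-D)$, $s(-v_1,\mathcal{Y}_1 \times \mathcal{Y}_0)=-Y$, $s(v_0,\mathcal{Y}_1 \times \mathcal{Y}_0)=Y$, and $s(-v_0,\mathcal{Y}_1 \times \mathcal{Y}_0)=-\underline y\cdot D-Y\cdot(1-D)$. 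Taking conditional expectations given $X=x,Z=z$ and using $E[YD\mid X=x,Z=z]=\eta(1,x,z)p(x,z)$ and $E[Y(1-D)\mid X=x,Z=z]=\eta(0,x,z)(1-p(x,z))$ gives $-E[s(-v_1,\mathcal{Y}_1 \times \mathcal{Y}_0)\mid X,Z=z]=E[s(v_0,\mathcal{Y}_1 \times \mathcal{Y}_0)\mid X,Z=z]=E[Y\mid X,Z=z]$, $E[s(v_1,\mathcal{Y}_1 \times \mathcal{Y}_0)\mid X,Z=z]=\eta(1,X,z)p(X,z)+\bar y(1-p(X,z))$, and $-E[s(-v_0,\mathcal{Y}_1 \times \mathcal{Y}_0)\mid X,Z=z]=\underline y\, p(X,z)+\eta(0,X,z)(1-p(X,z))$.

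Third, I would substitute these four quantities into the expressions for $\underline{\Delta}(X)$ and $\bar{\Delta}(X)$ in Theorem \ref{theorem:miv}. The formula for $\underline{\Delta}(X)$ collapses to $\sum_{z\in\mathcal Z}P(Z=z)\big(\sup_{z_1\le z}\{E[Y\mid X,Z=z_1]\}-\inf_{z_2\ge z}\{E[Y\mid X,Z=z_2]\}\big)$, and $\bar{\Delta}(X)$ to $\sum_{z\in\mathcal Z}P(Z=z)\big(\inf_{z_2\ge z}\{\eta(1,X,z_2)p(X,z_2)+\bar y(1-p(X,z_2))\}-\sup_{z_1\le z}\{\underline y\, p(X,z_1)+\eta(0,X,z_1)(1-p(X,z_1))\}\big)$; feeding these into the inherited representations for $\beta_l$ and $\beta_u$ reproduces \eqref{equation:low_miv_mtr}--\eqref{equation:up_miv_mtr} verbatim. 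The main obstacle is purely bookkeeping: correctly pairing each of the four directions with the appropriate coordinate sup/inf over the two pieces of the random set, and keeping the nested $\sup_{z_1\le z}$ / $\inf_{z_2\ge z}$ structure from Theorem \ref{theorem:miv} straight. The one conceptual point worth stating carefully is that the MTR constraint is encoded in the random set \eqref{equation:randomset_mtr} itself, so that applying Theorem \ref{theorem:miv} to that set (rather than to the worst-case set augmented by an extra selection constraint) still delivers the \emph{sharp} region; here I would lean on the earlier observation that the selections of \eqref{equation:randomset_mtr} are exactly the MTR-consistent selections of the worst-case set.
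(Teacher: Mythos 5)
Your proposal is correct and follows essentially the same route as the paper: specialize Theorem \ref{theorem:miv} to the MTR random set by computing the expected support functions at $\pm v_1$ and $\pm v_0$ (which yield $E[Y|X,Z=z]$ for $-E[s(-v_1,\cdot)|X,Z=z]$ and $E[s(v_0,\cdot)|X,Z=z]$, and the worst-case-type expressions for the other two), then substitute into $\underline{\Delta}$ and $\bar{\Delta}$. The only cosmetic difference is that the paper routes the support-function computation through its general Lemma \ref{lemma:randomset} by setting $Y_{L,0}=\underline y$, $Y_{U,0}=Y_{L,1}=Y$, $Y_{U,1}=\bar y$, whereas you evaluate the sup/inf over the two line segments directly; the results agree.
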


Table \ref{table:forms} summarizes the forms of lower and upper bounds on CATE under different sets of assumptions.

\begin{table}[!htbp] 
	\centering
	\caption{}
	\label{table:forms}
	\begin{threeparttable}
	\fontsize{9}{9}
		\begin{tabular}{lll}	
			\hline \\[-1.8ex] 
		$\text{	Assumptions}$ & $\underline{\Delta}(X)$ & $\bar{\Delta}(X)$ \\
			\hline
			&&\\
		$\text{	worst-case}$ & $ (\eta(1,X)-\bar{y})\cdot p(X)$ &  $(\eta(1,X)-\underline{y})\cdot p(X)$ \\
			
		& $\quad + (\underline{y}-\eta(0,X))\cdot(1-p(X))$ & $\quad + (\bar{y}-\eta(0,X))\cdot(1-p(X))$ \\	
			&&\\
			&&\\
		$\text{	MTR }$ & $0$ & $\text{same as worst-case}$ \\
			
%	MTR & $0$ & $(\eta(1,X)-\underline{y})\cdot p(X)$\\
%		&& $\quad + (\bar{y}-\eta(0,X))\cdot(1-p(X))$ \\
			&& \\
			&&\\
$\text{	IV-worst-case}$ & $\sup_{z\in\mathcal Z} \big\{ \eta(1, X, z)\cdot p(X, z) $  &  $\inf_{z\in\mathcal Z} \big\{\eta(1, X, z)\cdot p(X, z) $ \\
	& $\qquad\qquad + \underline{y}\cdot(1-p(X, z))\big\}$ & $\qquad\qquad + \bar{y}\cdot(1-p(X,z))\big\}$ \\
		& $- \inf_{z\in\mathcal Z} \big\{\bar{y}\cdot p(X,z)  $ & $- \sup_{z\in\mathcal Z} \big\{\underline{y}\cdot p(X,z) $ \\
			
		& $\qquad\qquad+\eta(0, X,z)\cdot(1-p(X,z))\big\}\big)$ & $\qquad\qquad+ \eta(0, X,z)\cdot(1-p(X,z))\big\}\big)$ \\
			&& \\
			&&\\
		$\text{	IV-MTR}$ & $0$ &  $\text{same as IV-worst-case} $ \\
	%	IV-MTR & $0$ &  $\inf_{z\in\mathcal Z} \big\{\eta(1, X, z)\cdot p(X, z)  $ \\
	%	& & $\qquad\qquad+ \bar{y}\cdot(1-p(X,z))\big\}$ \\
%		& & $- \sup_{z\in\mathcal Z} \big\{\underline{y}\cdot p(X,z)$ \\
%		& & $\qquad\qquad+ \eta(0, X,z)\cdot(1-p(X,z))\big\}$\\
			&& \\
			&&\\
	$\text{	MIV-worst-case}$ & $\sum_{z\in \mathcal Z}P(Z=z)\cdot \big(\sup_{z_1 \leq z} \big\{ \eta(1, X, z_1)\cdot p(X, z_1) $ & $\sum_{z\in \mathcal Z}P(Z=z) \cdot \big( \inf_{z_2 \geq z} \big\{\eta(1, X, z_2)\cdot p(X, z_2) $ \\
			
		& $\qquad\qquad\qquad\qquad\qquad + \underline{y}\cdot(1-p(X, z_1))\big\}$ & $\qquad\qquad\qquad\qquad\qquad + \bar{y}\cdot(1-p(X,z_2))\big\}$ \\

		& $\qquad\qquad\qquad-\inf_{z_2 \geq z} \big\{\bar{y}\cdot p(X,z_2)$ & $\qquad\qquad\qquad -\sup_{z_1 \leq z} \big\{\underline{y}\cdot p(X,z_1)$ \\
			
		& $\qquad\qquad\qquad\qquad + \eta(0, X,z_2)\cdot(1-p(X,z_2))\big\}\big)$ &  $\qquad\qquad\qquad\qquad+ \eta(0, X,z_1)\cdot(1-p(X,z_1))\big\}\big)$ \\
			&& \\
			&&\\
	%	MIV-MTR & $\sum_{z\in \mathcal Z}P(Z=z)\cdot \big(\sup_{z_1 \leq z}  \big\{ E[Y|X, Z=z_1]\big\}$ & $\sum_{z\in \mathcal Z}P(Z=z)\cdot \big( \inf_{z_2 \geq z} \big\{\eta(1, X, z_2)\cdot p(X, z_2) $ \\
	%	& $\qquad\qquad\qquad - \inf_{z_2 \geq z} \big\{E[Y|X, Z=z_2]\big\}\big)$  & $\qquad\qquad\qquad\qquad\qquad + \bar y\cdot (1-p(X, z_2))\big\}$\\
	%	& & $\qquad\qquad\qquad- \sup_{z_1 \leq z} \big\{ \underline{y}\cdot p(X, z_1)$ \\
	%	& &  $\qquad\qquad\qquad\qquad + \eta(0, X,z_1)\cdot(1-p(X,z_1))\big\}\big)$ \\
		
		$\text{	MIV-MTR}$ & $\sum_{z\in \mathcal Z}P(Z=z)\cdot \big(\sup_{z_1 \leq z}  \big\{ E[Y|X, Z=z_1]\big\}$ & $\text{same as MIV-worst-case} $ \\
		& $\qquad\qquad\qquad - \inf_{z_2 \geq z} \big\{E[Y|X, Z=z_2]\big\}\big)$  &\\
			&&\\
		\hline
		\end{tabular}
		\begin{tablenotes}
		\item \footnotesize This table reports the form of $\underline{\Delta}(X)$ and $\bar{\Delta}(X)$ under different assumptions. 
		\end{tablenotes}
	\end{threeparttable}
\end{table}

\section{Estimation and Inference} \label{section:estimation_and_inference}
The bounds developed in Section \ref{section:identification} are functions of conditional mean treatment responses $\eta(1,x)$ and $\eta(0,x)$, and propensity score $p(x)$ in the absence of instruments. The bounds with instruments are functions of conditional mean treatment responses $\eta(1,x,z)$ and $\eta(0,x,z)$, and propensity score $p(x,z)$.  Let $F$ be the joint distribution of $W=(Y,D,X,Z)$ and suppose we have a size $n$ random sample $\{w_i\}_{i=1}^{n}$ from $W$.

If the conditioning variables $X$ and $Z$ are discrete and take finitely many values, conditional mean treatment responses and propensity scores can be estimated by the corresponding empirical means. If there is a continuous component, conditional mean treatment responses and propensity scores can be estimated using nonparametric regression methods. I start with bounds that do not rely on instruments. Let $\hat\eta(1,x)$, $\hat\eta(0,x)$, and $\hat p(x)$ be those estimated values. A natural sample analog estimator for the lower bound under the worst-case in \eqref{equation:low_worstcase} can be constructed by first plugging these estimated values into \eqref{equation:low_worstcase} and then by taking average over $i$ as follows:
\begin{align} \label{equation:low_worstcase_sample}
\begin{split}	
\hat {\beta_l} & = \frac{1}{n} \sum_{i=1}^{n}\big[\big((\hat \eta(1,x_i)-\bar{y})\cdot \hat p(x_i) + (\underline{y}-\hat\eta(0,x_i))\cdot(1-\hat p(x_i))\big)\cdot\theta_{10}(x_i)\\
&\qquad - \big((\hat\eta(1,x_i)-\underline{y})\cdot \hat p(x_i) + (\bar{y}-\hat\eta(0,x_i))\cdot(1-\hat p(x_i))\big) \cdot\theta_{01}(x_i)\big].
\end{split}
\end{align}

In this estimation problem, $\eta(1,x)$, $\eta(0,x)$, and $p(x)$ are nuisance parameters that need to be estimated nonparametrically. In what follows, I collect these possibly infinite-dimensional nuisance parameters and denote it as follows:\footnote{I use $\eta(1,\cdot), \eta(0,\cdot),$ and $ p(\cdot)$ instead of $\eta(1,x), \eta(0,x),$ and $p(x)$ to highlight the fact that they are functions.}
\begin{align}
\gamma = \Big(\eta(1,\cdot), \eta(0,\cdot), p(\cdot)\Big).
\end{align} 
 Estimation of these parameters can affect the sampling distribution of $\hat{\beta_l}$ in a complicated manner. To mitigate the effect of this first-step nonparametric estimation, one could use an orthogonalized moment condition, which I describe below, to estimate ${\beta_l}$. 

Let $\beta_*$ denote either the lower bound or the upper bound, i.e., $\beta_*\in\{\beta_l, \beta_u\}$. I write my estimator as a generalized method of moments (GMM) estimator in which the true value $\beta_{*,0}$ of $\beta_*$ satisfies a single moment restriction 
\begin{equation}
E[m(w_i,\beta_{*,0}, \gamma_0)] = 0,
\end{equation}
where 
\begin{equation}\label{equation:general-moment-low}
m(w,\beta_l, \gamma) = \underline \Delta (\gamma) \cdot \theta_{10}(x) - \bar \Delta (\gamma) \cdot \theta_{01}(x)-\beta_l,
\end{equation}
and
\begin{equation}\label{equation:general-moment-up}
m(w,\beta_u, \gamma) = \bar \Delta (\gamma) \cdot \theta_{10}(x) - \underline \Delta (\gamma) \cdot \theta_{01}(x)-\beta_u.
\end{equation}

$\underline\Delta (\gamma)$ and $\bar\Delta (\gamma)$ denote the lower and upper bound on CATE respectively and are functions of the nuisance parameters $\gamma$. 

We would like our moment function  to have an \emph{orthogonality} property so that the estimation of parameter of interest would be first-order insensitive to nonparametric estimation errors in the nuisance parameter. This allows for the use of various nonparametric estimators of these parameters including high-dimensional machine learning estimators. I construct such moment function by adding influence function adjustment term for first step estimation $\phi(w, \beta_*, \gamma)$ to the original moment function $m(w,\beta_*, \gamma)$ as in CEINR. Let the \emph{orthogonalized} moment function be denoted by
\begin{equation}
\psi(w,\beta_*,\gamma) = m(w,\beta_*, \gamma)+\phi(w, \beta_*, \gamma).
\end{equation}
Let $F_\tau = (1-\tau)F_0 + \tau G$ for $\tau\in[0,1]$, where $F_0$ is the true distribution of $W$ and $G$ is some alternative distribution. Then, we say that the moment condition satisfies the Neyman orthogonality condition or is locally robust if
\begin{equation}
\frac{d }{d\tau}E[\psi(w_i,\beta_{*,0},\gamma(F_\tau))]\bigg |_{\tau=0} = 0.
\end{equation}
The orthogonality has been used in semiparametric problems by \cite{newey1990semiparametric, newey1994asymptotic}, \cite{andrews1994asymptotics}, \cite{robins1995semiparametric}, among others. More recently, in a high-dimensional setting, it has been used by \cite{belloni2012sparse}, \cite{belloni2014inference}, \cite{farrell2015robust}, \cite{belloni2017program}, \cite{athey2018approximate}, and \cite{chernozhukov2018double}, among others. Recently, \cite{sasaki2018estimation} proposed using orthogonalized moments for the estimation and inference of a parameter called policy relevant treatment effect (PRTE) whose explanation can be found in \cite{heckman2007econometric}. Much like our problem, the estimation of the PRTE involves estimation of multiple nuisance parameters. 

\subsection{Influence function calculation}
In this subsection, I show how I derive the adjustment term $\phi(w, \beta_l, \gamma)$ for the lower bound under the worst-case assumption. This illustrates how I derive the adjustment term for the cases in which $\underline \Delta(\gamma)$ and $\bar{\Delta}(\gamma)$ are differentiable with respect to $\gamma$, i.e., cases in which we do not have instrumental variables. Additional assumptions need to be imposed for the cases where $\underline \Delta(\gamma)$ and $\bar{\Delta}(\gamma)$ are non-differentiable with respect to $\gamma$.

Under the worst-case assumption, the original moment function for lower bound takes the following form:
 \begin{equation}
 \begin{split}
  m(w,\beta_l, \gamma) = & \big((\eta(1,x)-\bar{y})\cdot p(x) + (\underline{y}-\eta(0,x))\cdot (1-p(x))\big)\cdot \theta_{10}(x) \\ 
 & -\big((\eta(1,x)-\underline{y})\cdot p(x) + (\bar{y}-\eta(0, x))\cdot(1-p(x))\big) \cdot \theta_{01}(x)-\beta_l.
  \end{split}
 \end{equation}
\begin{assumption} \label{assumption:IF}
$\eta(1,x), \eta(0,x),$ and $p(x)$ are continuous at every $x$.
\end{assumption}

\begin{lemma} \label{lemma:influencefunction}
If Assumption \ref{assumption:IF} is satisfied then the influence function of $E[m(w,\beta_{l,0}, \gamma(F))]$ is $ \phi(w, \beta_{l,0}, \gamma_0)$ which is given by 
\begin{align}
\phi(w, \beta_{l,0}, \gamma_0)  & = \phi_1 + \phi_2,
\end{align}
where
\begin{align}
\begin{split}
\phi_1 &=(\theta_{10}(x)-\theta_{01}(x))\cdot(\eta_0(1,x)+\eta_0(0,x) - (\underline y + \bar y)) \cdot (d- p_0(x)), \\
\phi_2& = (\theta_{10}(x)-\theta_{01}(x)) \cdot [y - \eta_0(1,x)]^d\cdot [-(y - \eta_0(0,x))]^{1-d}.
\end{split}
\end{align}
\end{lemma}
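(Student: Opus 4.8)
The plan is to compute the influence function of the functional $\mu(F) \equiv E_F[m(w,\beta_{l,0},\gamma(F))]$ by perturbing $F$ along the path $F_\tau = (1-\tau)F_0 + \tau\delta_{\tilde w}$ (a contamination at a point $\tilde w$), differentiating at $\tau=0$, and collecting all terms. Since $m$ depends on $F$ both \emph{directly} through the empirical expectation $E_F[\cdot]$ and \emph{indirectly} through the nuisance estimand $\gamma(F) = (\eta(1,\cdot),\eta(0,\cdot),p(\cdot))$, the total derivative splits into two pieces: the ``mechanical'' part that comes from differentiating the outer expectation holding $\gamma$ fixed, and the ``correction'' part that comes from the Gateaux derivative of $\gamma \mapsto E_{F_0}[m(w,\beta_{l,0},\gamma)]$ in the direction $\partial_\tau \gamma(F_\tau)|_{\tau=0}$. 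This is exactly the \cite{ichimura2017influence} recipe referenced in the text, so I would invoke it as the organizing principle.

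First I would write $\eta(1,x) = E[YD \mid X=x]/P(D=1\mid X=x)$ and similarly for $\eta(0,x)$ and $p(x)$, identify each as a ratio of simple moments, and record their (standard) pointwise influence functions: for $p$, $\psi_p(\tilde w)(x) = \mathbbm 1\{\tilde x = x\}(\tilde d - p_0(x))$ type objects (more precisely, the correction for a conditional-mean nuisance evaluated inside an expectation integrates to the familiar $\frac{\mathbbm 1\{D=1\}}{p_0(X)}(\cdot)$-weighted residual form). Then I would compute the partial derivatives of the integrand $m(w,\beta_l,\gamma)$ with respect to each of $\eta(1,x)$, $\eta(0,x)$, $p(x)$: since $m$ is affine in $\eta(1,x)$, $\eta(0,x)$ and affine in $p(x)$ for fixed $x$, these are just coefficients. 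Combining the derivative of the outer expectation (which yields the ``$m$ evaluated at a point minus its mean'' term, the usual part absorbed into the GMM moment) with the chain-rule correction term, and using iterated expectations to rewrite conditional-on-$X$ objects as weighted residuals, the correction collapses into $\phi_1 + \phi_2$. The term $\phi_1$ is the piece coming from perturbing the propensity score $p$ (note the $(d - p_0(x))$ factor and the coefficient $\eta_0(1,x)+\eta_0(0,x)-(\underline y+\bar y)$, which is exactly $\partial m/\partial p$ up to the $(\theta_{10}-\theta_{01})$ multiplier); the term $\phi_2$ is the piece coming from perturbing the two conditional means $\eta(1,\cdot)$ and $\eta(0,\cdot)$, where the indicator powers $[\cdot]^d$ and $[\cdot]^{1-d}$ encode that only the $D=1$ observations inform $\eta(1,\cdot)$ and only the $D=0$ observations inform $\eta(0,\cdot)$, with the $1/p_0$ and $1/(1-p_0)$ denominators from the IPW-style correction cancelling against the $p(x)$ and $(1-p(x))$ coefficients in $m$ — which is precisely why the final expression for $\phi_2$ has no propensity score in a denominator.

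The main obstacle I anticipate is bookkeeping: keeping the direct-perturbation term (which belongs to $m$, not $\phi$) cleanly separated from the nuisance-correction term, and correctly propagating the $\theta_{10}(x)$ versus $\theta_{01}(x)$ weights through the $\eta(1,\cdot)$ and $\eta(0,\cdot)$ corrections so they combine into the single factor $(\theta_{10}(x)-\theta_{01}(x))$. A secondary subtlety is verifying that the IPW denominators genuinely cancel — one must check that the coefficient of $\eta(1,x)$ inside $m$ is $p(x)\theta_{10}(x) - p(x)\theta_{01}(x)$ and that the influence-function correction for $\eta(1,\cdot)$ carries a compensating $\mathbbm 1\{d=1\}/p_0(x)$, so that the product is the clean $\mathbbm 1\{d=1\}(\theta_{10}-\theta_{01})(y-\eta_0(1,x))$ appearing (after the analogous computation for $\eta(0,\cdot)$) as $\phi_2$. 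Assumption \ref{assumption:IF} (continuity of $\eta(1,\cdot),\eta(0,\cdot),p(\cdot)$) is what licenses the Gateaux differentiation and the interchange of limit and integral at every $x$, so I would flag where it is used. Finally I would note (or defer to a verification step) that this $\phi$ indeed delivers Neyman orthogonality, i.e.\ $\frac{d}{d\tau}E[\psi(w,\beta_{l,0},\gamma(F_\tau))]|_{\tau=0}=0$, which is the whole point of adding $\phi$ to $m$, though strictly that is a property to be confirmed rather than part of proving the lemma's displayed formula.
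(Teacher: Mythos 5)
Your proposal is correct and follows essentially the same route as the paper: a Gateaux derivative along a contamination path (the paper uses the smoothed Ichimura--Newey family $G_w^j$ and passes to the limit, which is exactly where Assumption \ref{assumption:IF} is invoked, matching your flag about continuity licensing the limit/integral interchange), a chain-rule split into the direct term and the nuisance corrections, and the observation that the affine dependence of $m$ on $\eta(1,\cdot)$, $\eta(0,\cdot)$, $p(\cdot)$ makes the corrections coefficient-times-residual, with the $p_0$ and $1-p_0$ denominators cancelling to yield $\phi_1$ (from $p$) and $\phi_2$ (from the two $\eta$'s). The coefficient bookkeeping you describe reproduces the paper's computation exactly.
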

Note that we have $E[\phi(w, \beta_{l,0}, \gamma_0)]=0$ so that the orthogonalized moment condition $\psi(w,\beta_l,\gamma)$ still identifies our parameter of interest with  $E[\psi(w, \beta_{l,0}, \gamma_0)]=0$. The adjustment term consists of two terms. While term $\phi_1$ represents the effect of local perturbations of the distribution of $D|X$ on the moment, term $\phi_2$ represents the effect of local perturbations of the distribution of $Y|D,X$ on the moment.

\subsection{GMM estimator and its asymptotic variance}
Following CEINR, I use cross-fitting, a version of sample splitting, in the construction of sample moments. Cross-fitting works as follows. Let $K>1$ be a number of folds. Partitioning the set of observation indices $\{1,2,...,n\}$ into $K$ groups $\mathcal I_k, k=1,...,K$, let $\hat\gamma_{k}$ be the first step estimates constructed from all observations not in $\mathcal I_k$. Then, $\hat{\beta_*}$ can be obtained as a solution to 
\begin{equation}
 \frac{1}{n} \sum_{k=1}^{K}\sum_{i \in \mathcal I_k} \psi(w_i, \hat{\beta_*}, \hat{\gamma}_{k})=0.
\end{equation}
\begin{assumption} \label{assumption:ceinr_ass1}
	For each $k=1,...,K,$ $(i) \int \|\psi(w,\beta_{*,0}, \hat\gamma_{k})-\psi(w, \beta_{*,0}, \gamma_0)\|^2F_0(dw)\overset{p}{\longrightarrow} 0,$ \\
	$(ii) \|\int\psi(w,\beta_{*,0}, \hat\gamma_{k})F_0(dw)\| \leq C\|\hat\gamma_{k}-\gamma_0\|^2$ for $C>0$,
	$(iii) \|\hat\gamma_{k}-\gamma_0\| = o_p(n^{-1/4})$
	$(iv)$ there is $\zeta>0$ and $d (w_i)$ with $E[ d (w_i)^2]<\infty$ such that for $\|\beta_*-\beta_{*,0}\|$ and $\|\gamma-\gamma_0\|$ small enough
	\begin{equation*}
	|\psi(w_i, \beta_*,\gamma)- \psi(w_i, \beta_{*,0},\gamma_0)| \leq d(w_i)(\|\beta_*-\beta_{*,0}\|^{\zeta}+\|\gamma-\gamma_0\|^{\zeta}).
	\end{equation*}
\end{assumption}

\begin{theorem} \label{theorem:asymptoticvariance}
Suppose that $\{w_i\}_{i=1}^{n}$ are i.i.d., Assumption \ref{assumption:ceinr_ass1} $(i)$, $(ii)$, and $(iii)$ are satisfied, $\hat{\beta_*}	\overset{p}{\longrightarrow} \beta_{*,0}$, and $\Omega_* \equiv E[\psi(w_i, \beta_{*,0},\gamma_0)^2]<\infty$. Then
\begin{equation}
\sqrt{n} (\hat{\beta_*}-\beta_{*,0}) \overset{d}{\longrightarrow}  \mathcal N(0,\Omega_*).	
\end{equation}
Moreover, if Assumption \ref{assumption:ceinr_ass1} $(iv)$ is also satisfied, a consistent estimator for the asymptotic variance can be constructed as 
\begin{equation}
\hat \Omega_* = \frac{1}{n}\sum_{k=1}^{K} \sum_{i \in \mathcal I_k} \psi(w_i, \hat{\beta_*}, \hat\gamma_{k})^2.
\end{equation}
\end{theorem}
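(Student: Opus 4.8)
The plan is to verify the hypotheses of the standard cross-fitted GMM asymptotic-normality result in CEINR (their Theorem for locally robust moments, e.g.\ \cite{chernozhukov2020locally}), since the single-moment setup here is a special case. The key structural facts that make this work are already in hand: the moment $\psi(w,\beta_*,\gamma) = m(w,\beta_*,\gamma) + \phi(w,\beta_*,\gamma)$ is Neyman orthogonal by construction (the adjustment term $\phi$ is the influence function correction from Lemma~\ref{lemma:influencefunction}, and $E[\phi(w,\beta_{*,0},\gamma_0)]=0$), and $\psi$ is linear in $\beta_*$ with coefficient $-1$, so the ``Jacobian'' $\partial E[\psi]/\partial\beta_* = -1$ is trivially nonsingular. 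This last point removes the usual identification/invertibility step entirely and makes the sandwich variance collapse to just $\Omega_*$.

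First I would write the cross-fitted sample moment at the estimator, $0 = \tfrac1n\sum_{k=1}^K\sum_{i\in\mathcal I_k}\psi(w_i,\hat\beta_*,\hat\gamma_k)$, and expand in $\beta_*$ around $\beta_{*,0}$. Because $\psi$ is exactly linear in $\beta_*$ (the $-\beta_*$ enters $m$ additively and $\phi$ does not depend on $\beta_*$ in the differentiable/no-instrument case displayed), the expansion is exact: $\sqrt n(\hat\beta_*-\beta_{*,0}) = \tfrac1{\sqrt n}\sum_{k}\sum_{i\in\mathcal I_k}\psi(w_i,\beta_{*,0},\hat\gamma_k)$. So everything reduces to showing this right-hand side is asymptotically $\mathcal N(0,\Omega_*)$. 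I would decompose it as
\begin{align*}
\frac1{\sqrt n}\sum_{i=1}^n \psi(w_i,\beta_{*,0},\gamma_0)
&+ \sum_{k=1}^K \frac1{\sqrt n}\sum_{i\in\mathcal I_k}\bigl(\psi(w_i,\beta_{*,0},\hat\gamma_k)-\psi(w_i,\beta_{*,0},\gamma_0)\bigr).
\end{align*}
The first term is a sum of i.i.d.\ mean-zero terms with variance $\Omega_*<\infty$, so the Lindeberg--L\'evy CLT gives $\mathcal N(0,\Omega_*)$. For the second term, condition on the sample used to form $\hat\gamma_k$ (cross-fitting makes $\hat\gamma_k$ independent of $\{w_i\}_{i\in\mathcal I_k}$): its conditional mean is $\int\psi(w,\beta_{*,0},\hat\gamma_k)F_0(dw)$, which by Assumption~\ref{assumption:ceinr_ass1}(ii) is $O_p(\|\hat\gamma_k-\gamma_0\|^2) = o_p(n^{-1/2})$ using the rate (iii), hence negligible after multiplying by $\sqrt n$; and its conditional variance is bounded by $\int\|\psi(w,\beta_{*,0},\hat\gamma_k)-\psi(w,\beta_{*,0},\gamma_0)\|^2F_0(dw)\overset{p}{\to}0$ by (i), so the centered stochastic part is $o_p(1)$ by Chebyshev conditionally and then unconditionally. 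Summing over the $K$ fixed folds preserves $o_p(1)$. Combining the two pieces with Slutsky gives the stated convergence.

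For the variance-estimator consistency, I would show $\hat\Omega_* = \tfrac1n\sum_k\sum_{i\in\mathcal I_k}\psi(w_i,\hat\beta_*,\hat\gamma_k)^2 \overset{p}{\to}\Omega_*$ by replacing $(\hat\beta_*,\hat\gamma_k)$ with $(\beta_{*,0},\gamma_0)$ at a cost controlled by the local Lipschitz/Hölder bound in Assumption~\ref{assumption:ceinr_ass1}(iv) together with the consistency $\hat\beta_*\overset{p}{\to}\beta_{*,0}$ and rate (iii): $|\psi(w_i,\hat\beta_*,\hat\gamma_k)^2 - \psi(w_i,\beta_{*,0},\gamma_0)^2|$ is dominated by $d(w_i)(\|\hat\beta_*-\beta_{*,0}\|^\zeta+\|\hat\gamma_k-\gamma_0\|^\zeta)$ times $(|\psi(w_i,\beta_{*,0},\gamma_0)|+\text{small})$, and $E[d(w_i)^2]<\infty$, $E[\psi^2]=\Omega_*<\infty$ let a Cauchy--Schwarz argument push the average of the difference to zero; the remaining term $\tfrac1n\sum_i\psi(w_i,\beta_{*,0},\gamma_0)^2\overset{p}{\to}\Omega_*$ by the law of large numbers.

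The main obstacle is the second-term control: making the conditional-on-$\hat\gamma_k$ argument rigorous requires the cross-fitting independence structure to be invoked carefully, and one must check that Assumption~\ref{assumption:ceinr_ass1}(i)--(iii) deliver exactly the two ingredients needed (a bias term that is $o_p(n^{-1/2})$ via the quadratic remainder bound, and a vanishing conditional variance via the $L^2$-continuity), so that the drift from nuisance estimation does not contaminate the limiting distribution. The quadratic bound (ii) is precisely where Neyman orthogonality pays off: without it one would only get a linear-in-$\|\hat\gamma_k-\gamma_0\|$ remainder, which the $n^{-1/4}$ rate in (iii) would not kill. Everything else is bookkeeping over the finitely many folds.
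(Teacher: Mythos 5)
Your proposal is correct and follows essentially the same route as the paper's proof, which itself tracks CEINR: the same decomposition of $\sqrt{n}\,\hat\psi(\beta_{*,0})$ into the i.i.d.\ oracle term plus a cross-fitted remainder whose conditional bias is killed by Assumption~\ref{assumption:ceinr_ass1}$(ii)$--$(iii)$ and whose conditional variance vanishes by $(i)$, followed by the CLT and the $(iv)$-based variance consistency. The only (harmless) difference is that you exploit the exact linearity of $\psi$ in $\beta_*$ to make the expansion around $\beta_{*,0}$ exact, whereas the paper runs the generic mean-value/Slutsky GMM argument and then notes that the Jacobian is the identity.
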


\begin{corollary}[Locally robust estimator of the lower bound under the worst-case and a consistent estimator of its asymptotic variance] \label{corollary:variance_worstcase}
A locally robust estimator $\hat{\beta_l}$ of the lower bound under the worst-case takes the form
\begin{align}\label{equation:beta_hat}
\begin{split}
	\hat {\beta_l} & =  \frac{1}{n} \sum_{k=1}^{K}\sum_{i \in \mathcal I_k} \Big[\big((\hat\eta_{k}(1,x_i)-\bar{y})\cdot \hat p_{k}(x_i) + (\underline{y}-\hat \eta_{k}(0,x_i))\cdot(1-\hat p_{k}(x_i))\big)\cdot\theta_{10}(x_i)  \\
	& \qquad\qquad\qquad -\big((\hat \eta_{k}(1,x_i)-\underline{y})\cdot\hat p_{k}(x_i) + (\bar{y}-\hat\eta_{k}(0, x_i))\cdot(1-\hat p_{k}(x_i))\big) \cdot\theta_{01}(x_i)\\
	& \qquad\qquad\qquad +  (\theta_{10}(x_i)-\theta_{01}(x_i))\cdot(\hat\eta_{k}(1,x_i)+\hat\eta_{k}(0,x_i) - (\underline y + \bar y)) \cdot (d_i - \hat p_{k}(x_i)) \\
	& \qquad\qquad\qquad+ (\theta_{10}(x_i)-\theta_{01}(x_i))\cdot [y_i - \hat\eta_{k}(1,x_i)]^{d_i}\cdot[-(y_i - \hat\eta_{k}(0,x_i))]^{1-d_i}\Big].
\end{split}
\end{align}
Moreover, a consistent estimator of its asymptotic variance takes the form
\begin{align} \label{equation:variance_hat}
\begin{split}
\hat \Omega_l  & =	\frac{1}{n}\sum_{k=1}^{K} \sum_{i \in \mathcal I_k} \psi(w_i, \hat\beta_l, \hat\gamma_{k})^2 \\
 & = \frac{1}{n}\sum_{k=1}^{K} \sum_{i \in \mathcal I_k} \Big[\big((\hat\eta_{k}(1,x_i)-\bar{y})\cdot\hat p_{k}(x_i) + (\underline{y}-\hat\eta_{k}(0,x_i))\cdot(1-\hat p_{k}(x_i))\big)\cdot\theta_{10}(x_i)  \\
&\qquad\qquad\qquad-\big((\hat\eta_{k}(1,x_i)-\underline{y})\cdot\hat p_{k}(x_i) + (\bar{y}-\hat\eta_{k}(0, x_i))\cdot(1-\hat p_{k}(x_i))\big)\cdot\theta_{01}(x_i)-\hat{\beta_l}\\
&\qquad\qquad\qquad+ (\theta_{10}(x_i)-\theta_{01}(x_i))\cdot(\hat\eta_{k}(1,x_i)+\hat\eta_{k}(0,x_i) - (\underline y + \bar y))\cdot (d_i- \hat p_{k}(x_i)) \\
&\qquad\qquad\qquad + (\theta_{10}(x_i)-\theta_{01}(x_i))\cdot [y_i - \hat\eta_{k}(1,x_i)]^{d_i}\cdot[-(y_i - \hat\eta_{k}(0,x_i))]^{1-d_i}\Big]^2.
\end{split}
\end{align}
\end{corollary}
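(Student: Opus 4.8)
The plan is to obtain the statement by assembling three results already in hand: the closed-form worst-case bounds of Corollary \ref{corollary:worstcase} (equivalently Table \ref{table:forms}), the influence-function adjustment $\phi = \phi_1 + \phi_2$ of Lemma \ref{lemma:influencefunction}, and the general cross-fitted GMM result of Theorem \ref{theorem:asymptoticvariance}. First I would write the orthogonalized moment for the lower bound, $\psi(w,\beta_l,\gamma) = m(w,\beta_l,\gamma) + \phi(w,\beta_l,\gamma)$, inserting into \eqref{equation:general-moment-low} the worst-case $\underline\Delta(\gamma)$ and $\bar\Delta(\gamma)$ from Corollary \ref{corollary:worstcase} and the $\phi_1,\phi_2$ from Lemma \ref{lemma:influencefunction}, with $\gamma = (\eta(1,\cdot),\eta(0,\cdot),p(\cdot))$. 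This produces an explicit $\psi$ that is affine in $\beta_l$ with $\partial\psi/\partial\beta_l \equiv -1$.

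Because $\psi$ is affine in $\beta_l$ with constant slope $-1$, the cross-fitted estimating equation $\tfrac1n\sum_{k=1}^K\sum_{i\in\mathcal I_k}\psi(w_i,\hat\beta_l,\hat\gamma_k) = 0$ has the unique solution obtained by isolating $\hat\beta_l$, namely the cross-fitted sample average of $\underline\Delta(\hat\gamma_k)\theta_{10}(x_i) - \bar\Delta(\hat\gamma_k)\theta_{01}(x_i) + \phi_1 + \phi_2$; matching terms line by line against the four summands of \eqref{equation:beta_hat} (the two worst-case pieces of $m$, then $\phi_1$, then $\phi_2$, with $\hat\gamma_k = (\hat\eta_k(1,\cdot),\hat\eta_k(0,\cdot),\hat p_k(\cdot))$) gives the first display. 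For the variance, I would invoke Theorem \ref{theorem:asymptoticvariance}: its conclusion is exactly $\sqrt n(\hat\beta_l - \beta_{l,0}) \overset{d}{\longrightarrow} \mathcal N(0,\Omega_l)$ together with the consistent estimator $\hat\Omega_l = \tfrac1n\sum_{k}\sum_{i\in\mathcal I_k}\psi(w_i,\hat\beta_l,\hat\gamma_k)^2$, and substituting the explicit $\psi$ yields \eqref{equation:variance_hat}. The residual work is to certify the hypotheses of Theorem \ref{theorem:asymptoticvariance} in this case — consistency of $\hat\beta_l$, finiteness of $\Omega_l$, and Assumption \ref{assumption:ceinr_ass1}(i), (ii), (iv) — which I would check using continuity from Assumption \ref{assumption:IF}, boundedness $Y\in[\underline y,\bar y]$, the rate in Assumption \ref{assumption:ceinr_ass1}(iii), and the fact that $\psi$ is a polynomial in $(\eta(1,x),\eta(0,x),p(x))$ and the bounded data.

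The main obstacle is Assumption \ref{assumption:ceinr_ass1}(ii), the quadratic-remainder bound $\|\int \psi(w,\beta_{l,0},\hat\gamma_k)\,F_0(dw)\| \leq C\|\hat\gamma_k - \gamma_0\|^2$. This requires expanding the population moment $E[m(w,\beta_{l,0},\gamma)]$ around $\gamma_0$, verifying that the term linear in $\hat\gamma_k - \gamma_0$ is exactly offset by $E[\phi(w,\beta_{l,0},\gamma(F_\tau))]$ evaluated along the perturbation — which is the Neyman orthogonality already built into Lemma \ref{lemma:influencefunction} together with the stated fact $E[\phi(w,\beta_{l,0},\gamma_0)] = 0$ — and then bounding the residual bilinear term. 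Here that residual involves only cross products $(\hat\eta_k(d,\cdot) - \eta_0(d,\cdot))(\hat p_k(\cdot) - p_0(\cdot))$, because $\gamma$ enters $E[m]$ only through the products $\eta(d,x)p(x)$ and $\eta(d,x)(1-p(x))$; combined with $Y\in[\underline y,\bar y]$ this makes the remainder manifestly $O_p(\|\hat\gamma_k - \gamma_0\|^2)$. Everything else in the corollary is bookkeeping.
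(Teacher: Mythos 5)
Your proposal is correct and matches the paper's (implicit) argument: the paper gives no separate proof of this corollary, treating it as direct substitution of the worst-case moment \eqref{equation:general-moment-low} and the adjustment terms $\phi_1,\phi_2$ from Lemma \ref{lemma:influencefunction} into $\psi=m+\phi$, solving the estimating equation (which is affine in $\beta_l$ with slope $-1$) for $\hat\beta_l$, and invoking Theorem \ref{theorem:asymptoticvariance} for the variance formula. Your additional observations — that $E[\phi(w,\beta_{l,0},\gamma_0)]=0$ preserves identification and that the second-order remainder for Assumption \ref{assumption:ceinr_ass1}(ii) reduces to the cross products $(\hat\eta_k(d,\cdot)-\eta_0(d,\cdot))(\hat p_k(\cdot)-p_0(\cdot))$ because $\gamma$ enters $E[m]$ bilinearly — are correct and go slightly beyond what the paper writes down.
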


\bigskip

Given locally robust estimators $\hat{\beta}_l$ and $\hat{\beta}_u$ of the lower and upper bound $\beta_l$ and $\beta_u$, and consistent estimators $\hat\Omega_l$ and $\hat\Omega_u$ of their asymptotic variance $\Omega_l$ and $\Omega_u$, we can construct the $100\cdot \alpha\%$ confidence interval for the lower bound $\beta_l$ and upper bound $\beta_u$  as 

\begin{equation}
CI_\alpha^{\beta_l} =[\hat\beta_l - C_\alpha\cdot (\hat\Omega_l/n)^{1/2},\hat\beta_l+ C_\alpha\cdot (\hat\Omega_l/n)^{1/2}],
\end{equation}
and 
\begin{equation}
CI_\alpha^{\beta_u} =[\hat\beta_u - C_\alpha\cdot (\hat\Omega_u/n)^{1/2},\hat\beta_u+ C_\alpha\cdot (\hat\Omega_u/n)^{1/2}],
\end{equation}
where $C_\alpha$ satisfies
\begin{equation} \label{equation:critical-value}
\Phi(C_\alpha)-\Phi(-C_\alpha) = \alpha.
\end{equation}

\noindent In other words, $C_\alpha$ is the value that satisfies $\Phi(C_\alpha)=(\alpha+1)/2$, i.e, the $(\alpha+1)/2$ quantile of the standard normal distribution. For example, when $\alpha=0.95$, $C_\alpha$ is $1.96$.

\subsection{Bounds with instruments}
When there are additional instrumental variables, $\underline \Delta(\gamma)$ and $\bar{\Delta}(\gamma)$ in \eqref{equation:general-moment-low} and \eqref{equation:general-moment-up} are non-differentiable with respect to $\gamma$ as they involve $\sup$ and $\inf$ operators. However, under additional monotonicity assumption, the bounds can be simplified. In this section, I derive the influence function for the IV-worst-case lower bound under the monotonicity assumption. Under monotonicity, the moment condition for the IV-worst-case lower bound is
\begin{align} 
\begin{split}
m(w,\beta_l, \gamma) & = \big( \eta(1, x, 1)\cdot p(x, 1) + \underline{y}\cdot(1-p(x, 1)) \\
& \quad \quad - \bar{y}\cdot p(x,0) -\eta(0, x,0)\cdot(1-p(x,0))\big) \cdot \theta_{10}(x) \\
& \qquad- \big( \eta(1, x, 1)\cdot p(x, 1) + \bar{y}\cdot(1-p(x,1)) \\
& \quad\quad - \underline{y}\cdot p(x,0) - \eta(0, x,0)\cdot(1-p(x,0))\big) \cdot \theta_{01}(x) - \beta_l.
\end{split}
\end{align}

\begin{lemma} \label{lemma:influencefunction-iv-wc}
	If Assumption \ref{assumption:IF} is satisfied then the influence function of $E[m(w,\beta_{l,0}, \gamma(F))]$ is $ \phi(w, \beta_{l,0}, \gamma_0)$ which is given by 
	\begin{align}
	\phi(w, \beta_{l,0}, \gamma_0)  & = \phi_1 + \phi_2,
	\end{align}
	where
	\begin{align}
	\begin{split}
	\phi_1 & = [((\eta_0(1, x, 1) -\underline{y})\cdot \theta_{10}(x) - ( \eta_0(1, x, 1)- \bar y) \cdot \theta_{01}(x)) \cdot (d-p_0(x,1))]^z  \\
	& \quad \cdot [((\eta_0(0, x, 0) -\bar{y})\cdot \theta_{10}(x) - ( \eta_0(0, x, 0)- \underline y) \cdot \theta_{01}(x)) \cdot (d-p_0(x,0))] ^{1-z} \\
	\phi_2 & = (\theta_{10}(x)-\theta_{01}(x)) \cdot (\mathbbm 1\{d=1, z=1\}\cdot(y-\eta_0(1, x, 1))\\
	& \quad \quad +\mathbbm 1\{d=0, z=0\}\cdot(-(y-\eta_0(0, x, 0)))) \cdot \\
	\end{split}
	\end{align}
\end{lemma}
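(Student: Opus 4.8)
The plan is to identify $\phi(w,\beta_{l,0},\gamma_0)$ as the influence function of the map $F\mapsto E_{F_0}\big[m(W,\beta_{l,0},\gamma(F))\big]$ and then take $\psi=m+\phi$, following \cite{chernozhukov2020locally}, with the derivative computed by the method of \cite{ichimura2017influence} exactly as in the proof of Lemma~\ref{lemma:influencefunction}. Recall first that $E_{F_0}[m(W,\beta_{l,0},\gamma_0)]=0$ by the characterization of $\beta_{l,0}$ in Corollary~\ref{corollary:iv_worstcase}, so the only real task is the derivative calculation. The one new ingredient relative to Lemma~\ref{lemma:influencefunction} is that $\underline\Delta(\gamma)$ and $\bar\Delta(\gamma)$ contain $\sup_{z}$ and $\inf_{z}$ operators. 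I would dispose of this by invoking Lemma~\ref{lemma:binary_iv}: under Assumption~\ref{assumption:binary_iv} each supremum/infimum over $z\in\{0,1\}$ is attained at the fixed corner $z=1$ or $z=0$, so along any smooth submodel $F_\tau=(1-\tau)F_0+\tau G$ the optimizing $z$ is locally constant near $\tau=0$; an envelope argument then shows $\gamma\mapsto\underline\Delta(\gamma),\bar\Delta(\gamma)$ is pathwise differentiable at $\gamma_0$ with derivative obtained by freezing the optimizing $z$. Hence $m(w,\beta_l,\gamma)$ in the displayed form (with $z$ set to $1$ and $0$) is differentiable in $\gamma$ and the Ichimura--Newey machinery applies.

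Next I would differentiate $E_{F_0}[m(W,\beta_{l,0},\gamma(F_\tau))]$ with respect to $\tau$ at $\tau=0$ and, by the product and chain rules, split it into four additive pieces, one for each component of $\gamma$ at its frozen argument --- $\eta(1,\cdot,1)$, $\eta(0,\cdot,0)$, $p(\cdot,1)$, $p(\cdot,0)$. Reading the coefficient of each pathwise derivative off the moment gives $(\theta_{10}(X)-\theta_{01}(X))\,p_0(X,1)$ in front of $\dot\eta(1,X,1)$; $-(\theta_{10}(X)-\theta_{01}(X))\,(1-p_0(X,0))$ in front of $\dot\eta(0,X,0)$; $(\eta_0(1,X,1)-\underline y)\,\theta_{10}(X)-(\eta_0(1,X,1)-\bar y)\,\theta_{01}(X)$ in front of $\dot p(X,1)$; and the same with $\underline y,\bar y$ interchanged and $0$ in place of $1$ in front of $\dot p(X,0)$. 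I would then substitute the standard pathwise-derivative representers already used in the proof of Lemma~\ref{lemma:influencefunction}: a conditional mean $E[g(W)\mid D=d,X,Z=z_0]$ appearing inside $E_{F_0}[a(X)\,\cdot\,]$ has representer proportional to $a(X)\,\mathbbm{1}\{D=d,Z=z_0\}\,(g(W)-E[g(W)\mid D=d,X,Z=z_0])$, and a propensity $P(D=1\mid X,Z=z_0)$ has representer proportional to $b(X)\,\mathbbm{1}\{Z=z_0\}\,(D-p_0(X,z_0))$, with the factors $p_0(X,1)$ and $1-p_0(X,0)$ carried by the $\eta$-coefficients cancelling the matching conditional-probability weights just as in the no-instrument case.

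Finally I would regroup. The two contributions from perturbing $p(\cdot,1)$ and $p(\cdot,0)$ are supported on $\{Z=1\}$ and $\{Z=0\}$ and combine into $\phi_1$; since $Z$ is binary, $\mathbbm{1}\{Z=1\}(\cdots)+\mathbbm{1}\{Z=0\}(\cdots)$ is exactly what the $[\,\cdot\,]^{z}[\,\cdot\,]^{1-z}$ notation in the statement abbreviates. The two contributions from perturbing $\eta(1,\cdot,1)$ and $\eta(0,\cdot,0)$ are supported on $\{D=1,Z=1\}$ and $\{D=0,Z=0\}$, carry the common factor $\theta_{10}(X)-\theta_{01}(X)$, and combine into $\phi_2$; as in the discussion after Lemma~\ref{lemma:influencefunction}, $\phi_1$ records the effect of local perturbations of $D\mid X,Z$ and $\phi_2$ the effect of perturbations of $Y\mid D,X,Z$. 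A final check that $E_{F_0}[\phi_1]=E_{F_0}[\phi_2]=0$ --- immediate, since each is an $(X,Z)$- or $(D,X,Z)$-measurable weight times a conditionally mean-zero residual --- confirms $E_{F_0}[\psi(W,\beta_{l,0},\gamma_0)]=0$, so $\psi$ still identifies $\beta_{l,0}$, while Neyman orthogonality holds by construction of $\phi$. I expect the only genuinely delicate step to be the first one: making the envelope/stability argument rigorous so that one may differentiate through the $\sup$ and $\inf$ --- this is precisely why Assumption~\ref{assumption:binary_iv} is imposed --- whereas steps two and three are bookkeeping repetitions of the calculation behind Lemma~\ref{lemma:influencefunction}, now with the extra conditioning on $Z$.
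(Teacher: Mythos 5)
Your proposal takes essentially the same route as the paper: it too works from the moment condition with the $\sup/\inf$ already frozen at $z=1$ and $z=0$ via Lemma \ref{lemma:binary_iv} under Assumption \ref{assumption:binary_iv}, splits the pathwise derivative by the chain rule into the four nuisance components $\eta(1,\cdot,1)$, $\eta(0,\cdot,0)$, $p(\cdot,1)$, $p(\cdot,0)$ with exactly the coefficients you list, inserts the Ichimura--Newey representers now conditioned on $Z$, and regroups the propensity perturbations into $\phi_1$ and the outcome-regression perturbations into $\phi_2$. The only substantive difference is that you make explicit the envelope argument justifying differentiation through the $\sup/\inf$, which the paper leaves implicit by simply taking the simplified moment condition as its starting point.
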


Notice again that we have $E[\phi(w, \beta_{l,0}, \gamma_0)]=0$ so that the orthogonalized moment condition $\psi(w,\beta_{l},\gamma)$ still identifies our parameter of interest with  $E[\psi(w, \beta_{l,0}, \gamma_0)]=0$. The adjustment term again consists of two terms. In this case, while term $\phi_1$ represents the effect of local perturbations of the distribution of $D|X, Z$ on the moment, term $\phi_2$ represents the effect of local perturbations of the distribution of $Y|D,X,Z$ on the moment. 

\clearpage
\section{Empirical Application} \label{section:empirical_application}
In this section, I illustrate my analysis using experimental data from the National Job Training Partnership Act (JTPA) Study which was commissioned by the U.S. Department of Labor in 1986. The goal of this randomized experiment was to measure the benefits and costs of training programs funded under the JTPA of 1982. Applicants who were randomly assigned to a treatment group were allowed access to the program for 18 months while the ones assigned to a control group were excluded from receiving JTPA services in that period. The original evaluation of the program is based on data of 15,981 applicants. More detailed information about the experiment and program impact estimates can be found in \cite{bloom1997benefits}. 

I follow \cite{kitagawa2018should} and focus on  adult applicants with available data on 30-month earnings after the random assignment, years of education, and pre-program earnings.\footnote{ I downloaded the dataset that \cite{kitagawa2018should} used in their analysis from \href{https://www.econometricsociety.org/content/supplement-who-should-be-treated-empirical-welfare-maximization-methods-treatment-choice}{https://www.econometricsociety.org/content/supplement-who-should-be-treated-empirical-welfare-maximization-methods-treatment-choice}. I supplemented this dataset with that of \cite{abadie2002instrumental}, which I downloaded from \href{https://economics.mit.edu/faculty/angrist/data1/data/abangim02}{https://economics.mit.edu/faculty/angrist/data1/data/abangim02}, to obtain a variable that indicates program participation. }
Table \ref{table:sumstats} shows the summary statistics of this sample. The sample consists of 9223 observations, of which 6133 (roughly 2/3) were assigned to the treatment group, and 3090 (roughly 1/3) were assigned to the control group. The means and standard deviations of program participation, 30-month earnings, years of education, and pre-program earnings are given for the entire sample, the treatment group subsample, and the control group subsample.

Treatment variable is the job training program participation and equals 1 for individuals who actually participated in the program. Only 65\% of those who got assigned to the treatment group actually participated in the training program. I look at the joint distribution of assigned and realized treatment status in Table \ref{table:compliance} to further investigate the compliance issue. Outcome variable is 30-month earnings and is on average \$16,093 and ranges from \$0 to \$155,760 with median earnings \$11,187. In the analysis below, based on this range, I set $\underline y=\$0$ and $\bar y = \$160,000$. Treatment group assignees earned \$16,487 on average while control group assignees earned \$15,311. The \$1,176 difference between these two group averages is an estimate of the JTPA impact on earnings from an intention-to-treat perspective. Pretreatment covariates I consider are years of education and pre-program earnings. Years of education are on average 11.61 years and range from 7 to 18 years with median 12 years. Pre-program earnings are on average \$3,232 and range from \$0 to \$63,000 with median earnings \$1,600. Not surprisingly, both variables are roughly balanced by assignment status due to random assignment and large samples involved. 

Although the offer of treatment was randomly assigned, the compliance was not perfect. Table \ref{table:compliance} shows the joint distribution of assigned and realized treatment. Assigned treatment equals 1 for individuals who got offered the training program and realized treatment equals 1 for individuals who actually participated in the training. As can be seen from this table, the realized treatment is not equal to assigned treatment for roughly 23\% of the applicants. Therefore, the program participation is self-selected and likely to be correlated with potential outcomes.  Since the assumption of unconfoundedness fails to hold in this case, the treatment effects are not point identified. Although the random offer can be used as a treatment variable to point identify the intention-to-treat effect as in \cite{kitagawa2018should}, the actual program participation should be used to identify the treatment effect itself.

\bigskip

\begin{table}[!htbp]
\centering
\caption{Summary statistics}
\label{table:sumstats}
\begin{threeparttable}
	\begin{tabular}{lccc}	
		\hline \\[-1.8ex] 
		& Entire sample & Assigned to & Assigned to\\
		& & treatment & control \\
	\hline \\[-1.8ex] 
	\emph{Treatment} \\
	
	Job training & 0.44 & 0.65 & 0.01 \\
	& (0.50) & (0.48) & (0.12)\\
	& \\
	\emph{Outcome variable} \\
	30-month earnings & 16,093 & 16,487 & 15,311 \\
	& (17,071) & (17,391) & (16,392) \\
		&\\
	\emph{Pretreatment covariates} \\
	Years of education & 11.61 & 11.63 & 11.58 \\
	& (1.87) & (1.87) & (1.88)\\
	Pre-program earnings & 3,232 & 3,205 & 3,287\\
	 & (4,264) & (4,279) & (4,234) \\
	 &\\
	Number of observations & 9223 & 6133 & 3090\\
		&\\
		\hline	
	\end{tabular}
	\begin{tablenotes}
		\footnotesize
		\item This table reports the means and standard deviations (in brackets) of variables in our sample. Treatment variable is job training program participation and equals 1 for individuals who actually participated in the program. The outcome variable is 30-month earnings after the random assignment. Pretreatment covariates are years of education and pre-program annual earnings. The earnings are in US Dollars.
	\end{tablenotes}	
\end{threeparttable}		
\end{table}

\bigskip

\begin{table}[!htbp] 
	\centering
	\caption{The joint distribution of assigned and realized treatment}
	\label{table:compliance}
	\begin{threeparttable}
		
		\begin{tabular}{ccccccccc}	
			\hline \\[-1.8ex] 
			&	& \multicolumn{4}{c}{Assigned treatment}  & \\
			&	Realized treatment & & 1 & & 0 & & Total & \\
			\hline \\[-1.8ex] 
			&	1 & & 4015 & & 43  & & 4058 & \\
			&	0 & & 2118 & & 3047 & & 5165 & \\
			&	Total & &  6133 & & 3090 & & 9223 & \\
			\hline	
		\end{tabular}
		\begin{tablenotes}
			\footnotesize 
			\item This table reports the joint distribution of assigned and realized treatment in our sample. Assigned treatment equals 1 for individuals who got offered job training and realized treatment equals 1 for individuals who actually participated in the training. It shows the compliance issue in our sample. 
		\end{tablenotes}
	\end{threeparttable}
\end{table}

\bigskip

\begin{example}
Applicants were eligible for training if they faced a certain barriers to employment. This included being a high school dropout. Suppose the benchmark policy is to treat everyone with less than high school education, i.e., people who have less than or equal to 11 years of education. Now, consider implementing a new policy in which we include people with high school degree. In other words, let
\begin{align}
\delta^*& = \mathbbm1\{\text{education} \leq11\}, \\
\delta & = \mathbbm1\{\text{education} \leq12\}.
\end{align}

The estimates of lower and upper bounds on the welfare gain from this new policy under various assumptions and different instrumental variables are summarized in Table \ref{table:example1}. In this example, a random offer is used as an instrumental variable and pre-program earnings is used as a monotone instrumental variable. For the first-step estimation, I use cross-fitting with $K=2$ and estimate $\hat\eta(1,x)$, $\hat\eta(0,x)$ and $\hat p(x)$ by empirical means. Those empirical means out of whole sample are depicted in Figure \ref{fig:eta-edu} and \ref{fig:pedu} in the Appendix. Empirical means and distributions when years of education is used as $X$ and random offer is used as $Z$ are summarized in Table \ref{table:emp_education} in the Appendix.

As can be seen from Table \ref{table:example1}, the worst-case bounds cover $0$, as I explained earlier. Although we cannot rank which policy is better, we quantify the no-assumption scenario as a welfare loss of $\$31,423$ and a welfare gain of  $\$36,928$. Under the MTR assumption, the lower bound is $0$. That is because the MTR assumption states that everyone benefits from the treatment, and under the new policy, we are expanding the treated population. The upper bound under MTR is the same as the upper bound under the worst-case. When we use a random offer as an instrumental variable, the bounds are tighter than the worst-case bounds and still cover 0. However, when we use pre-program earnings as a monotone instrumental variable, the bounds do not cover 0, and it is even tighter if we impose an additional MTR assumption. Therefore, if the researcher is comfortable with the validity of the MIV assumption, she can conclude that implementing the new policy is guaranteed to improve welfare and that improvement is between $\$3,569$ and $\$36,616$. 

\end{example}

\begin{table}[!htbp] \centering 
	\caption{Welfare gains in Example 1} 
	\label{table:example1} 
	\begin{threeparttable}
	\begin{tabular}{@{\extracolsep{5pt}} lcc} 
		\hline \\[-1.8ex] 
		Assumptions & lower bound & upper bound\\
		\hline \\[-1.8ex] 
		worst-case	&  -31,423 & 36,928\\
		 & (-32,564, -30,282) & (35,699, 38,158)\\
		 &\\
		MTR	 & 0 & 36,928  \\
			 &  & (35699, 38158)\\
			 &\\
		IV-worst-case 	& -2,486  & 20,787\\
		 & (-2,774, -2,198) & (19,881, 21694)\\
		 &\\
		 
		IV-MTR	 & 0 & 20,787 \\ 
		&& (19,881, 21694) \\
		& \\
		MIV-worst-case & 3,569 & 36,616 \\
		&\\
		MIV-MTR  & 7,167  & 36,616 \\
		\hline \\[-1.8ex] 
	\end{tabular} 
	\begin{tablenotes}
	\footnotesize 
	\item This table reports the estimated welfare gains and their $95\%$ confidence intervals (in brackets) in Example 1 under various assumptions. The welfare is in terms of 30-month earnings in US Dollars.
	\end{tablenotes}
	\end{threeparttable}
\end{table} 

\begin{example}
One class of treatment rules that \cite{kitagawa2018should} considered is a class of quadrant treatment rules: 
\begin{align}
\begin{split}
 \mathscr G =\{\{x : &  s_1(\text{education}-t_1)>0 \text{ and } s_2(\text{pre-program earnings}-t_2)>0\}, \\
& s_1,s_2 \in\{-1,1\}, t_1, t_2 \in \mathbb R\}\}.
\end{split}
\end{align}
	
One's education level and pre-program earnings have to be above or below some specific thresholds to be assigned to treatment according to this treatment rule. Within this class of treatment rules, the empirical welfare maximizing treatment rule that \cite{kitagawa2018should} calculates is 
$ \mathbbm 1\{\text{education} \leq 15, \text{prior earnings} \leq \$19,670 \} .$  Let this policy be the benchmark policy and consider implementing another policy that lowers the education threshold to be $12$. In fact, that policy is another empirical welfare maximizing policy that takes into account the treatment assignment cost which is $\$774$ per assignee. I calculate the welfare difference between these two policies.
In other words, let
\begin{align}
\delta^*& =\mathbbm1\{\text{education} \leq15, \text{pre-program earnings} \leq \$19,670\},\\
\delta & =\mathbbm1\{\text{education} \leq12, \text{pre-program earnings} \leq \$19,670\}.
\end{align}
The estimation results are summarized in Table \ref{table:example2}. In this example, a random offer is used as an instrumental variable. For the first-step estimation, I use cross-fitting with $K=2$ and estimate $\hat\eta(1,x)$ and  $\hat\eta(0,x)$ by polynomial regression of degree $2$ and $\hat p(x)$ by logistic regression with polynomial of degree 2. Those estimated conditional mean treatment responses and propensity score out of whole sample are depicted in Figure \ref{fig:etaeduprevearn} and \ref{fig:peduprevearn} in the Appendix. 

As can be seen from Table \ref{table:example2}, again, the worst-case bounds cover 0. However, we quantify the no-assumption scenario as a welfare loss of $\$13,435$ and a welfare gain of $\$11,633$. Under the MTR assumption, the upper bound is 0. That is because the MTR assumption states that everyone benefits from the treatment, and under the new policy, we are shrinking the treated population. The lower bound under MTR is the same as the lower bound under the worst-case. When we use a random offer as an instrumental variable, the bounds are tighter and still cover 0 as well. Using IV assumption alone, which is a credible assumption since the offer was randomly assigned in the experiment, we quantify the difference as a welfare loss of $\$7,336$ and a welfare gain of $\$1,035$. In this case, the researcher cannot be sure whether implementing the new policy is guaranteed to worsen or improve welfare. However, if she decides that the welfare gain being at most $\$1,035$ is not high enough, she can go ahead with the first policy. 

\end{example} 

\begin{table}[!htbp] \centering 
	\caption{Welfare gains in Example 2} 
	\label{table:example2} 
	\begin{threeparttable}
	\begin{tabular}{@{\extracolsep{5pt}} lcc} 
		\hline \\[-1.8ex] 
		Assumptions & lower bound & upper bound\\
		\hline \\[-1.8ex] 
		worst-case	&  -13,435 & 11,633\\
		& (-14,361, -12,510) & (10,871, 12,394) \\
		&\\
		MTR	 & -13,435 & 0  \\
		& (-14,361, -12,510) \\
		&\\
		
		IV-worst-case& -7,336 & 1,035\\
		& (-7,911, -6,763) & (862, 1,208) \\
		&\\
		IV-MTR  & -7,336  & 0 \\ 
		& (-7,911, -6,763) & \\
		\hline \\[-1.8ex] 
	\end{tabular} 
	\begin{tablenotes}
	\footnotesize 
	\item This table reports the estimated welfare gains and their $95\%$ confidence intervals (in brackets) in Example 2 under various assumptions. The welfare is in terms of 30-month earnings in US Dollars.
\end{tablenotes}
\end{threeparttable}
\end{table} 

\clearpage
\section{Simulation Study} \label{section:simulation}
Mimicking the empirical application, I consider the following data generating process. Let $X$ be a discrete random variable with values $\{7,8,9,10,11,12,13,14,15,16,17,18\}$ and probability mass function $\{0.01, 0.06, 0.07, 0.11, 0.13, 0.43, 0.07, 0.06, 0.02, 0.02, 0.01, 0.01\}$. Conditional on $X=x$, let
\begin{align}
Z|X=x & \sim Bernoulli (2/3), \\
U|X=x, Z=z &\sim Unif[0,1] \text{ for } z\in\{0,1\}, \\
D &= \mathbbm 1\{p(X,Z) \geq U \}, \\
Y_1 |X=x, Z=z, U =u & \sim Lognormal\Big(log\frac{m_1^2(x,u)}{\sqrt{\sigma_1^2+m_1^2(x,u)}},\sqrt{log(\frac{\sigma_1^2}{m_1^2(x,u)}+1)}\Big),\\
Y_0 |X=x, Z=z, U=u & \sim Lognormal\Big(log\frac{m_0^2(x,u)}{\sqrt{\sigma_0^2+m_0^2(x,u)}},\sqrt{log(\frac{\sigma_0^2}{m_0^2(x,u)}+1)}\Big),
\end{align}
where
\begin{align}
p(x,z) & = \frac{1}{1+e^{-(-4.89+0.05\cdot x + 5\cdot z)}}, \\
m_1(x,u) & = E[Y_1|X=x, Z=z, U=u] = 5591+1027 \cdot x + 2000\cdot u, \\
m_0(x,u) & =  E[Y_0|X=x, Z=z, U=u] = -1127 + 1389 \cdot x + 1000 \cdot u,\\
\sigma^2_1 & = Var[Y_1|X=x, Z=z, U=u] = 11000^2,\\
\sigma^2_0 & = Var[Y_0|X=x, Z=z, U=u] = 11000^2.
\end{align}

In this specification, $X$ corresponds to years of education and takes values from $7$ to $18$. $Z$ corresponds to random offer and follows $Bernoulli(2/3)$ to reflect the fact that probability of being randomly assigned to the treatment group is $2/3$ irrespective of applicants' years of education. $D$ corresponds to program participation and equals $1$ whenever $p(x,z)$ exceeds the value of $U$ which is uniformly distributed on $[0,1]$. $Y_1$ and $Y_0$ are potential outcomes and observed outcome $Y=Y_1\cdot D + Y_0 \cdot (1-D)$ corresponds to 30-month post-program earnings. For $d\in\{0,1\}$, $Y_d$ conditional on $X$, $Z$, and $U$ follows a lognormal distribution whose mean is $m_d(x,u)$ and variance is $\sigma_d^2$. Under this structure, we have
\begin{align}
E[Y_d |X, Z] =E[Y_d|X] \text{ for }d\in\{0,1\}.
\end{align}
As in Example 1 in Section \ref{section:empirical_application}, consider the following pair of policies:
\begin{align}
\delta^*(x)=1\{x \leq11\} \text{ and } \delta(x)=1\{x \leq12\}.
\end{align}
Policy $\delta^*$ corresponds to treating everyone who has less than or equal to 11 years of education, and policy $\delta$ corresponds to treating everyone who has less than or equal to 12 years of education. Then, the population welfare gain is 1,236. The population worst-case bounds are (-31,191, 37,608) and IV-worst-case bounds are (-2,380, 21,227). As in Section \ref{section:empirical_application}, I set $\underline y=0$ and $\bar  y=160,000$ to calculate the bounds. More details on the calculation of these population quantities can be found in Appendix \ref{appendix:simulation}.

I focus on worst-case lower bound and report coverage probabilities and average lengths of $95\%$ confidence intervals, for samples sizes $n\in \{100,1000, 5000,10000\}$, out of 1000 Monte Carlo replications in Table \ref{table:simulation-wc-coverage}. I use empirical means in the first-step estimation of conditional mean treatment responses and propensity scores. I construct the confidence intervals using original and debiased moment conditions with and without cross-fitting. Confidence intervals constructed using original moments are invalid, and as expected, show undercoverage. However, confidence intervals obtained using debiased moment conditions show good coverage even with small sample size. I also report the results when true values of nuisance parameters are used to construct the confidence intervals. In that case, the coverage probability is around $0.95$ for both original and debiased moments, as expected. 

\begin{table}[!htbp]
	\centering
	\caption{95\% confidence interval for worst-case lower bound}
	\label{table:simulation-wc-coverage} 
	\begin{tabular}{@{\extracolsep{5pt}}ccccc} 
		\\[-1.8ex]
		\hline \\[-1.8ex] 
		& \multicolumn{2}{c}{Original moment} & \multicolumn{2}{c}{Debiased moment} \\
		\\[-1.8ex]
		Sample size & Coverage & Average length & Coverage & Average length \\
		\hline	\\[-1.8ex] 
		\multicolumn{5}{l}{when first-step is estimated with empirical means}  \\
		\multicolumn{5}{c}{without cross-fitting}  \\
		100 & 0.80 & 13976 &  0.94 & 21316\\
		1000 & 0.79 & 4454 & 0.95 & 6797 \\
		5000 & 0.78 & 1995 & 0.94 & 3045 \\
		10000 & 0.80 & 1412 & 0.96 & 2154 \\
		& \\
		\multicolumn{5}{c}{with cross-fitting $(L=2)$}  \\
		100 & 0.79 & 14180 & 0.94 & 21316 \\
		1000 & 0.78 & 4462 & 0.95 & 6797 \\
		5000 & 0.78 & 1996 & 0.94 & 3045 \\
		10000 &  0.80 & 1412 & 0.96 & 2154 \\
		
		&\\	
		\multicolumn{5}{l}{when true values of nuisance parameters are used}  \\
		100 & 0.95 & 14008 & 0.94 & 21316 \\
		1000 & 0.94 & 4449 & 0.95 & 6797 \\
		5000 & 0.95 & 1991 & 0.94 & 3045 \\
		10000 & 0.95 & 1408 & 0.96 & 2154 \\
		\hline \\[-1.8ex] 
		
		\multicolumn{5}{l}{Note: number of Monte Carlo replications is 1000}	
	\end{tabular}
\end{table}

\section{Conclusion} \label{section:conclusion} 
In this paper, I consider identification and inference of the welfare gain that results from switching from one policy to another policy. Understanding how much the welfare gain is under different assumptions on the unobservables allows policymakers to make informed decisions about how to choose between alternative treatment assignment policies. I use tools from theory of random sets to obtain the identified set of this parameter. I then employ orthogonalized moment conditions for the estimation and inference of these bounds. I illustrate the usefulness of the analysis by considering hypothetical policies with experimental data from the National JTPA study. I conduct Monte Carlo simulations to assess the finite sample performance of the estimators.

\clearpage

\appendix

\section{Random Set Theory} \label{appendix:randomsettheory}
In this appendix, I introduce some definitions and theorems from random set theory that are used throughout the paper. See \cite{molchanov2017theory} and \cite{molchanov2018random} for more detailed treatment of random set theory. Let $(\Omega,\mathfrak{A}, P)$ be a complete probability space and $\mathcal F$ be the family of closed subsets of $\mathbb R^d$.

\begin{definition}[Random closed set]
	A map $X: \Omega \to \mathcal F$ is called a random closed set if, for every compact set $K$ in $\mathbb R^d$, 
\begin{equation}
\{\omega\in\Omega: X(\omega)\cap K\neq\emptyset\} \in \mathfrak A.
\end{equation}	
\end{definition}

\begin{definition}[Selection]
A random vector $\xi$ with values in $\mathbb R^d$ is called a (measurable) selection of $X$ if $\xi(\omega)\in X(\omega)$ for almost all $\omega \in \Omega$. The family of all selections of $X$ is denoted by $\mathcal S(X)$. 
\end{definition}

\begin{definition}[Integrable selection]
Let $L^1 = L^1 (\Omega; \mathbb R^d)$ denote the space of $\mathfrak{A}$-measurable random vectors with values in $\mathbb R^d$ such that the $L^1$-norm $\|\xi\|_1 =E[\|\xi\|]$ is finite. If $X$ is a random closed set in $\mathbb R^d$, then the family of all integrable selections of $X$ is given by
\begin{equation}
\mathcal S^1(X) = \mathcal S(X) \cap L^1.
\end{equation}
\end{definition}

\begin{definition}[Integrable random sets]
	A random closed set $X$ is called integrable if $\mathcal S^1(X) \neq \emptyset$.
\end{definition}

\begin{definition}[Selection (or Aumann) expectation]
	The selection (or Aumann) expectation of $X$ is the closure of the set of all expectations of integrable selections, i.e.
	\begin{equation}
	\mathbb{E}[X]=cl\{\int_{\Omega} \xi d P: \xi \in \mathcal S^1(X)\}.
	\end{equation}
\end{definition}
\noindent Note that I use $\mathbb E[\cdot]$ for the Aumann expectation and reserve $E[\cdot]$ for the expectation of random variables and random vectors. 

\begin{definition}[Support function]
Let  $K \subset \mathbb R^d$ be a convex set. The support function of a set $K$ is given by
\begin{align}
s(v,K) = \sup_{x \in K}\langle v,x \rangle,  \indent v\in \mathbb R^d.
\end{align}
\end{definition}

\begin{theorem}[Theorem 3.11 in \cite{molchanov2018random}]
	\label{theorem:support}
	If an integrable random set $X$ is defined on a nonatomic probability space, or if $X$ is almost surely convex, then
	\begin{equation}
		E[s(v,X)] =s(v, \mathbb{E}[X]), \indent v\in \mathbb{R}^d. 
	\end{equation}
\end{theorem}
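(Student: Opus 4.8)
The final statement, Theorem~\ref{theorem:support}, is the classical identity relating the expected support function of an integrable random set to the support function of its Aumann expectation. The plan is to establish the two inequalities $s(v,\mathbb{E}[X])\le E[s(v,X)]$ and $E[s(v,X)]\le s(v,\mathbb{E}[X])$ separately, working throughout from the definition $\mathbb{E}[X]=\mathrm{cl}\{E[\xi]:\xi\in\mathcal{S}^1(X)\}$ and from the fact that the pairing $y\mapsto\langle v,y\rangle$ is linear and continuous on $\mathbb{R}^d$.

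For the first (easy) inequality, I would fix an arbitrary integrable selection $\xi\in\mathcal{S}^1(X)$. Since $\xi(\omega)\in X(\omega)$ almost surely, we have $\langle v,\xi(\omega)\rangle\le s(v,X(\omega))$ a.s., and integrating gives $\langle v,E[\xi]\rangle=E[\langle v,\xi\rangle]\le E[s(v,X)]$. Because $\langle v,\cdot\rangle$ is continuous, the supremum of this pairing over the set $\{E[\xi]:\xi\in\mathcal{S}^1(X)\}$ is unchanged by taking its closure, so $s(v,\mathbb{E}[X])=\sup_{\xi\in\mathcal{S}^1(X)}\langle v,E[\xi]\rangle\le E[s(v,X)]$.

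The substantive direction is $E[s(v,X)]\le s(v,\mathbb{E}[X])$, and here the key step is to exhibit a \emph{support selection}: a measurable $\xi^*\in\mathcal{S}^1(X)$ with $\langle v,\xi^*(\omega)\rangle=s(v,X(\omega))$ almost surely. I would obtain $\xi^*$ by noting that $\omega\mapsto s(v,X(\omega))$ is measurable (a standard consequence of $X$ being a random closed set), so that the exposed face $F_v(\omega)\equiv X(\omega)\cap\{x:\langle v,x\rangle=s(v,X(\omega))\}$ is itself a random closed set, which is almost surely nonempty whenever $X(\omega)$ is compact; the Fundamental Measurable Selection Theorem then yields a measurable selection $\xi^*$ of $F_v$, and $\xi^*$ is integrable because it is dominated by the integrable envelope of the integrable set $X$. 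Then $E[s(v,X)]=E[\langle v,\xi^*\rangle]=\langle v,E[\xi^*]\rangle$, and since $E[\xi^*]\in\mathbb{E}[X]$ this is at most $s(v,\mathbb{E}[X])$. Combining the two inequalities gives the claimed equality.

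One conceptual point worth recording: the two inequalities above do not actually use nonatomicity or almost-sure convexity of $X$ --- the support-function identity holds for every integrable random closed set, since $s(v,\cdot)$ only sees the closed convex hull of a set. The role of those hypotheses is instead to guarantee that $\mathbb{E}[X]$ is itself \emph{convex} (on a nonatomic space via a Lyapunov convexity argument; when $X$ is a.s.\ convex simply because convex combinations of selections are selections), so that $\mathbb{E}[X]$ equals its own closed convex hull and is therefore recovered in full from the function $v\mapsto E[s(v,X)]$ --- precisely what the downstream identification arguments require. The main obstacle is the construction and measurability of the support selection $\xi^*$ (equivalently, measurability of the face-valued map $\omega\mapsto F_v(\omega)$ together with integrability of $\xi^*$); in the fully general statement one must also handle the case $s(v,X(\omega))=+\infty$ on a set of positive probability and read the identity in the extended reals, but in the bounded setting relevant to this paper both issues are routine.
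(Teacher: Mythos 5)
The paper does not prove this statement: it is imported verbatim as Theorem 3.11 of \cite{molchanov2018random} in Appendix A, so there is no in-paper argument to compare against. Your proof is the standard textbook one and is essentially correct: the easy inequality $s(v,\mathbb{E}[X])\le E[s(v,X)]$ from pointwise domination of selections, and the converse via a measurable selection of the exposed face $F_v$, with the sup over $\mathcal S^1(X)$ unaffected by taking the closure. Your conceptual remark is also right --- the equality itself does not need nonatomicity or a.s.\ convexity; those hypotheses serve to make $\mathbb{E}[X]$ convex so that it is recovered from its support function, which is what the sharpness claims downstream actually use. The one loose step is the integrability of $\xi^*$: an integrable random set (one with $\mathcal S^1(X)\neq\emptyset$) need not admit an integrable envelope $\sup_{x\in X}\|x\|$, so in general one patches a near-maximizing selection together with a fixed integrable selection outside a large event and passes to the limit; for the bounded rectangular sets used in this paper, your domination argument suffices as stated.
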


\section{Proofs and Useful Lemmas}
\subsection*{Proof of Lemma \ref{lemma:expectedsupport}}
	By the definition of selection expectation, we have $E[(Y_1, Y_0)'|X] \in \mathbb{E}[\mathcal{Y}_1 \times \mathcal{Y}_0|X]$. Then by the definition of support function and Theorem \ref{theorem:support}, for any $v\in \mathbb R^2$, we have
	\begin{equation} 
	\begin{split}
	v'E[(Y_1, Y_0)'|X] &\leq s(v,\mathbb{E}[\mathcal{Y}_1 \times \mathcal{Y}_0|X])\\
	& = E[s(v,\mathcal{Y}_1 \times \mathcal{Y}_0)|X].
	\end{split}
	\end{equation}
	For any $v\in \mathbb R^2$, we can write
	\begin{equation} 
	\begin{split}
	-v'E[(Y_1, Y_0)'|X] &\leq s(-v,\mathbb{E}[\mathcal{Y}_1 \times \mathcal{Y}_0|X])\\
	& = E[s(-v,\mathcal{Y}_1 \times \mathcal{Y}_0)|X].
	\end{split}
	\end{equation}
	Thus, we also have 
	\begin{equation} 
	\begin{split}
	v'E[(Y_1, Y_0)'|X] \geq -E[s(-v,\mathcal{Y}_1 \times \mathcal{Y}_0)|X]. 
	\end{split}
	\end{equation} \qed

\subsection*{Proof of Theorem \ref{theorem:main}}
We write $\Delta(X) \equiv E[Y_1-Y_0|X] = {v^*}'E[(Y_1, Y_0)'|X]$ for $v^*=(1,-1)'$. By Lemma \ref{lemma:expectedsupport}, we have 
	\begin{equation} \label{equation:delta}
	\underline{\Delta}(X) = -E[s(-v^*,\mathcal{Y}_1 \times \mathcal{Y}_0)|X] \leq \Delta(X) \leq E[s(v^*,\mathcal{Y}_1 \times \mathcal{Y}_0)|X] = \bar{\Delta}(X) \indent a.s.
	\end{equation}
	
	\noindent Since $\delta(X)-\delta^*(X)$ can take values in $\{-1,0,1\}$, we consider two cases: (i) $\delta(X)-\delta^*(X)=1$ and (ii) $\delta(X)-\delta^*(X)=-1$. When (i) $\delta(X)-\delta^*(X)=1$, the upper bound on $\Delta(X)\cdot(\delta(X)- \delta^*(X))$ is $\bar\Delta(X)$. When (ii) $\delta(X)-\delta^*(X)=-1$, the upper bound on $\Delta(X)\cdot(\delta(X)- \delta^*(X))$ is $-\underline\Delta(X)$. Hence, the upper bound on $E[\Delta(X)\cdot(\delta(X)- \delta^*(X))]$ should be
	\begin{align}
		\beta_u = E[\bar{\Delta}(X)\cdot \theta_{10}(X) -\underline{\Delta}(X) \cdot \theta_{01}(X)].
	\end{align}
	Similarly, the lower bound on $E[\Delta(X)\cdot(\delta(X)- \delta^*(X))]$ should be
\begin{align}
\beta_l = E[\underline{\Delta}(X) \cdot\theta_{10}(X)  - \bar{\Delta}(X)\cdot \theta_{01}(X)]. 
\end{align}\qed

%To establish sharpness, we show that there exists $(Y_1,Y_0)$ such that 

\begin{lemma}\label{lemma:randomset}
Suppose $(\mathcal{Y}_1 \times \mathcal{Y}_0): \Omega \to \mathcal F$ is of the following form: 
\begin{align}
\mathcal{Y}_1 \times \mathcal{Y}_0 = 
\begin{cases}
\{Y\}\times [Y_{L,0}, Y_{U,0}] $ if $ D=1, \\
[Y_{L,1}, Y_{U,1}] \times \{Y\} $ if $ D=0,
\end{cases}
\end{align}
where $Y$ is a random variable and each of $Y_{L,0},   Y_{U,0}, Y_{L,1},$ and $Y_{U,1}$ can be a constant or a random variable. Let $v^*=(1,-1)'$, $v_1=(1,0)'$, and $v_0=(0,1)'$. Then, we have
\begin{align*}
E[s(v_1, \mathcal{Y}_1 \times \mathcal{Y}_0)|X]
& = E[Y|D=1, X]\cdot P(D=1|X) + E[Y_{U,1}|D=0, X]\cdot P(D=0|X), \\
-E[s(-v_1, \mathcal{Y}_1 \times \mathcal{Y}_0)|X] 
& = E[Y|D=1, X]\cdot P(D=1|X) + E[Y_{L,1}|D=0, X]\cdot P(D=0|X), \\
E[s(v_0, \mathcal{Y}_1 \times \mathcal{Y}_0)|X] 
& = E[Y_{U,0}|D=1, X]\cdot P(D=1|X) + E[Y|D=0,X]\cdot P(D=0|X), \\
-E[s(-v_0, \mathcal{Y}_1 \times \mathcal{Y}_0)|X] 
& = E[Y_{L,0}|D=1, X]\cdot P(D=1|X) + E[Y|D=0,X]\cdot P(D=0|X), \\
E[s(v^*, \mathcal{Y}_1 \times \mathcal{Y}_0)|X]
& = (E[Y|D=1, X]-E[Y_{L,0}|D=1, X])\cdot P(D=1|X) \\
& \qquad +(E[Y_{U,1}|D=0, X]-E[Y|D=0, X])\cdot P(D=0|X), \\
-E[s(-v^*, \mathcal{Y}_1 \times \mathcal{Y}_0)|X] 
& = (E[Y|D=1, X]-E[Y_{U,0}|D=1,X])\cdot P(D=1|X) \\
& \qquad + (E[Y_{L,1}|D=0,X]-E[Y|D=0, X])\cdot P(D=0|X). 
\end{align*}
\end{lemma}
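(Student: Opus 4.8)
The plan is to reduce everything to the elementary behaviour of support functions under Cartesian products, evaluate pointwise on the two events $\{D=1\}$ and $\{D=0\}$ where the random set has an explicit form, and then take conditional expectations via the law of iterated expectations. First I would record the two facts that do all the work. For closed convex sets $A,B\subset\mathbb R$ and a direction $v=(v_a,v_b)'\in\mathbb R^2$, the support function separates across the product, $s(v,A\times B)=s(v_a,A)+s(v_b,B)$; and in one dimension $s(t,\{c\})=tc$, while $s(t,[l,u])=tu$ for $t\ge 0$ and $s(t,[l,u])=tl$ for $t<0$. (Integrability of $\mathcal{Y}_1\times\mathcal{Y}_0$ guarantees that $Y,Y_{L,0},Y_{U,0},Y_{L,1},Y_{U,1}$ are $\mathfrak A$-measurable and integrable, so all the conditional expectations below are well defined.)

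Next I would evaluate $s(\cdot,\mathcal{Y}_1\times\mathcal{Y}_0)$ pointwise on each of the two events. On $\{D=1\}$ the set equals $\{Y\}\times[Y_{L,0},Y_{U,0}]$, which gives, for instance, $s(v_1,\cdot)=Y$, $s(-v_1,\cdot)=-Y$, $s(v_0,\cdot)=Y_{U,0}$, $s(-v_0,\cdot)=-Y_{L,0}$, $s(v^*,\cdot)=Y-Y_{L,0}$ and $s(-v^*,\cdot)=-Y+Y_{U,0}$. On $\{D=0\}$ the set equals $[Y_{L,1},Y_{U,1}]\times\{Y\}$, giving $s(v_1,\cdot)=Y_{U,1}$, $s(-v_1,\cdot)=-Y_{L,1}$, $s(v_0,\cdot)=Y$, $s(-v_0,\cdot)=-Y$, $s(v^*,\cdot)=Y_{U,1}-Y$ and $s(-v^*,\cdot)=-Y_{L,1}+Y$.

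Finally I would take $E[\cdot\mid X]$. Since $\{D=1\}$ and $\{D=0\}$ partition $\Omega$, for any integrable scalar $\xi$ we have $E[\xi\mid X]=E[\xi\,\mathbbm 1\{D=1\}\mid X]+E[\xi\,\mathbbm 1\{D=0\}\mid X]=E[\xi\mid D=1,X]\,P(D=1\mid X)+E[\xi\mid D=0,X]\,P(D=0\mid X)$. Applying this with $\xi$ equal to each of the twelve pointwise expressions above, and collecting the terms multiplying $P(D=1\mid X)$ and $P(D=0\mid X)$, produces the six displayed identities; the last two follow from the $v_1,v_0$ formulas after noting $s(v^*,\cdot)=s(v_1,\cdot)+s(-v_0,\cdot)$ and $s(-v^*,\cdot)=s(-v_1,\cdot)+s(v_0,\cdot)$ pointwise, or equivalently by reading off $s(v^*,\cdot)$ directly as above.

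There is no genuine conceptual obstacle here; the only thing to be careful about is the sign convention in $s(t,[l,u])$, since reversing the direction from $v$ to $-v$ switches which endpoint of the interval is selected — so the main risk is a bookkeeping slip across the six directions and two conditioning events. I would therefore organise the write-up as a short table of the twelve pointwise values before invoking the iterated-expectation identity once.
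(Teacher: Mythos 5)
Your proposal is correct and follows essentially the same route as the paper: evaluate the supremum of $\langle v,\cdot\rangle$ pointwise on the two events $\{D=1\}$ and $\{D=0\}$ where the random set has an explicit product form, then split the conditional expectation by conditioning on $D$. Your explicit use of the product decomposition $s(v,A\times B)=s(v_a,A)+s(v_b,B)$ and the tabulation of the twelve pointwise values is just a cleaner organization of the same computation, and all the sign bookkeeping checks out against the paper's displayed identities.
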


\begin{proof}
	We have	
	\begin{align*}
	\begin{split}
	E[s(v_1, \mathcal{Y}_1 \times \mathcal{Y}_0)|X] &= E[\sup_{(y_1, y_0) \in \begin{cases}
		\{Y\}\times [Y_{L,0}, Y_{U,0}] $ if $ D=1, \\
		[Y_{L,1}, Y_{U,1}] \times \{Y\} $ if $ D=0.
		\end{cases}} y_1|X]\\
	& = E[Y|D=1, X]\cdot P(D=1|X) + E[Y_{U,1}|D=0, X]\cdot P(D=0|X), \\
	& \\
	-E[s(-v_1, \mathcal{Y}_1 \times \mathcal{Y}_0)|X]&= -E[\sup_{(y_1, y_0) \in \begin{cases}
		\{Y\}\times [Y_{L,0}, Y_{U,0}] $ if $ D=1, \\
		[Y_{L,1}, Y_{U,1}] \times \{Y\} $ if $ D=0.
		\end{cases}} -y_1|X]\\
	&= E[\inf_{(y_1, y_0) \in \begin{cases}
		\{Y\}\times [Y_{L,0}, Y_{U,0}] $ if $ D=1, \\
		[Y_{L,1}, Y_{U,1}] \times \{Y\} $ if $ D=0.
		\end{cases}} y_1|X]\\
	& = E[Y|D=1, X]\cdot P(D=1|X) + E[Y_{L,1}|D=0, X]\cdot P(D=0|X),\\
	& \\
	E[s(v_0, \mathcal{Y}_1 \times \mathcal{Y}_0)|X] &= E[\sup_{(y_1, y_0) \in \begin{cases}
		\{Y\}\times [Y_{L,0}, Y_{U,0}] $ if $ D=1, \\
		[Y_{L,1}, Y_{U,1}] \times \{Y\} $ if $ D=0.
		\end{cases}} y_0|X]\\
	& = E[Y_{U,0}|D=1, X]\cdot P(D=1|X) + E[Y|D=0, X]\cdot P(D=0|X),\\
	& \\
	-E[s(-v_0, \mathcal{Y}_1 \times \mathcal{Y}_0)|X] &= -E[\sup_{(y_1, y_0) \in \begin{cases}
		\{Y\}\times [Y_{L,0}, Y_{U,0}] $ if $ D=1, \\
		[Y_{L,1}, Y_{U,1}] \times \{Y\} $ if $ D=0.
		\end{cases}} -y_0|X]\\
	&= E[\inf_{(y_1, y_0) \in \begin{cases}
		\{Y\}\times [Y_{L,0}, Y_{U,0}] $ if $ D=1, \\
		[Y_{L,1}, Y_{U,1}] \times \{Y\} $ if $ D=0.
		\end{cases}} y_0|X]\\
	& = E[Y_{L,0}|D=1, X]\cdot P(D=1|X) + E[Y|D=0, X]\cdot P(D=0|X),\\
	\end{split}	
	\end{align*}
	\begin{align*}
	\begin{split}
	E[s(v^*, \mathcal{Y}_1 \times \mathcal{Y}_0)|X] & = E[\sup_{(y_1, y_0) \in \begin{cases}
		\{Y\}\times [Y_{L,0}, Y_{U,0}] $ if $ D=1, \\
		[Y_{L,1}, Y_{U,1}] \times \{Y\} $ if $ D=0.
		\end{cases}} y_1-y_0|X]\\
	& = (E[Y|D=1, X]-E[Y_{L,0}|D=1, X])\cdot P(D=1|X)  \\ 
	&\qquad +(E[Y_{U,1}|D=0, X]-E[Y|D=0, X])\cdot P(D=0|X), \\
	& \\
	-E[s(-v^*, \mathcal{Y}_1 \times \mathcal{Y}_0)|X] &= -E[\sup_{(y_1, y_0) \in \begin{cases}
		\{Y\}\times [Y_{L,0}, Y_{U,0}] $ if $ D=1, \\
		[Y_{L,1}, Y_{U,1}] \times \{Y\} $ if $ D=0.
		\end{cases}} -y_1+y_0|X]\\
	&= E[\inf_{(y_1, y_0) \in \begin{cases}
		\{Y\}\times [Y_{L,0}, Y_{U,0}] $ if $ D=1, \\
		[Y_{L,1}, Y_{U,1}] \times \{Y\} $ if $ D=0.
		\end{cases}} y_1-y_0|X]\\
	& = (E[Y|D=1, X]-E[Y_{U,0}|D=1,X])\cdot P(D=1|X) \\
	& \qquad + (E[Y_{L,1}|D=0,X]-E[Y|D=0, X])\cdot P(D=0|X). \qedhere
	\end{split}	
	\end{align*}	
\end{proof}	

\subsection*{Proof of Corollary \ref{corollary:worstcase}}
By setting $Y_{L,1}=Y_{L,0}=\underline y$ and $Y_{U,1}=Y_{U,0}=\bar y$ in Lemma \ref{lemma:randomset}, we have
\begin{align}
E[s(v^*, \mathcal{Y}_1 \times \mathcal{Y}_0)|X]
& = (\eta(1,X)-\underline{y})\cdot p(X) + (\bar{y}-\eta(0,X))\cdot(1-p(X)),\\
-E[s(-v^*, \mathcal{Y}_1 \times \mathcal{Y}_0)|X]
& = (\eta(1,X)-\bar{y})\cdot p(X) + (\underline{y}-\eta(0,X))\cdot (1-p(X)).
\end{align}
Plugging these in, the result follows from Theorem \ref{theorem:main}. \placeqed

\subsection*{Proof of Corollary \ref{corollary:mtr}}
By setting $Y_{L,0}=\underline y$, $Y_{U,0}=Y_{L,1}=Y$, and $Y_{U,1}=\bar y$ in Lemma \ref{lemma:randomset}, we have
\begin{align}
E[s(v^*, \mathcal{Y}_1 \times \mathcal{Y}_0)|X]
& = (\eta(1,X)-\underline{y})\cdot p(X) + (\bar{y}-\eta(0,X))\cdot (1-p(X)),\\
-E[s(-v^*, \mathcal{Y}_1 \times \mathcal{Y}_0)|X] 
& = 0. 
\end{align}
Plugging these in, the result follows from Theorem \ref{theorem:main}. \placeqed

\subsection*{Proof of Lemma \ref{lemma:expectedsupport_iv}}
	By the definition of selection expectation, we have $E[(Y_1, Y_0)'|X, Z] \in \mathbb{E}[\mathcal{Y}_1 \times \mathcal{Y}_0|X, Z]$. 
	By arguments that appear in Lemma \ref{lemma:expectedsupport}, for any $v\in \mathbb R^2$ and for all $z\in\mathcal Z$, we have 
	\begin{equation} 
	-E[s(-v,\mathcal{Y}_1 \times \mathcal{Y}_0)|X, Z=z] \leq v'E[(Y_1, Y_0)'|X, Z=z] \leq E[s(v,\mathcal{Y}_1 \times \mathcal{Y}_0)|X, Z=z].
	\end{equation}
	Assumption \ref{assumption:iv} implies that 
	\begin{equation} \label{equation:iv_independence}
	E[Y_d|X,Z]=E[Y_d|X],\indent d=0,1.
	\end{equation}
	Hence, for all $z\in\mathcal Z$, the following holds:
	\begin{equation} 
	-E[s(-v,\mathcal{Y}_1 \times \mathcal{Y}_0)|X, Z=z] \leq v'E[(Y_1, Y_0)'|X] \leq E[s(v,\mathcal{Y}_1 \times \mathcal{Y}_0)|X, Z=z].
	\end{equation}
	We therefore have
	\begin{equation} 
	\sup_{z\in\mathcal Z}\big\{ -E[s(-v,\mathcal{Y}_1 \times \mathcal{Y}_0)|X, Z=z]\big\} \leq v'E[(Y_1, Y_0)'|X] \leq \inf_{z\in \mathcal Z} \big\{ E[s(v,\mathcal{Y}_1 \times \mathcal{Y}_0)|X, Z=z]\big\}. 
	\end{equation}\qed

\subsection*{Proof of Theorem \ref{theorem:iv}}
By Lemma \ref{lemma:expectedsupport_iv}, we have
\begin{equation} \label{equation:delta_iv}
\sup_{z\in\mathcal Z} \big\{ -E[s(-v^*,\mathcal{Y}_1 \times \mathcal{Y}_0)|X, Z=z]\big\} \leq \Delta(X) \leq \inf_{z\in \mathcal Z} \big\{ E[s(v^*,\mathcal{Y}_1 \times \mathcal{Y}_0)|X, Z=z]\big\} \indent a.s.
\end{equation}
The remaining part of the proof is the same as that of Theorem \ref{theorem:main}. \placeqed

\subsection*{Proof of Corollary \ref{corollary:iv_worstcase}}
The statements in Lemma \ref{lemma:randomset} still hold when we condition on an additional variable $Z$. Hence, by setting $Y_{L,1}=Y_{L,0}=\underline y$ and $Y_{U,1}=Y_{U,0}=\bar y$ in Lemma \ref{lemma:randomset}, we have 
\begin{align}
E[s(v^*, \mathcal{Y}_1 \times \mathcal{Y}_0)|X, Z=z]
& = (\eta(1,X,z)-\underline{y})\cdot p(X, z) + (\bar{y}-\eta(0,X,z))\cdot(1-p(X,z)),\\
-E[s(-v^*, \mathcal{Y}_1 \times \mathcal{Y}_0)|X, Z=z]
& = (\eta(1,X,z)-\bar{y})\cdot p(X,z) + (\underline{y}-\eta(0,X,z))\cdot(1-p(X,z)),
\end{align}
for all $z \in \mathcal Z$. Plugging these in, the result follows from Theorem \ref{theorem:iv}. \placeqed

\subsection*{Proof of Corollary \ref{corollary:iv_mtr}}
The statements in Lemma \ref{lemma:randomset} still hold when we condition on an additional variable $Z$. Hence, by setting $Y_{L,0}=\underline y$, $Y_{U,0}=Y_{L,1}=Y$, and $Y_{U,1}=\bar y$ in Lemma \ref{lemma:randomset}, we have
\begin{align}
E[s(v^*, \mathcal{Y}_1 \times \mathcal{Y}_0)|X,Z=z]
& = (\eta(1,X,z)-\underline{y})\cdot p(X,z) + (\bar{y}-\eta(0,X,z))\cdot(1-p(X,z)),\\
-E[s(-v^*, \mathcal{Y}_1 \times \mathcal{Y}_0)|X, Z=z] 
& = 0,
\end{align}
for all $z\in \mathcal Z$.
Plugging these in, the result follows from Theorem \ref{theorem:iv}. \placeqed

\subsection*{Proof of Lemma \ref{lemma:expectedsupport_miv}}
	By the definition of selection expectation, we have $E[(Y_1, Y_0)'|X, Z] \in \mathbb{E}[\mathcal{Y}_1 \times \mathcal{Y}_0|X, Z]$. 
	By arguments that appear in Lemma \ref{lemma:expectedsupport}, for any $v\in \mathbb R^2_{+}$ and for all $z\in\mathcal Z$, we have 
	\begin{equation} 
	-E[s(-v,\mathcal{Y}_1 \times \mathcal{Y}_0)|X, Z=z] \leq v'E[(Y_1, Y_0)'|X, Z=z] \leq E[s(v,\mathcal{Y}_1 \times \mathcal{Y}_0)|X, Z=z].
	\end{equation}
	By Assumption \ref{assumption:miv}, the following holds for all $z\in\mathcal Z$:  
	\begin{equation} 
	\sup_{z_1 \leq z} \big\{-E[s(-v,\mathcal{Y}_1 \times \mathcal{Y}_0)|X, Z=z_1]\big\} \leq v'E[(Y_1, Y_0)'|X, Z=z] \leq \inf_{z_2 \geq z}  \big\{E[s(v,\mathcal{Y}_1 \times \mathcal{Y}_0)|X, Z=z_2] \big\}.
	\end{equation}
	By replacing $v$ with $v_1=(1,0)'$ and $v_0=(0,1)'$ and integrating everything with respect to $Z$, we obtain the following:
	\begin{equation} \label{equation:miv_y1_low}
	\begin{split}
	E[Y_1|X] \geq \sum_{z\in \mathcal Z}P(Z=z)\cdot \big(\sup_{z_1 \leq z}  \big\{ -E[s(-v_1,\mathcal{Y}_1 \times \mathcal{Y}_0)|X, Z=z_1]\big\}\big), 
	\end{split}
	\end{equation}
	\begin{equation} \label{equation:miv_y1_up}
	\begin{split}
E[Y_1|X] \leq \sum_{z\in \mathcal Z}P(Z=z)\cdot \big( \inf_{z_2 \geq z} \big\{E[s(v_1,\mathcal{Y}_1 \times \mathcal{Y}_0)|X, Z=z_2]\big\}\big),
	\end{split}
	\end{equation}
		\begin{equation}  \label{equation:miv_y0_low}
	\begin{split}
E[Y_0|X] \geq	\sum_{z\in \mathcal Z}P(Z=z)\cdot \big(\sup_{z_1 \leq z} \big\{ -E[s(-v_0,\mathcal{Y}_1 \times \mathcal{Y}_0)|X, Z=z_1] \big\}\big),
	\end{split}
	\end{equation}
	\begin{equation} \label{equation:miv_y0_up}
	\begin{split}
E[Y_0|X] \leq
\sum_{z\in \mathcal Z}P(Z=z)\cdot \big(\inf_{z_2 \geq z}  \big\{E[s(v_0,\mathcal{Y}_1 \times \mathcal{Y}_0)|X, Z=z_2] \big\}\big).
	\end{split}
	\end{equation}
Then, the upper bound in (\ref{equation:miv_delta}) can be obtained by subtracting the lower bound on $E[Y_0|X]$ (\ref{equation:miv_y0_low}) from the upper bound on $E[Y_1|X]$ (\ref{equation:miv_y1_up}). Similarly, the lower bound in (\ref{equation:miv_delta}) can be obtained by subtracting the upper bound on $E[Y_0|X]$ (\ref{equation:miv_y0_up}) from the lower bound on $E[Y_1|X]$ (\ref{equation:miv_y1_low}). \placeqed

\subsection*{Proof of Theorem \ref{theorem:miv}}
Bounds on $\Delta(X)$ is derived in Lemma \ref{lemma:expectedsupport_miv}. The remaining part of the proof is the same as that of Theorem \ref{theorem:main}. \placeqed

\subsection*{Proof of Corollary \ref{corollary:miv_worstcase}}
The statements in Lemma \ref{lemma:randomset} still hold when we condition on an additional variable $Z$. Hence, by setting $Y_{L,1}=Y_{L,0}=\underline y$ and $Y_{U,1}=Y_{U,0}=\bar y$ in Lemma \ref{lemma:randomset}, for all $z\in \mathcal Z$, we have
\begin{align}
E[s(v_1, \mathcal{Y}_1 \times \mathcal{Y}_0)|X, Z=z]
& = \eta(1,X,z) \cdot p(X,z) + \bar{y}\cdot (1-p(X,z)), \\
-E[s(-v_1, \mathcal{Y}_1 \times \mathcal{Y}_0)|X, Z=z]
& = \eta(1,X,z)\cdot p(X,z) + \underline{y}\cdot (1-p(X,z)), \\
E[s(v_0, \mathcal{Y}_1 \times \mathcal{Y}_0)|X, Z=z]
& = \bar{y}\cdot p(X,z) + \eta(0,X,z)\cdot (1-p(X,z)), \\
-E[s(-v_0, \mathcal{Y}_1 \times \mathcal{Y}_0)|X, Z=z]
& = \underline{y}\cdot p(X,z) + \eta(0,X,z)\cdot (1-p(X,z)).
\end{align}
Plugging these in, the result follows from Theorem \ref{theorem:miv}. \placeqed

\subsection*{Proof of Corollary \ref{corollary:miv_mtr}}
The statements in Lemma \ref{lemma:randomset} still hold when we condition on an additional variable $Z$.  Hence, by setting $Y_{L,0}=\underline y$, $Y_{U,0}=Y_{L,1}=Y$, and $Y_{U,1}=\bar y$ in Lemma \ref{lemma:randomset}, for all $z\in \mathcal Z$, we also have
\begin{align}
E[s(v_1, \mathcal{Y}_1 \times \mathcal{Y}_0)|X, Z=z]
& = \eta(1,X,z)\cdot p(X,z) + \bar y\cdot (1-p(X,z)), \\
-E[s(-v_1, \mathcal{Y}_1 \times \mathcal{Y}_0)|X, Z=z]
& = E[Y|X, Z=z], \\
E[s(v_0, \mathcal{Y}_1 \times \mathcal{Y}_0)|X, Z=z] 
&= E[Y|X, Z=z], \\
-E[s(-v_0, \mathcal{Y}_1 \times \mathcal{Y}_0)|X, Z=zv]
& = \underline{y}\cdot p(X,z) + \eta(0,X,z)\cdot (1-p(X,z)),
\end{align}
Plugging these in, the result follows from Theorem \ref{theorem:miv}. \placeqed

\subsection*{Proof of Lemma \ref{lemma:influencefunction}}
	For $0\leq \tau \leq 1$, let 
	\begin{equation}
	F_\tau = (1-\tau)F_0 + \tau G_w^j,
	\end{equation} 
	where $F_0$ is the true distribution of $F$ and $G_w^j$ is a family of distributions approaching the CDF of a constant $w$ as $j\to \infty$. Let $F_0$ be absolutely continuous with pdf $f_0(w)=f_0(y,d,x)$. Let the marginal, conditional, and joint distributions and densities under $F_0$ be denoted by $F_0(x), F_0(d|x),F_0(y|d, x), F_0(d,x)$ and $f_0(x), f_0(d|x), f_0(y|d,x), f_0(d,x)$, etc. and the expectations under $F_0$ be denoted by $E_0$. As in \cite{ichimura2017influence}, let 
	\begin{equation}
	G^j_w(\tilde{w}) =E[1\{w_i\leq \tilde{w}\}\varphi(w_i)],
	\end{equation}
	where $\varphi(w_i)$ is a bounded function with $E[\varphi(w_i)] =1$. This $G^j_w(\tilde{w})$ will approach the cdf of the constant $\tilde{w}$ as $\varphi(w)f_0(w)$ approaches a spike at $\tilde w$. For small enough $\tau$, $F_\tau$ will be a cdf with pdf $f_\tau$ that is given by 
	\begin{equation}
	f_\tau(\tilde w) = f_0(\tilde w)[1-\tau + \tau \varphi (w)] = f_0(\tilde w)(1+\tau S(w)), S(w)=\varphi(w)-1.
	\end{equation}
	Let the marginal, conditional, and joint distributions and densities under $F_\tau$ be similarly denoted by $F_\tau(x), F_\tau(d|x),F_\tau(y|d, x), F_\tau(d,x)$ and $f_\tau(x), f_\tau(d|x), f_\tau(y|d,x), f_\tau(d,x)$, etc. and the expectations under $F_\tau$ be denoted by $E_\tau$. 
	By \cite{ichimura2017influence}'s Lemma A1, we have 
	
	\begin{equation} \label{equation:derivative_tauydx}
	\frac{d }{d\tau} E_\tau[Y|D=d,X=x] = E_0[\{Y-E_0[Y|D=d, X=x]\}\varphi(W)|D=d, X=x]
	\end{equation}
	and
	\begin{equation}\label{equation:derivative_taudx}
	\frac{d }{d\tau} E_\tau[ 1\{D=d\}|X=x] = E_0[\{1\{D=d\}-E_0[1\{D=d\}|X=x]\}\varphi(W)|X=x].
	\end{equation}
	The influence function can be calculated as 
	\begin{equation}
	\phi(w, \beta, \gamma) = \lim_{j \to \infty} \Big[\frac{d }{d\tau}E_\tau[m(w_i,\beta,\gamma(F_\tau))]\bigg |_{\tau=0}\Big].
	\end{equation}
	\noindent We first denote the conditional mean treatment response and the propensity score under $F_\tau$ by
	\begin{equation}
	\eta_\tau(d,x) \equiv \int y d F_{\tau}(y|d,x),
	\end{equation}
	and
	\begin{equation}
	p_\tau(x) \equiv \int 1\{d=1\} d F_{\tau}(d|x).
	\end{equation}
	Then, by the chain rule, we have
	\begin{align*}
	\frac{d }{d\tau} E_\tau[m(w_i,\beta,\gamma(F_\tau))]
	= &\frac{d }{d\tau} E_\tau[m(w_i,\beta,\gamma(F_0))] + \frac{d }{d\tau} E_0[m(w_i,\beta,\gamma(F_\tau))] \\
	= & \frac{d}{d \tau} \Bigg[ \int\Big( ((\eta_0(1,x)-\bar{y})p_0(x) + (\underline{y}-\eta_0(0,x))(1-p_0(x)))\theta_{10}(x)  \\
	&-((\eta_0(1,x)-\underline{y})p_0(x) + (\bar{y}-\eta_0(0, x))(1-p_0(x))) \theta_{01}(x))- \beta \Big)dF_{\tau}(x)   \Bigg]\\
	&+ \frac{d}{d \tau}\Bigg[\int\Big( ((\eta_0(1,x)-\bar{y})p_\tau(x) + (\underline{y}-\eta_0(0,x))(1-p_\tau(x)))\theta_{10}(x)  \\
	&-((\eta_0(1,x)-\underline{y})p_\tau(x) + (\bar{y}-\eta_0(0, x))(1-p_\tau(x))) \theta_{01}(x))- \beta \Big)dF_{0}(x) \Bigg]\\
	&+ \frac{d}{d \tau} \Bigg[\int\Big( ((\eta_\tau(1,x)-\bar{y})p_0(x) + (\underline{y}-\eta_\tau(0,x))(1-p_0(x)))\theta_{10}(x)  \\
	&-((\eta_\tau(1,x)-\underline{y})p_0(x) + (\bar{y}-\eta_\tau(0, x))(1-p_0(x))) \theta_{01}(x))- \beta \Big)dF_{0}(x)\Bigg].
	\end{align*}
	First, we have
	\begin{align*}
	\frac{d }{d\tau} E_\tau[m(w_i,\beta,\gamma(F_0))] =  & \int \Big((\eta_0(1,x)-\bar{y})p_0(x) + (\underline{y}-\eta_0(0,x))(1-p_0(x)))\theta_{10}(x)  \\
	&-((\eta_0(1,x)-\underline{y})p_0(x) + (\bar{y}-\eta_0(0, x))(1-p_0(x))) \theta_{01}(x)\Big)dG(x)-\beta.\\
	\end{align*}
	Next, we want to find $\frac{d }{d\tau} E_0[m(w_i,\beta,\gamma(F_\tau))]$.
	In order to do that, first note that we have
	\begin{align*}
	&\frac{d}{d\tau}\int\theta(x) \eta_\tau(d,x)f_0(d|x)f_0(x)dx  \\
	&=  \int \theta(x) \frac{d}{d\tau}  \Big[\eta_\tau(d,x)\Big]f_0(d|x)f_0(x)dx \\
	& = \int \theta(x) E_0[\{Y- \eta_0(d,x)\}\varphi(W)|D=d, X=x]f_0(d|x)f_0(x)dx \\
	& = \int  \theta(x)  [\int \{y - \eta_0(d,x)\}\frac{g(y,d,x)}{f_0(y,d,x)}f_0(y|d,x)dy] f_0(d|x)f_0(x)dx \\
	& = \int  \theta(x)  [\int \{y - \eta_0(d,x)\}\frac{g(y,d,x)}{f_0(y,d,x)}\frac{f_0(y,d,x)}{f_0(d,x)}dy] \frac{f_0(d,x)}{f_0(x)}f_0(x)dx \\
	& = \int \theta(x) [\int \{y - \eta_0(d,x)\}g(y,d,x)dy]dx \\
	& = \int \theta(x) \{y - \eta_0(d,x)\}g(y,d,x)dydx.\\
	\end{align*}
	The second equality follows from equation (\ref{equation:derivative_tauydx}). The third equality follows from choosing $\varphi(w)$ to be a ratio of a sharply peaked pdf to the true density:
	\begin{equation}
	\varphi(\tilde w) = \frac{g(\tilde w)1(f_0(\tilde w)\geq  1/j)}{f_0(\tilde w)},
	\end{equation}
	where as in \cite{ichimura2017influence}, $g(w)$ is specified as follows. Letting $K(u)$ be a pdf that is symmertic around zero, has bounded support, and is continuously differentiable of all orders with bounded derivatives, we let 
	\begin{equation}
	g(\tilde w) = \prod_{l=1}^{r} \kappa_l^j (\tilde w_l), \kappa_l^j (\tilde w_l)=  \frac{jK((w_l-\tilde w_l)j)}{j \int K((w_l-\tilde w_l)j)d\mu_l(\tilde w_l)}.
	\end{equation}
	Hence, we obtain 
	\begin{align*}
	&\frac{d}{d \tau} \Bigg[\int\Big( ((\eta_\tau(1,x)-\bar{y})p_0(x) + (\underline{y}-\eta_\tau(0,x))(1-p_0(x)))\theta_{10}(x)  \\
	&-((\eta_\tau(1,x)-\underline{y})p_0(x) + (\bar{y}-\eta_\tau(0, x))(1-p_0(x))) \theta_{01}(x))- \beta \Big)dF_{0}(x)\Bigg] \\
	& = \int  (\theta_{10}(x)-\theta_{01}(x)) \{y - \eta_0(1,x)\}g(y,1,x)dydx  \\
	& - \int (\theta_{10}(x)-\theta_{01}(x)) \{y - \eta_0(0,x)\}g(y,0,x)dydx.
	\end{align*}
	With the similar argument, but using equation (\ref{equation:derivative_taudx}), we also have
	\begin{align*}
	&\frac{d}{d\tau}\int \theta(x)p_\tau(x)f_0(x)dx  \\
	& =  \int \theta(x) \frac{d}{d\tau}  \Big[p_\tau(x)\Big]f_0(x)dx \\
	& = \int \theta(x) E_0[\{1\{D=1\}-p_0(x)\}\varphi(W)|X=x] f_0(x)dx \\
	& = \int \theta(x)\Big[  \int \{ 1\{d=1\} - p_0(x)\}\frac{g(y,d,x)}{f_0(y,d,x)} f_0(y,d|x) dy dd \Big] f_0(x)dx \\
	& = \int \theta(x)\Big[ \int \{ 1\{d=1\} - p_0(x)\}\frac{g(y,d,x)}{f_0(y,d,x)} \frac{f_0(y, d,x)}{f_0(x)}dydd\Big] f_0(x)dx \\
	& = \int  \theta(x)\{ 1\{d=1\} - p_0(x)\}g(y,d,x)dydddx.
	\end{align*}
	Hence, 
	\begin{align*}
	&\frac{d}{d \tau}\Bigg[\int\Big( ((\eta_0(1,x)-\bar{y})p_\tau(x) + (\underline{y}-\eta_0(0,x))(1-p_\tau(x)))\theta_{10}(x)  \\
	&-((\eta_0(1,x)-\underline{y})p_\tau(x) + (\bar{y}-\eta_0(0, x))(1-p_\tau(x))) \theta_{01}(x))- \beta \Big)dF_{0}(x) \Bigg] \\
	& = \int  (\theta_{10}(x)-\theta_{01}(x))(\eta_0(1,x)+\eta_0(0,x) - (\underline y + \bar y)) \{ 1\{d=1\} - p_0(x)\}g(y,d,x)dydddx
	\end{align*}
	Therefore, as $j\to\infty$, since $\eta(1,x), \eta(0,x),$ and $p(x)$ are continuous at $x$, we obtain 
	\begin{align*}
	\phi(w, \beta, \gamma)  & = \Big((\eta_0(1,x)-\bar{y})p_0(x) + (\underline{y}-\eta_0(0,x))(1-p_0(x)))\theta_{10}(x)  \\
	&-((\eta_0(1,x)-\underline{y})p_0(x) + (\bar{y}-\eta_0(0, x))(1-p_0(x))) \theta_{01}(x)\Big)-\beta\\
	&+ (\theta_{10}(x)-\theta_{01}(x))(\eta_0(1,x)+\eta_0(0,x) - (\underline y + \bar y)) \{ 1\{d=1\} - p_0(x)\} \\
	& + (\theta_{10}(x)-\theta_{01}(x)) \{y - \eta_0(1,x)\}^d\{-(y - \eta_0(0,x))\}^{1-d}. \qed
	\end{align*} 

\subsection*{Proof of Theorem \ref{theorem:asymptoticvariance}}
Let 
\begin{equation}
\hat \psi(\beta_l) = \frac{1}{n} \sum_{k=1}^{L}\sum_{i \in \mathcal I_k} \psi(w_i, \beta_l, \hat{\gamma}_{k}).
\end{equation}
First we show that
\begin{equation} \label{equation:normalityproof}
\sqrt n \hat \psi (\beta_0) = \frac{1}{\sqrt n} \sum_{i=1}^{n} \psi(w_i, \beta_0, \gamma_0) + o_p(1)
\end{equation}
holds. Under Assumption \ref{assumption:ceinr_ass1} $(i)$, $(ii)$, and $(iii)$, the result follows. Following CEINR, we provide a sketch of the argument. Let 
\begin{equation}
\hat{\Delta}_{ik} \equiv \psi(w_i, \beta_0, \hat{\gamma}_{k})-\bar{\psi}(\hat{\gamma}_{k})-\psi(w_i, \beta_0, \gamma_0),
\end{equation}
and
\begin{equation}
\bar\Delta_{k} \equiv \frac{1}{n} \sum_{i \in \mathcal I_k}\bar\Delta_{ik}.
\end{equation}
Let $n_k$ be the number of observations with $i\in \mathcal I_k$ and $W_{k}$ denote a vector of all observations $w_i$ for $i\notin\mathcal I_k$. Note that for any $i,j \in \mathcal I_k$, $i\neq j$, we have $E[\hat\Delta_{ik}\hat\Delta_{jk}  | W_{k}]=E[\hat\Delta_{ik} | W_{k}]E[\hat\Delta_{jk} | W_{k}]=0$ since by construction 
$E[\hat\Delta_{ik} | W_{k}] = 0$.  By Assumption \ref{assumption:ceinr_ass1} $(i)$,
\begin{equation}
E[\bar\Delta_{k}^2| W_{k}] = \frac{1}{n^2} \sum _{i \in \mathcal I _k} E[\hat\Delta_{ik}^2| W_{k}] \leq \frac{n_k}{n^2} \int \{\psi(w,\beta_0, \hat\gamma_{k})-\psi(w, \beta_0, \gamma_0)\}^2F_0(dw) = o_p(n_k/n^2).
\end{equation}
This implies that, for each $k$, we have $\bar\Delta_{k}=o_p(\sqrt{n_k}/n)$. Then it follows that 
\begin{equation}
\sqrt n\Big[ \hat \psi (\beta_0) - \frac{1}{ n} \sum_{i=1}^{n} \psi(w_i, \beta_0, \gamma_0) - \frac{1}{n} \sum_{k=1}^{L} n_k \bar \psi(\hat\gamma_{k})\Big] = \sqrt n \sum_{k=1}^{L}\bar{\Delta}_k = o_p(\sqrt{n_k/n})\overset{p}{\longrightarrow} 0.
\end{equation}
By Assumption \ref{assumption:ceinr_ass1} $(ii)$ and $(iii)$, we have
\begin{equation}
\sqrt n |\bar \psi (\hat\gamma_{k}) | \leq \sqrt n  C \|\hat\gamma_{k}-\gamma_0\|^2 \overset{p}{\longrightarrow} 0.
\end{equation}
Then (\ref{equation:normalityproof}) follows by the triangle inequality. Since (\ref{equation:normalityproof}) holds and $\{w_i\}_{i=1}^{n}$ are i.i.d., by central limit theorem
\begin{equation}
\sqrt n \hat \psi (\beta_0)\overset{d}{\longrightarrow} N(0, \Omega),
\end{equation}
where $\Omega=E[\psi(w_i,\beta_0, \gamma_0)^2]$. The rest of the proof is standard as in \cite{newey1994large} and we provide a sketch of the argument.
Let $M = E[\frac{\partial \psi(w,\beta, \gamma_0)}{\partial \beta}|_{\beta=\beta_0}]$ and $\hat M = \frac{\partial \hat \psi(\hat \beta)}{\partial \beta}$. The first order condition is 
\begin{equation}
0=\hat M \hat \psi(\hat \beta).
\end{equation}
We expand $\hat \psi(\hat \beta)$ around $\beta_0$ to obtain 
\begin{equation}
\hat \psi(\hat \beta) = \hat \psi(\beta_0) + \bar M (\hat\beta-\beta_0),
\end{equation}
where $\bar M = \frac{\partial \hat \psi(\beta_u)}{\partial \beta}$ and $\bar{\beta}$ is the mean value.
Substituting this back into the first order condition, we get 
\begin{equation}
0= \hat M \hat \psi(\beta_0) + \hat M \bar M (\hat\beta-\beta_0).
\end{equation}
Solving this for $\hat{\beta}-\beta_0$ and multiplying by $\sqrt n$, we obtain
\begin{equation}
\sqrt n (\hat\beta-\beta_0) = -(\hat M\bar M)^{-1} \hat M \sqrt n \hat \psi(\beta_0).
\end{equation}
We also have $\hat M \overset{p}{\longrightarrow} M$ and $\bar M \overset{p}{\longrightarrow} M$ and by the continuous mapping theorem, 
\begin{equation}
-(\hat M\bar M)^{-1} \hat M \overset{p}{\longrightarrow} -M^{-1}.
\end{equation}
Then, by the Slutzky theorem, 
\begin{equation}
\sqrt n (\hat\beta-\beta_0) \overset{d}{\longrightarrow} -M^{-1}N(0,\Omega) = N(0,M^{-2}\Omega).
\end{equation}
In our case, $M=E[\frac{\partial \psi(w,\beta, \gamma_0)}{\partial \beta}|_{\beta=\beta_0}] = 1$ and so the asymptotic variance is $\Omega$.
Finally, CEINR showed that Assumption  \ref{assumption:ceinr_ass1} $(iii)$ insures that $\hat V \overset{p}{\longrightarrow} V$. \placeqed

\subsection*{Proof of Lemma \ref{lemma:influencefunction-iv-wc}}
The proof is similar to that of Lemma \ref{lemma:influencefunction}. Since we have an additional variable $Z\in\{0,1\}$, we make slight adjustments. 
	By the chain rule, we have
	\begin{align*}
	& \frac{d }{d\tau} E_\tau[m(w_i,\beta,\gamma(F_\tau))]
	= \frac{d }{d\tau} E_\tau[m(w_i,\beta,\gamma(F_0))] + \frac{d }{d\tau} E_0[m(w_i,\beta,\gamma(F_\tau))] \\
	& =  \frac{d}{d \tau} \Bigg[ \int\Big( \big(\eta_0(1, x, 1)\cdot p_0(x, 1) + \underline{y}\cdot(1-p_0(x, 1))- \bar{y}\cdot p_0(x,0) -\eta_0(0, x,0)\cdot(1-p_0(x,0))\big) \cdot \theta_{10}(x) \\
	& - \big( \eta_0(1, x, 1)\cdot p_0(x, 1) + \bar{y}\cdot(1-p_0(x,1))
	- \underline{y}\cdot p_0(x,0) - \eta_0(0, x,0)\cdot(1-p_0(x,0))\big) \cdot \theta_{01}(x) - \beta \Big)dF_{\tau}(x)   \Bigg]\\
	&+ \frac{d}{d \tau}\Bigg[\int\Big( \big(\eta_0(1, x, 1)\cdot p_\tau(x, 1) + \underline{y}\cdot(1-p_\tau(x, 1))- \bar{y}\cdot p_\tau(x,0) -\eta_0(0, x,0)\cdot(1-p_\tau(x,0))\big) \cdot \theta_{10}(x) \\
	& - \big( \eta_0(1, x, 1)\cdot p_\tau(x, 1) + \bar{y}\cdot(1-p_\tau(x,1))
	- \underline{y}\cdot p_\tau(x,0) - \eta_0(0, x,0)\cdot(1-p_\tau(x,0))\big) \cdot \theta_{01}(x) - \beta \Big)dF_{0}(x) \Bigg]\\
	&+ \frac{d}{d \tau} \Bigg[\int\Big( \big(\eta_\tau(1, x, 1)\cdot p_0(x, 1) + \underline{y}\cdot(1-p_0(x, 1))- \bar{y}\cdot p_0(x,0) -\eta_\tau(0, x,0)\cdot(1-p_0(x,0))\big) \cdot \theta_{10}(x) \\
	& - \big( \eta_\tau(1, x, 1)\cdot p_0(x, 1) + \bar{y}\cdot(1-p_0(x,1))
	- \underline{y}\cdot p_0(x,0) - \eta_\tau(0, x,0)\cdot(1-p_0(x,0))\big) \cdot \theta_{01}(x) - \beta \Big)dF_{0}(x)\Bigg].
	\end{align*}
	By the arguments in the proof of Lemma \ref{lemma:influencefunction}, we have 
	\begin{align}
	\frac{d}{d\tau}\int\theta(x) \eta_\tau(d,x,z)f_0(x)dx  
	= \int \theta(x) \{y - \eta_0(d,x,z)\}g(y,d,x)dydx.
	\end{align}
	and 
	\begin{align}
	\frac{d}{d\tau}\int \theta(x)p_\tau(x, z)f_0(x)dx 
	=  \int  \theta(x)\{ 1\{d=1\} - p_0(x,z)\}g(y,d,x)dydddx.
	\end{align}
	Hence, we obtain 
	\begin{align*}
	&\frac{d}{d \tau}\Bigg[\int\Big( \big(\eta_0(1, x, 1)\cdot p_\tau(x, 1) + \underline{y}\cdot(1-p_\tau(x, 1))- \bar{y}\cdot p_\tau(x,0) -\eta_0(0, x,0)\cdot(1-p_\tau(x,0))\big) \cdot \theta_{10}(x) \\
	& - \big( \eta_0(1, x, 1)\cdot p_\tau(x, 1) + \bar{y}\cdot(1-p_\tau(x,1))
	- \underline{y}\cdot p_\tau(x,0) - \eta_0(0, x,0)\cdot(1-p_\tau(x,0))\big) \cdot \theta_{01}(x) - \beta \Big)dF_{0}(x) \Bigg]\\
	& = \int ((\eta_0(1, x, 1) -\underline{y})\cdot \theta_{10}(x) - ( \eta_0(1, x, 1)- \bar y) \cdot \theta_{01}(x)) \cdot (d-p_0(x,1))  \\
	& \quad + ((\eta_0(0, x, 0) -\bar{y})\cdot \theta_{10}(x) - ( \eta_0(0, x, 0)- \underline y) \cdot \theta_{01}(x)) \cdot (d-p_0(x,0)) g(y,d,z,x)dydddzdx
	\end{align*}
	Also, 
	\begin{align*}
	& \frac{d}{d \tau} \Bigg[\int\Big( \big(\eta_\tau(1, x, 1)\cdot p_0(x, 1) + \underline{y}\cdot(1-p_0(x, 1))- \bar{y}\cdot p_0(x,0) -\eta_\tau(0, x,0)\cdot(1-p_0(x,0))\big) \cdot \theta_{10}(x) \\
	& - \big( \eta_\tau(1, x, 1)\cdot p_0(x, 1) + \bar{y}\cdot(1-p_0(x,1))
	- \underline{y}\cdot p_0(x,0) - \eta_\tau(0, x,0)\cdot(1-p_0(x,0))\big) \cdot \theta_{01}(x) - \beta \Big)dF_{0}(x)\Bigg] \\
	& = \int (\theta_{10}(x)-\theta_{01}(x))\cdot p_0(x,1) \cdot (y-\eta_0(1, x, 1)) g(y,d,z,x)dydddzdx\\
	& \quad - \int (\theta_{10}(x)-\theta_{01}(x))\cdot (1-p_0(x,0)) \cdot (y-\eta_0(0, x, 0)) g(y,d,z,x)dydddzdx
	\end{align*}\placeqed

\section{More General Case} \label{appendix:moregeneralcase}
I show how my result can be extended to a more general setting. Let $\delta: \mathcal{X} \rightarrow [0,1]$ so that the treatment rules can be randomized treatment rules. Also, let $w: \mathcal X \rightarrow \mathbb {R}_{+}$ be some weighting function so that the policymaker cares about the weighted average welfare $E[w(X)\cdot(E[Y_1|X]\cdot\delta(X) + E[Y_0|X]\cdot(1-\delta(X)))]$ rather than the mean welfare. Then, by letting $\psi(X)\equiv w(X)\cdot(\delta(X)-\delta^*(X))$, my object of interest becomes
\begin{align}
E[\Delta(X)\cdot w(X)\cdot(\delta(X)- \delta^*(X))] = E[\Delta(X)\cdot\psi(X)].
\end{align}
I derive the identification of this parameter in the following theorem. 
\begin{theorem}[More general case]
	Suppose $(\mathcal{Y}_1 \times \mathcal{Y}_0): \Omega \to \mathcal F$ is an integrable random set.  Let $\delta: \mathcal{X} \rightarrow [0,1]$ and $\delta^*: \mathcal{X} \rightarrow [0,1]$ be treatment rules and $w: \mathcal X \rightarrow \mathbb {R}_{+}$ be a weighting function. Also, let $\psi(X) \equiv w(X)\cdot(\delta(X)-\delta^*(X))$ and $p=(1,-1)'$. Then, $B_I(\delta, \delta^*)$ in \eqref{equation:id} is an interval $[\beta_l, \beta_u]$ where
	\begin{equation} \label{lower}
	\beta_l = E[\underline{\Delta}(X)\cdot|\psi(X)|\cdot \mathbbm{1}\{\psi(X)\geq 0\} - \bar{\Delta}(X)\cdot|\psi(X)|\cdot \mathbbm{1}\{\psi(X) < 0\}],
	\end{equation}
	and
	\begin{equation} \label{upper}
	\beta_u = E[\bar{\Delta}(X)\cdot|\psi(X)| \cdot \mathbbm{1}\{\psi(X)\geq 0\} -\underline{\Delta}(X)\cdot|\psi(X)|\cdot \mathbbm{1}\{\psi(X) < 0\}],
	\end{equation}
	where $\bar{\Delta}(X) \equiv E[s(v^*,\mathcal{Y}_1 \times \mathcal{Y}_0)|X]$ and $\underline{\Delta}(X)\equiv - E[s(-v^*,\mathcal{Y}_1 \times \mathcal{Y}_0)|X]$.
\end{theorem}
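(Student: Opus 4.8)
The plan is to follow the template of the proof of Theorem \ref{theorem:main}, replacing the $\{-1,0,1\}$-valued difference $\delta(X)-\delta^*(X)$ by the real-valued weight $\psi(X)$ and bookkeeping its sign separately. First I would apply Lemma \ref{lemma:expectedsupport} conditionally on $X$ with $v^*=(1,-1)'$: for any selection $(Y_1,Y_0)\in\mathcal S(\mathcal Y_1\times\mathcal Y_0)$,
\[
\underline{\Delta}(X)=-E[s(-v^*,\mathcal Y_1\times\mathcal Y_0)\mid X]\le \Delta(X)= {v^*}'E[(Y_1,Y_0)'\mid X]\le E[s(v^*,\mathcal Y_1\times\mathcal Y_0)\mid X]=\bar{\Delta}(X)\quad a.s.
\]
The only wrinkle relative to Theorem \ref{theorem:main} is that here the random set is only assumed integrable, not almost surely convex; but the conditional selection expectation $\mathbb E[\mathcal Y_1\times\mathcal Y_0\mid X]$ is convex, and $s(v,\cdot)$ is unchanged under passing to the (closed convex) hull, so Theorem \ref{theorem:support} still delivers $E[s(v,\mathcal Y_1\times\mathcal Y_0)\mid X]=s(v,\mathbb E[\mathcal Y_1\times\mathcal Y_0\mid X])$ as used in Lemma \ref{lemma:expectedsupport}, provided one invokes the nonatomic (or hull) version of that theorem.

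Next, multiply the chain of inequalities by $\psi(X)$, splitting on the two events $\{\psi(X)\ge0\}$ and $\{\psi(X)<0\}$. On the first, $\Delta(X)\psi(X)$ lies between $\underline{\Delta}(X)|\psi(X)|$ and $\bar{\Delta}(X)|\psi(X)|$; on the second, since $\psi(X)=-|\psi(X)|$, it lies between $-\bar{\Delta}(X)|\psi(X)|$ and $-\underline{\Delta}(X)|\psi(X)|$. Taking expectations (finite by integrability of the random set) shows that every $\beta=E[\Delta(X)\psi(X)]$ coming from a selection satisfies $\beta_l\le\beta\le\beta_u$, i.e.\ $B_I(\delta,\delta^*)\subseteq[\beta_l,\beta_u]$, with $\beta_l,\beta_u$ exactly as in the statement.

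The substantive step is sharpness, $[\beta_l,\beta_u]\subseteq B_I(\delta,\delta^*)$. For the two endpoints I would use a measurable selection argument: pointwise in $X$ pick a support point of $\mathcal Y_1\times\mathcal Y_0$ in direction $v^*$ on $\{\psi(X)\ge0\}$ and in direction $-v^*$ on $\{\psi(X)<0\}$ (and the opposite choices for $\beta_l$), checking this yields an integrable selection, which realizes $\beta_u$ (resp.\ $\beta_l$). To fill in the interior I would exhibit, for each $t\in[0,1]$, a selection with induced value $(1-t)\beta_l+t\beta_u$: if the random set is a.s.\ convex, take the pointwise convex combination of the two extremal selections; in the general integrable case, partition the probability space (using nonatomicity) into a sweeping-measure piece on which the $\beta_u$-selection is used and its complement on which the $\beta_l$-selection is used, so that $\beta$ varies continuously and monotonically from $\beta_l$ to $\beta_u$. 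Together with the previous paragraph this gives $B_I(\delta,\delta^*)=[\beta_l,\beta_u]$. I expect this sharpness/connectedness step to be the main obstacle — specifically, establishing measurability and integrability of the extremal selections and handling the non-convex random set when interpolating; the sign bookkeeping $\psi(X)=|\psi(X)|\big(\mathbbm{1}\{\psi(X)\ge0\}-\mathbbm{1}\{\psi(X)<0\}\big)$ is routine once the Theorem \ref{theorem:main} template is in place.
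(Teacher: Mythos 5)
Your proposal matches the paper's own proof: it applies Lemma \ref{lemma:expectedsupport} at $v^*=(1,-1)'$ to bound $\Delta(X)$ and then splits on the sign of $\psi(X)$ to read off $\beta_l$ and $\beta_u$, which is exactly the argument given in the appendix. Your final paragraph on sharpness (constructing extremal measurable selections and interpolating to fill the interval) and your remark about the dropped a.s.-convexity hypothesis go beyond what the paper's proof actually establishes --- the paper derives the two bounds and simply asserts that $B_I(\delta,\delta^*)$ equals the interval --- so these are welcome additions rather than a divergence in method.
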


\bigskip

\begin{proof}
	The proof is similar to that of Theorem \ref{theorem:main}. I still have (\ref{equation:delta}) to bound $\Delta(X)$. Since $\psi(X)\in \mathbb R$, I consider two cases: (i) $\psi(X)\geq 0$ and  (ii) $\psi(X)<0$. When (i) $\psi(X)\geq 0$, the upper bound on $\Delta(X)\cdot\psi(X)$ is $\bar\Delta(X)\cdot|\psi(X)|$. When (ii) $\psi(X)<0$, the upper bound on $\Delta(X)\cdot\psi(X)$ is $-\underline\Delta(X)\cdot|\psi(X)|$. Hence,  the lower bound should be 
	\begin{equation}
	\beta_l = E[\underline{\Delta}(X)\cdot|\psi(X)|\cdot \mathbbm{1}\{\psi(X)\geq 0\} - \bar{\Delta}(X)\cdot|\psi(X)|\cdot \mathbbm{1}\{\psi(X) < 0\}]. 
	\end{equation}
	Similarly, the upper bound on $E[\Delta(X)\cdot\psi(X)]$ should be 
	\begin{equation}
	\beta_u = E[\bar{\Delta}(X) \cdot|\psi(X)|\cdot \mathbbm{1}\{\psi(X)\geq 0\} -\underline{\Delta}(X)\cdot|\psi(X)|\cdot \mathbbm{1}\{\psi(X) < 0\}]. \qedhere
	\end{equation}
\end{proof}

\clearpage
\section{More Details on the Simulation Study}\label{appendix:simulation}
The population welfare gain that results from switching from $\delta^*(x)=1\{x \leq11\}$ to $\delta(x)=1\{x \leq12\}$ is
\begin{align}
\begin{split}
\beta & = E\big[E[Y_1-Y_0|X]\cdot(\delta(X)- \delta^*(X))\big] \\
& = P(X=12)\cdot E[Y_1-Y_0|X=12] \\
& = 0.43 \cdot \int_{0}^{1} \{ m_1(12,u) - m_0(12,u)\} du \\
& = 1236.
\end{split} 
\end{align}
The integration is done using integrate() function on R.
Given the structure in Section \ref{section:simulation}, we have
\begin{align}
E[Y|D=0,X,Z] &  =E[Y_0|U>p(X,Z), X,Z] = \frac{1}{1-p(X,Z)} \int_{p(X,Z)}^{1} m_0(X,u) du,\\
E[Y|D=1,X,Z] & =E[Y_1|U \leq p(X,Z), X,Z] = \frac{1}{p(X,Z)} \int_{0}^{p(X,Z)} m_1(X,u) du,\\
P(D=1|X) & = 2/3\cdot p(X,1) + 1/3\cdot p(X,0),\\
P(Z=1|D=0,X) & = \frac{P(D=0|Z=1,X)P(Z=1|X)}{P(D=0|X)} = \frac{2/3\cdot(1-p(X,1))}{1-2/3\cdot p(X,1)-1/3\cdot p(X,0)}, \\
P(Z=0|D=0,X) & = \frac{1/3\cdot (1-p(X,0))}{1-2/3\cdot p(X,1)-1/3\cdot p(X,0)}, \\
P(Z=1|D=1,X) & = \frac{P(D=1|Z=1,X)P(Z=1|X)}{P(D=1|X)} = \frac{2/3\cdot p(X,1)}{2/3\cdot p(X,1)+1/3\cdot p(X,0)}, \\
P(Z=0|D=1,X) & = \frac{1/3\cdot p(X,0)}{2/3\cdot p(X,1)+1/3\cdot p(X,0)},
\end{align}
\begin{align}
\begin{split}
E[Y|D=0,X] & = E[Y|D=0,X,Z=1]\cdot P(Z=1|D=0,X) \\
& \quad + E[Y|D=0,X,Z=0]\cdot P(Z=0|D=0,X), \\
\end{split}
\end{align}
\begin{align}
\begin{split}
E[Y|D=1,X] & = E[Y|D=1,X,Z=1]\cdot P(Z=1|D=1,X) \\
& \quad + E[Y|D=1,X,Z=0]\cdot P(Z=0|D=1,X).
\end{split}
\end{align}
Given these quantities, worst-case and IV-worst-case bounds can be calculated similarly. 

\clearpage
\section{Additional Tables}
% Table created by stargazer v.5.2.2 by Marek Hlavac, Harvard University. E-mail: hlavac at fas.harvard.edu
% Date and time: Mon, Aug 10, 2020 - 16:26:57
\begin{table}[!htbp] \centering 
	\caption{Empirical means and distributions when $X$ is years of education} 
	\label{table:emp_education} 
	\resizebox{\textwidth}{!}{
	\begin{threeparttable} 
	\begin{tabular}{@{\extracolsep{5pt}} cccccccccccccc} 
		\hline \\[-1.8ex] 
		$X$ & 7 & 8 & 9 & 10 & 11 & 12 & 13 & 14 & 15 & 16 & 17 & 18 & Total \\
		\hline 
		& \\
		sample size & 34 & 616 & 642 & 984 & 1167 & 3940 & 660 & 602 & 197 & 260 & 111 & 10 & 9223 \\ 
		&\\
		$P(X)$ & 0.004 & 0.067 & 0.07 & 0.107 & 0.127 & 0.427 & 0.072 & 0.065 & 0.021 & 0.028 & 0.012 & 0.001 & 1 \\ 
		&\\
		$E[Y|X]$ & 7998 & 12252 & 12509 & 14095 & 13492 & 16982 & 18210 & 20204 & 20837 & 20875 & 20032 & 11606 \\ 
		&\\
		$E[Y|D=1,X]$ & 7747 & 14916 & 14860 & 14706 & 15622 & 18391 & 18713 & 21093 & 21369 & 20678 & 22082 & 9207 \\ 
		&\\
		$E[Y|D=0,X]$ & 8102 & 10469 & 10908 & 13662 & 12014 & 15786 & 17745 & 19520 & 20322 & 21033 & 18411 & 14005 \\ 
		&\\
		$P(D=1|X)$ & 0.294 & 0.401 & 0.405 & 0.415 & 0.41 & 0.459 & 0.48 & 0.435 & 0.492 & 0.446 & 0.441 & 0.5 \\ 
		&\\
		$E[Y|D=1,Z=1,X]$ & 7747 & 14932 & 14860 & 14687 & 15639 & 18448 & 18833 & 21272 & 21369 & 20678 & 22082 & 9207 \\ 
		&\\
		$E[Y|D=1,Z=0, X]$ & 0 & 11011 & 14886 & 17263 & 14000 & 13530 & 6089 & 13449 & 0 & 0 & 0 & 0 \\ 
		&\\
		$E[Y|D=0,Z=1, X]$ & 6194 & 9778 & 10114 & 13944 & 11752 & 15345 & 16299 & 18776 & 21010 & 16767 & 16904 & 12636 \\ 
		&\\
		$E[Y|D=0,Z=0, X]$ & 8888 & 10980 & 11546 & 13451 & 12196 & 16084 & 18661 & 20028 & 19781 & 23669 & 19429 & 14918 \\ 
		&\\
		$P(D=1|Z=1, X)$ & 0.588 & 0.61 & 0.601 & 0.622 & 0.626 & 0.676 & 0.702 & 0.65 & 0.688 & 0.678 & 0.662 & 0.714 \\ 
		&\\
		$P(D=1|Z=0, X)$ & 0 & 0.005 & 0.019 & 0.009 & 0.012 & 0.016 & 0.014 & 0.029 & 0 & 0 & 0 & 0 \\ 
		&\\
		\hline \\[-1.8ex] 
	\end{tabular}
	\begin{tablenotes}
	\item This table reports the empirical means and distributions when $X$ is years of education. Y denotes the outcome variable which is 30-month earnings in US Dollars. $D$ denotes the program participation and equals 1 for individuals who participated in the program. $Z$ denotes the random assignment to treatment and equals 1 for individuals who got offered job training.
\end{tablenotes}
\end{threeparttable} }
%\end{table} 
\end{table}

\clearpage

\section{Additional Figures}

\begin{figure}[h]
	\centering
	\caption{Estimated conditional mean treatment responses\\ when X is years of education}
	
	\includegraphics[width=0.4\linewidth]{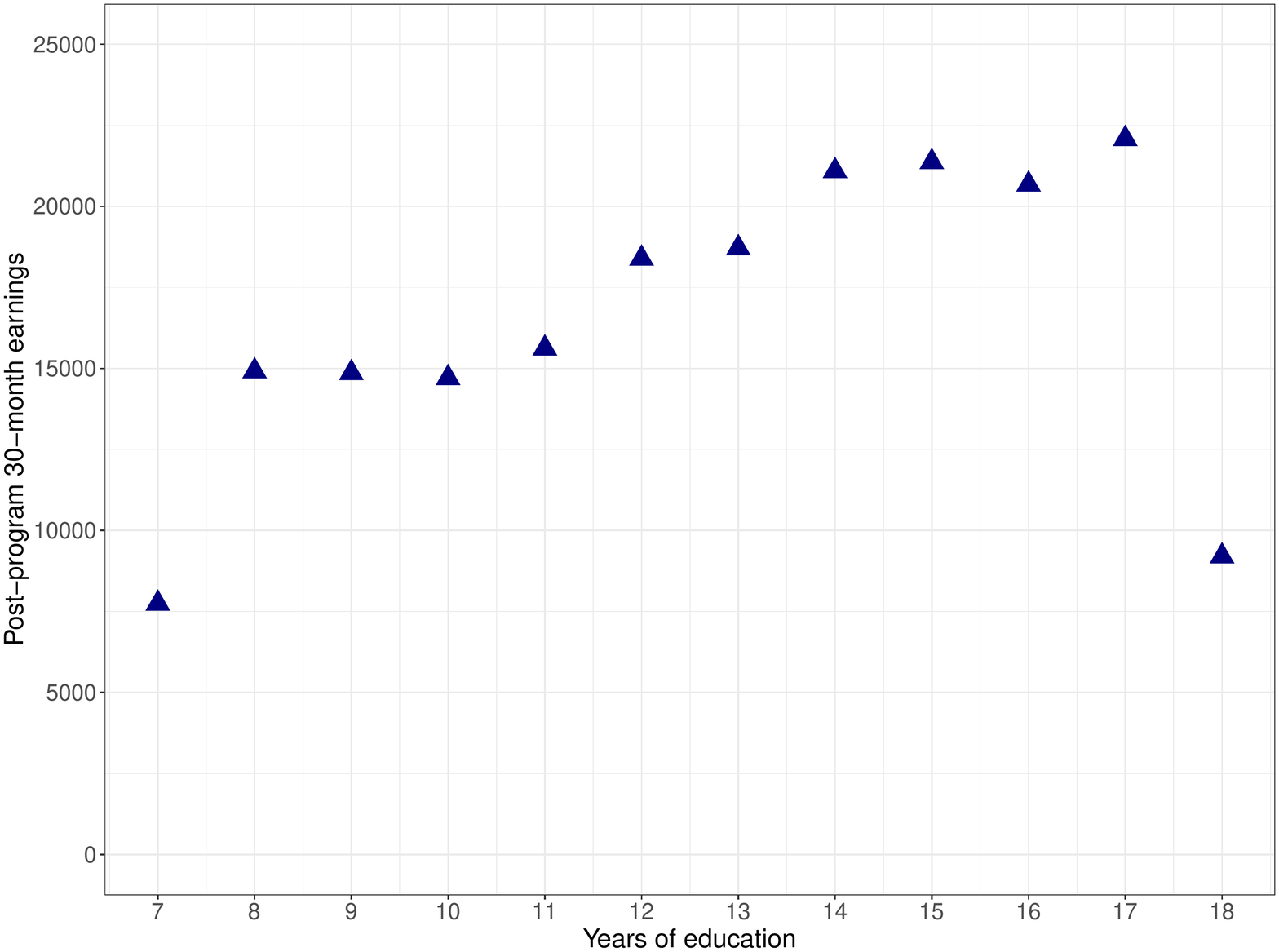}
	\includegraphics[width=0.4\linewidth]{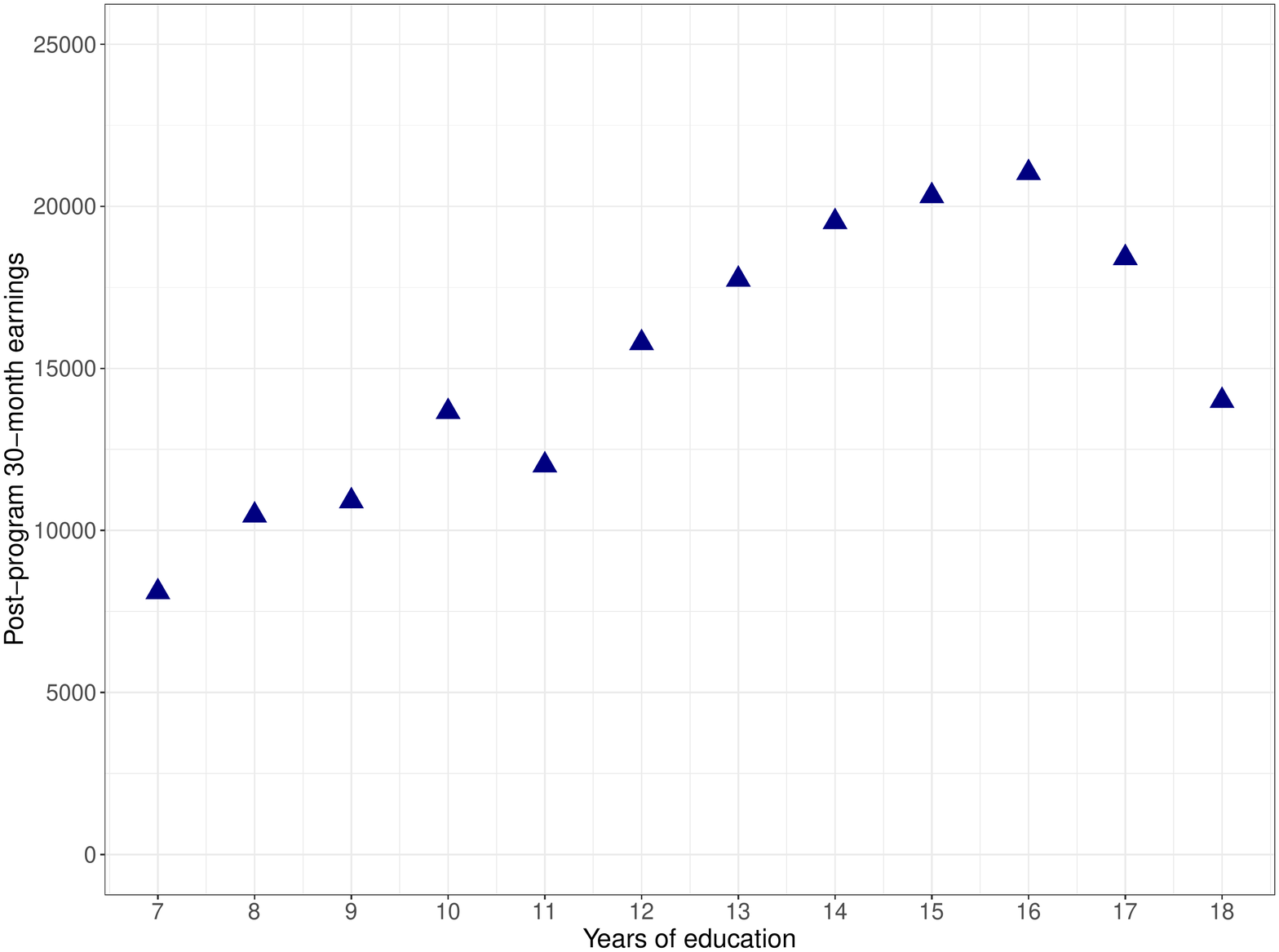}
	\label{fig:eta-edu}
	
	\centering 	\small Left figure is for $\hat\eta(1,X)$ and right figure is for $\hat\eta(0,X)$
		 
\end{figure}

\bigskip

\begin{figure}[h]	
	\centering
	\caption{Estimated propensity score \\ when X is years of education}
	\includegraphics[width=0.4\linewidth]{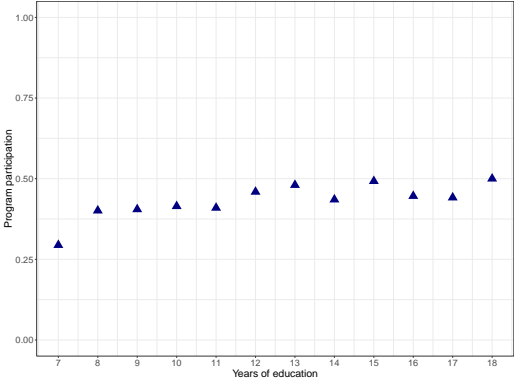}

	\label{fig:pedu}
\end{figure}

\begin{figure}[h]
	\centering
	\caption{Hypothetical policies considered in Example 2 \\ in Empirical Application}
	\includegraphics[width=0.7\linewidth]{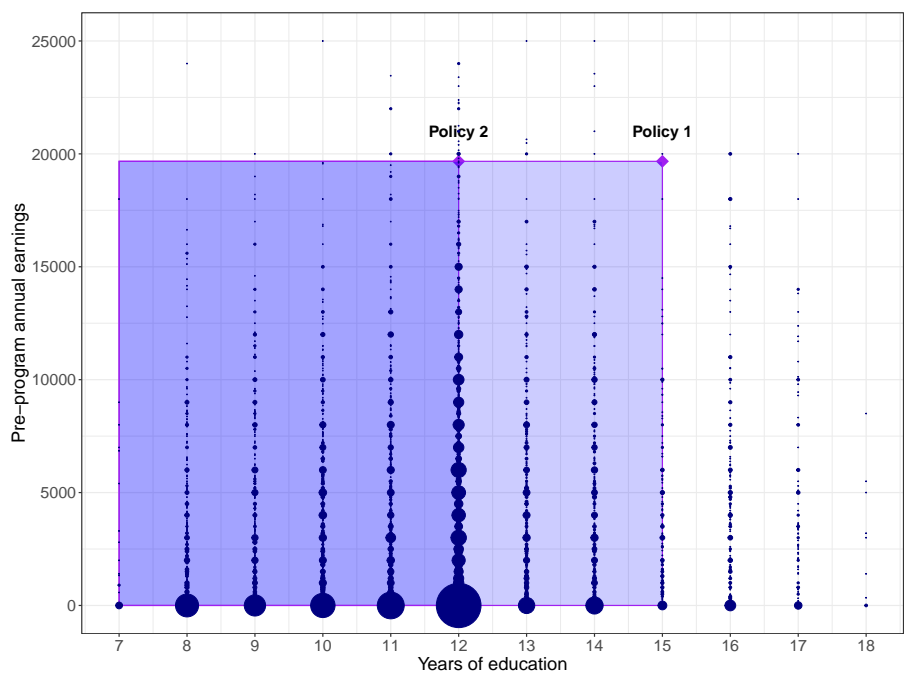}
	\label{fig:policies}
	\bigskip
	
	\centering \small
	Policy 1 is $\mathbbm1\{\text{education} \leq15, \text{pre-program earnings} \leq \$19,670\},$ \\ and Policy 2 is $\mathbbm1\{\text{education} \leq12, \text{pre-program earnings} \leq \$19,670\}.$
\end{figure}

\clearpage

\begin{figure}[h]
	\centering
	\caption{Estimated conditional mean treatment response \\ when X is years of education and pre-program annual earnings}
	\includegraphics[width=0.4\linewidth]{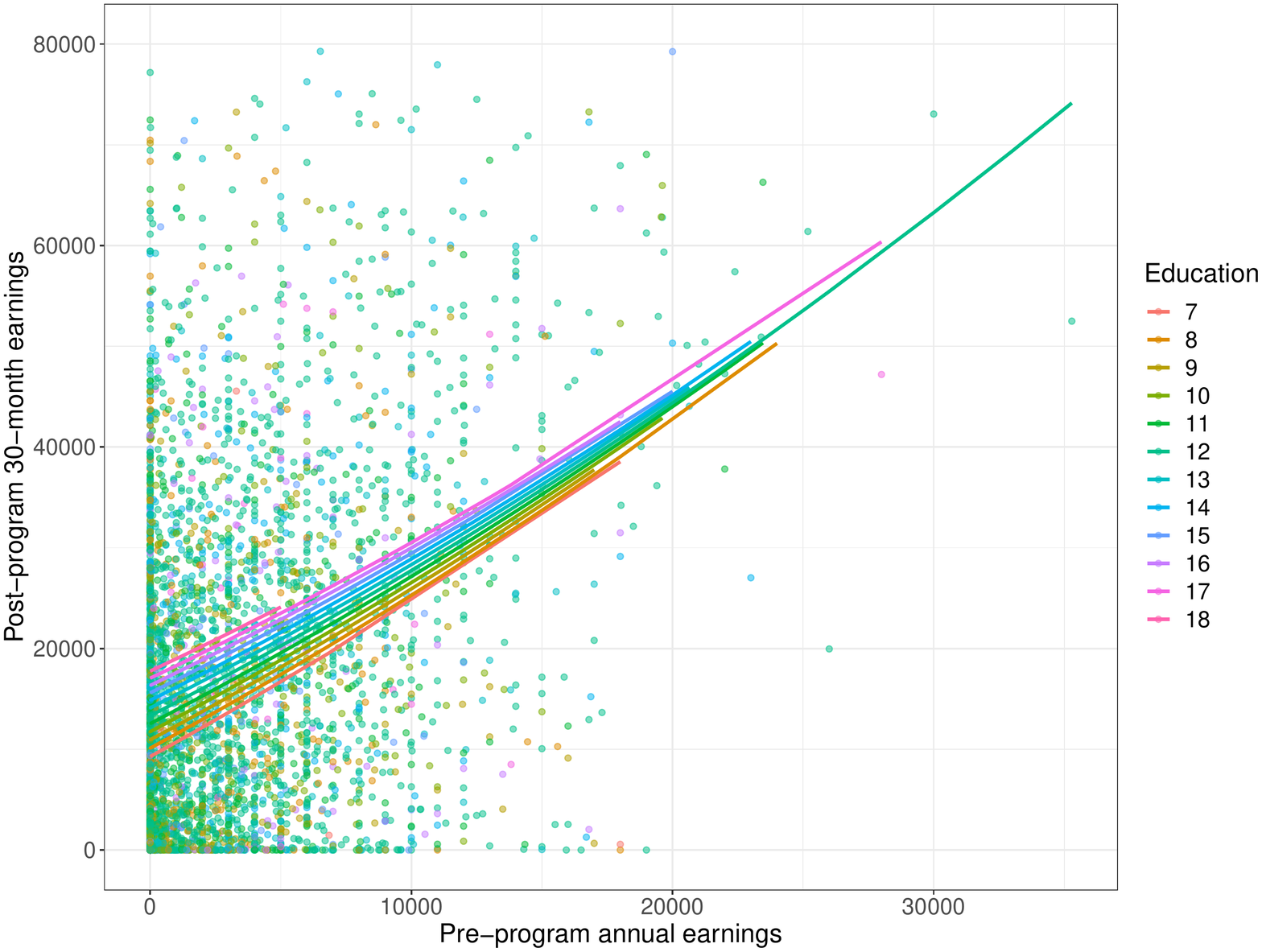}
		\includegraphics[width=0.4\linewidth]{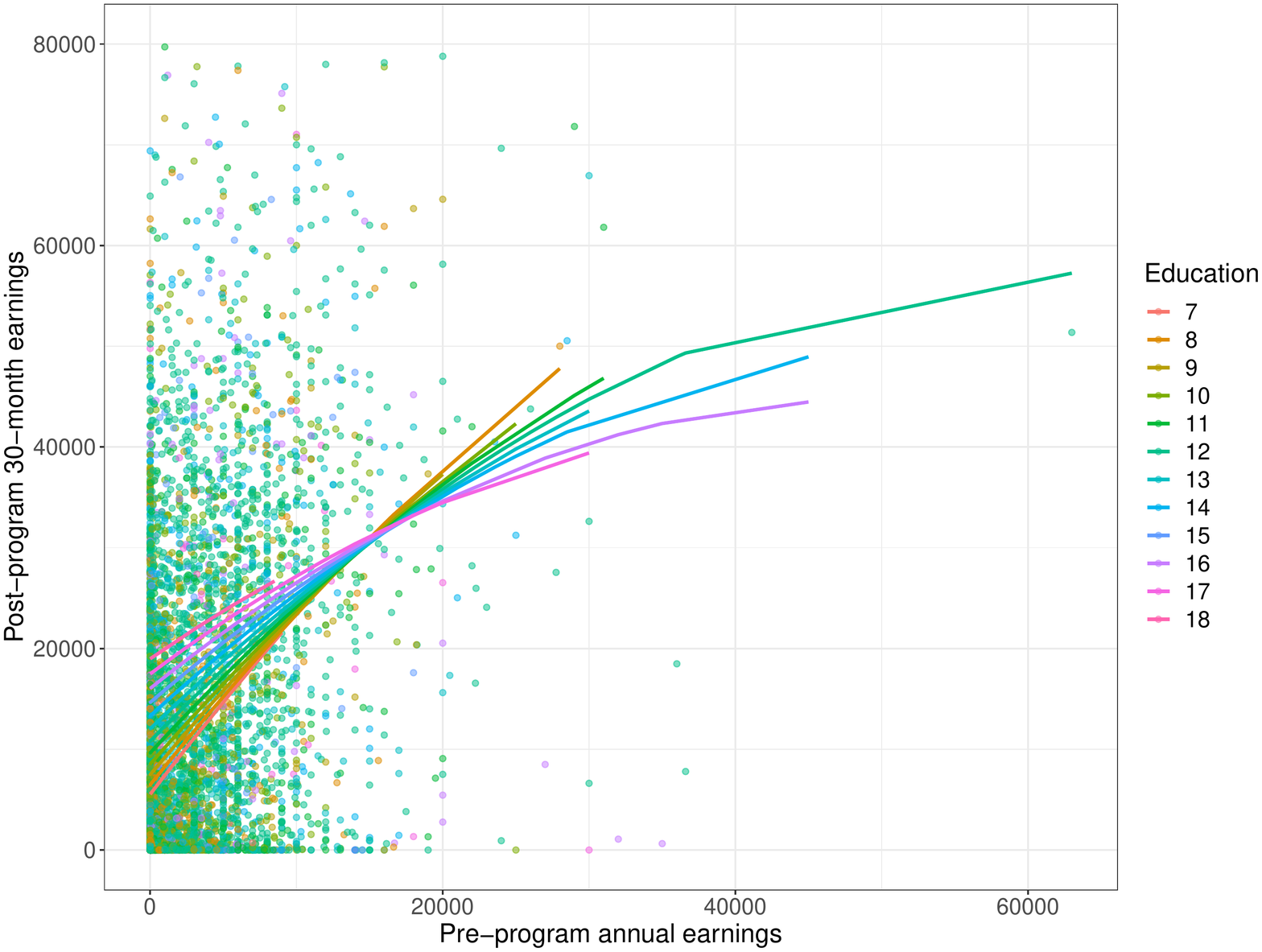}
	\label{fig:etaeduprevearn}
	
	\centering 	\small Left figure is for $\hat\eta(1,X)$ and right figure is for $\hat\eta(0,X)$
\end{figure}

\begin{figure}[h]
	\centering
	\caption{Estimated propensity score \\ when X is years of education and pre-program annual earnings}
	\includegraphics[width=0.4\linewidth]{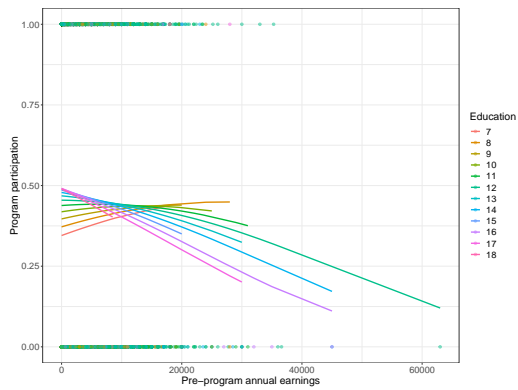}

	\label{fig:peduprevearn}
\end{figure}

\clearpage
\singlespacing
\bibliography{welfare}

\end{document}